
\documentclass[fconference,a4paper]{IEEEtran}


\usepackage{amssymb,amsfonts,amsmath,amstext,mathrsfs}
\usepackage{latexsym}
\usepackage{epsfig,color,graphics,epsf}
\usepackage{enumerate,cite}
\usepackage{bbm}

\usepackage{amsmath}
\usepackage{amssymb}
\usepackage{amsfonts}
\usepackage{amsthm}
\usepackage{graphicx}
\usepackage{float}
\usepackage{graphics}
\usepackage{graphicx}
\usepackage{euscript}

\DeclareMathOperator*{\argmax}{arg\,max}

\newtheorem{theorem}{Theorem}
\newtheorem{lemma}{Lemma}

\newtheorem{proposition}{Proposition}

\newtheorem{corollary}{Corollary}
\newtheorem{definition}{Definition}

\newcommand{\EE}{\mathbb{E}}

\newcommand{\dotleq}{%
\DOTSB\mathrel{\mathop{\kern0pt \leq}\limits^{\textstyle.}}}
\newcommand{\dotgeq}{%
\DOTSB\mathrel{\mathop{\kern0pt \geq}\limits^{\textstyle.}}}

\newcommand{\reals} {\mathbb{R}}

\newcommand{\beq} {\begin{equation}}
\newcommand{\eeq} {\end{equation}}
\newcommand{\beqa} {\begin{align}}
\newcommand{\eeqa} {\end{align}}

\newcommand{\indicator}{\mathbbm{1}}
\newcommand{\e}{\mathrm{e}}
\newcommand{\vol}{\mathrm{vol}}

\newcommand {\bx} {\boldsymbol{x}}
\newcommand {\by} {\boldsymbol{y}}

\newcommand {\bX} {\boldsymbol{X}}
\newcommand {\bY} {\boldsymbol{Y}}

\newcommand{\calC}{{\mathcal C}}
\newcommand{\calD}{{\mathcal D}}
\newcommand{\calE}{{\mathcal E}}

\newcommand{\calI}{{\mathcal I}}

\newcommand{\calP}{{\mathcal P}}

\newcommand{\calT}{{\mathcal T}}

\newcommand{\calV}{{\mathcal V}}

\newcommand{\calX}{{\mathcal X}}
\newcommand{\calY}{{\mathcal Y}}

\allowdisplaybreaks
\IEEEoverridecommandlockouts

\begin{document}


\title{Generalized Random Gilbert-Varshamov Codes}

\author{Anelia Somekh-Baruch, Jonathan Scarlett, and Albert Guill\'en i F\`abregas
\thanks{A.~Somekh-Baruch is with the Faculty of Engineering, Bar-Ilan University, Ramat Gan 52900, Israel (e-mail: somekha@biu.ac.il).

J.~Scarlett is with the Department of Computer Science and the Department of Mathematics, National University of Singapore, Singapore (e-mail: scarlett@comp.nus.edu.sg).

A.~Guill\'en i F\`abregas is with the Department of Information and Communication
Technologies, Universitat Pompeu Fabra, Barcelona 08018, Spain,
also with the Instituci\'o Catalana de Recerca i Estudis Avan\c{c}ats (ICREA),
Barcelona 08010, Spain, and also with the Department of Engineering, University
of Cambridge, Cambridge CB2 1PZ, U.K. (e-mail: guillen@ieee.org).

This work was supported in part by the Israel Science Foundation under grant 631/17, 
by the European Research Council under Grant 725411, by the Spanish Ministry of Economy and Competitiveness under Grant TEC2016-78434-C3-1-R, and by an NUS Early Career Research Award.
This paper was presented in part at the 2018 International Zurich Seminar, the 2018 Conference on Information Sciences and Systems, Princeton University and the 2018 IEEE International Symposium on Information Theory. 
}

}

\date{\today}

\maketitle
\begin{abstract}

We introduce a random coding technique for transmission over discrete memoryless channels, reminiscent of the basic construction attaining the Gilbert-Varshamov bound for codes in Hamming spaces. The code construction is based on drawing codewords recursively from a fixed type class, in such a way that a newly generated codeword must be at a certain minimum distance from all previously chosen codewords, according to some generic distance function. We derive an achievable error exponent for this construction, and prove its tightness with respect to the ensemble average. We show that the exponent recovers the Csisz\'{a}r and K{\"o}rner exponent as a special case, which is known to be at least as high as both the random-coding and expurgated exponents, and we establish the optimality of certain choices of the distance function. In addition, for additive distances and decoding metrics, we present an equivalent dual expression,  along with a generalization to infinite alphabets via cost-constrained random coding.

\end{abstract}

\allowdisplaybreaks[0]

\section{Introduction}

The problem of characterizing the error exponents of channel coding has been studied extensively since the early days of information theory.  The goal is to establish bounds on the rate of decay of the error probability for fixed rates below capacity.  While the random coding exponent and sphere packing exponent establish the exact error exponent at rates sufficiently close to capacity, the optimal exponent at low rates has generally remained open, except in the limit of zero rate.

For discrete memoryless channels (DMC), improvements over the random-coding exponent at low rates are provided by the expurgated exponent. The idea of the original derivation of this exponent is simple \cite{Gallager68}: After generating the codewords independently at random, remove a fraction of the worst codewords (i.e., those with the highest error probability) while keeping enough so that the loss in the rate is negligible.  Alternative derivations have since appeared based on the method of types and random selection \cite{CsiszarKorner81}, graph decomposition techniques \cite{CsiszarKorner81graph}, and type class enumeration \cite{ScarlettPengMerhavMartinezGuilleniFabregas_mismatch_2014_IT}. For other related works, see \cite{Merhav_List_Decoding_IT_2014}, \cite{Merhav_Generalized_2017}, \cite{SomekhBaruch_mismatchachievableIT2014} and references therein. 

In the literature, many of the most commonly-studied error exponents admit (at least) two equivalent forms:
\begin{itemize}
    \item A {\em primal} expression is written as a minimization over joint distributions subject to suitable constraints, and is typically derived using the method of types \cite{CsiszarKorner81}.  Such derivations often have the advantage of immediately proving tightness with respect to the random-coding ensemble under consideration.
    \item A {\em dual} expression is written as a maximization over auxiliary parameters, and is typically derived using Gallager-type techniques \cite{Gallager68} such as Markov's inequality and $\min\{1,\alpha\}\leq \alpha^\rho$ for $\rho\in[0,1]$.  Such derivations often have the advantage of extending to continuous-alphabet memoryless channels. In addition, dual expressions provide achievable exponents for arbitrary {\em fixed} choices of the auxiliary parameters. 
\end{itemize}
This naming convention arises from the fact that the equivalence of the expressions is proved using Lagrange duality.  In the setting of the present paper with a general additive decoding metric, such equivalences were given for achievable rates in \cite{MerhavKaplanLapidothShamai94}, for random coding error exponents in \cite{ScarlettMartinezGuilleniFabregas2012AllertonSU,ScarlettMartinezGuilleniFabregas_mismatch_2014_IT}, and for expurgated exponents in \cite{ScarlettPengMerhavMartinezGuilleniFabregas_mismatch_2014_IT}.

In this paper, we introduce a recursive random coding construction that achieves the exponent of Csisz\'{a}r and K{\"o}rner \cite{CsiszarKorner81graph}, thus achieving the maximum of the random-coding and expurgated exponents.  The code construction is based on drawing codewords recursively from a fixed type class, in such a way that a newly generated codeword must be at a certain minimum distance from all previously chosen codewords, according to some generic distance function. This construction is reminiscent to those in the binary Hamming space dating back to the 1950s \cite{gilbert1952comparison,varshamov1957estimate,siforov1956noise} (see also \cite{levenshtein1960class,brualdi1993greedy,conway1986lexicographic,trachtenberg2000error}), known to achieve the Gilbert-Varshamov bound.  We therefore adopt the name generalized {\em Random Gilbert-Varshamov (RGV) code} for our randomized construction with a general distance function and constant-composition codewords. 
A related work by 
Blahut \cite{BlahutComposition} studied properties of the Bhattacharyya and the equivocation distance functions and derived generalized bounds similar to the Gilbert-Varshamov and Elias bounds. These bounds are used to derive an upper bound on the reliability function. Connections with the expurgated exponent are also explored.
Another related work is that of Barg and Forney  \cite{BargForney2002}, who showed that for the binary symmetric channel (BSC), typical linear codes, whose minimum distance attains the Gilbert-Varshamov bound, achieve the expurgated exponent.

\subsection{Contributions}

The main contributions of this work are as follows:
\begin{itemize}
    \item As outlined above, we introduce the generalized RGV construction, and analyze its error exponent for a given DMC, decoding metric, and distance function. 
    Similarly to the Gilbert-Varshamov bound, our construction induces a tradeoff between the rate and the minimum distance of the code. 
  As well as establishing an achievable exponent, we derive an {\em ensemble tightness} result implying that one cannot do better with such a construction. 
  Proving this is non-trivial compared to previous ensemble tightness results (e.g., for random coding exponents \cite{SomekhBaruchMerhav2011,ScarlettMartinezGuilleniFabregas2012AllertonSU} and achievable rates \cite{MerhavKaplanLapidothShamai94}).  Among other things, the distribution of the drawn codeword depends on its index in the recursive construction, and on all of the previous codewords, so one cannot use a symmetry argument to focus on a single message.
    
    \item We show that when the distance function is optimized, the generalized RGV construction achieves the exponent of Csisz\'{a}r and K{\"o}rner \cite{CsiszarKorner81graph}, which is at least as high as both the random-coding and expurgated exponents.  While the analysis of \cite{CsiszarKorner81graph} establishes the existence of codes attaining the exponent using a decomposition lemma, our scheme provides a specific randomized construction that spreads the codewords according to a generic distance function, and whose ensemble average directly achieves the exponent.  
    
    \item In the case of an additive distance measure (e.g., Hamming or Bhattacharyya distance) and decoding metric (e.g., maximum-likelihood), we give an equivalent dual expression for our error exponent, as well as providing a direct derivation of the dual form using cost-constrained random coding \cite{ScarlettMartinezGuilleniFabregas_mismatch_2014_IT,ScarlettPengMerhavMartinezGuilleniFabregas_mismatch_2014_IT}.  
    This alternative derivation allows us to extend the achievability part to memoryless channels with infinite or continuous alphabets. 
    
    \item We prove that the distance function that measures closeness according to the joint empirical mutual information (equivalent to the equivocation distance \cite{BlahutComposition}) maximizes the exponent of our construction, at least among symmetric distance functions depending only on the joint type. This optimality is universal, in the sense that it holds for every channel and every type-dependent decoding metric. 
    In addition, we provide an alternative non-universal distance function yielding the same error exponent, and we show that an additive Chernoff-based distance measure (which reduces to the Bhattacharrya distance in the case of maximum-likelihood decoding) recovers both the random coding and expurgated exponents.
\end{itemize}

\subsection{Notation}\label{sc: definitions}
 
 The set of probability mass functions on a finite alphabet $\calX$ is denoted by $\mathcal{P}(\calX)$.  We use standard notations for entropy, mutual information, and so on (e.g., $I(X;Y)$, $H(X|Y)$), sometimes using a subscript to indicate the underlying distribution (e.g., $I_V(X;Y)$ for some joint distribution $V_{XY}$).  These are all taken to be in units of nats, and the function $\log$ has the natural base. We denote sequences (vectors) in boldfaced font, e.g., $\bx$. For $i<j$, we let $\bx_i^j$ denote $(\bx_i,\ldots,\bx_j)$, and similarly, $\bX_i^j=(\bX_i,\ldots,\bX_j)$. 
  
 We make frequent use of types \cite[Ch.~2]{CsiszarKorner81}.  The type (i.e., empirical distribution) of a sequence $\bx$ is denoted by $\hat{P}_{\bx}$, and similarly for joint types $\hat{P}_{\bx\by}$ and conditional types $\hat{P}_{\by|\bx}$.  The set of all types for a given sequence length $n$ is denoted by  $\mathcal{P}_n(\calX)$.  The type class $\calT(P)$ is the set of all sequences with type $P$, and the conditional type class $\calT(P_{\widetilde{X}|X})$ is the set of all $\widetilde{X}$-sequences inducing a given conditional type $P_{\widetilde{X}|X}$ for an arbitrary fixed $X$-sequence (whose type will be clear from the context).
 
 For two positive sequences $f_{n}$ and $g_{n}$, we write $f_{n}\doteq g_{n}$
 if $\lim_{n\to\infty}\frac{1}{n}\log\frac{f_{n}}{g_{n}}=0$, $f_{n}\,\dot{\le}\,g_{n}$
 if $\limsup_{n\to\infty}\frac{1}{n}\log\frac{f_{n}}{g_{n}}\le0$, and similarly for $\dot{\ge}$.

\subsection{Structure of the Paper}

In Section \ref{sc: Statement}, we formally introduce the channel coding setup and introduce additional notation.  In Section \ref{sc: Coding Scheme}, we describe the recursive random codebook construction and establish its main properties.  Section \ref{sc: Main Theorem} gives the main result and its proof, and Section \ref{sc:dual} gives the equivalent dual expression and its direct derivation. Section \ref{sc: Optimal Choices for d} studies the optimality of some specific distance functions.  
 \section{Problem Setup}\label{sc: Statement}

We consider the problem of reliable transmission over a DMC described by a conditional probability mass function $W(y|x)$, with input $x\in\calX$ and output $y\in\calY$ for finite alphabets $\calX$ and $\calY$. We define 
\begin{flalign}
W^n(\by|\bx) = \prod_{k=1}^n W(y_k|x_k)
\end{flalign}
for input/output sequences $\bx = (x_1,\dotsc,x_n)\in\calX^n,\by= (y_1,\dotsc,y_n)\in\calY^n$. We use the notation $\bX,\bY$ to denote the corresponding random variables. Infinite and continuous alphabets are addressed in Section \ref{sc:dual}.

An encoder maps a message $m\in \{1,\dotsc,M_n\}$ to a channel input sequence $\bx_m\in\calX$, 
where the number of messages is denoted by $M_n$.
The message, represented by the random variable $S$, is assumed to take values on $\{1,\dotsc,M_n\}$ equiprobably. This mapping induces an $(n,M_n)$-codebook $\calC_n=\{\bx_1,\dotsc,\bx_{M_n}\}$ with rate $R_n=\frac{1}{n}\log M_n$.

The decoder has access to the codebook and, upon observing the channel output $\by$, produces an estimate of the transmitted message $\hat m \in \{1,\dotsc,M_n\}$. 
We consider the family of maximum metric decoders for which the transmitted message is estimated as
\beq
\hat m = \argmax_{\bx_i\in\calC_n} q^{}(\bx_i,\by)
\eeq
where $q^{}(\bx,\by) : \calX^n\times\calY^n\to \reals$ is a generic decoding metric. Whenever two or more candidate codewords have the same decoding metric, an error will be assumed. Whenever $q^{}(\bx,\by)$ is an increasing function of the channel transition law $W^n(\by|\bx)$ we recover the maximum-likelihood (ML) decoder. Otherwise, the decoder is said to be mismatched \cite{MerhavKaplanLapidothShamai94,CsiszarNarayan95}. Throughout the paper, we assume that the decoding metric $q^{}(\bx,\by)$ only depends on the joint empirical distribution (or type) of $\bx,\by$, i.e., $\hat P_{\bx,\by}$. In this case, we write the decoder as
\beq
\hat m = \argmax_{\bx\in\calC_n} \,q(\hat{P}_{\bx,\by}),
\eeq
where we assume that the type-dependent metric $q:\mathcal{P}(\calX\times\calY)\to\reals$ is continuous (and therefore bounded) on the probability simplex.\footnote{Similarly to \cite{CsiszarKorner81graph}, our analysis easily extends to the ML decoding metric, for which $q(\bx,\by)$ may equal $-\infty$ when $W^n(\by|\bx)=0$.}  An important class of such metrics is the class of {\em additive metrics}, taking the form
\begin{equation}
    q(\hat{P}_{\bx,\by}) = \frac{1}{n} \sum_{i=1}^n q(x_i,y_i) = \EE_{\hat{P}_{\bx,\by}}[ q(X,Y) ], \label{eq:q_additive}
\end{equation}
where $q(x,y)$ is a {\em single-letter} metric (abusing notation slightly), and the average is with respect to the joint empirical distribution.  A notable example of a non-additive type-dependent metric is the empirical mutual information, $q(\hat{P}_{\bx,\by}) = I_{\hat{P}_{\bx,\by}}(X;Y)$.

Denoting the random variable corresponding to the decoded message by $\hat S$, we define the probability of error as
$
P_e = \Pr \bigl(\hat S \neq S\bigr).
$
A rate-exponent pair $(R,E)$ is said to be {\em achievable}
    for channel $W$ if, for all $\epsilon>0$, there exists a sequence of $(n,\e^{n(R-\epsilon)})$-codebooks such that 
    \begin{flalign}
        \liminf_{n\to\infty} \, -\,\frac{1}{n}\log \Pr\bigl(\hat{S}\neq S\bigr)\geq E-\epsilon. 
    \end{flalign}
    Equivalently, we say that $E$ is an achievable error exponent at rate $R$ if $(R,E)$ is an achievable rate-exponent pair.

\section{Random Codebook and Properties} \label{sc: Coding Scheme} 

In this section, we introduce our recursive random coding scheme, and state its main properties used for deriving the associated error exponent. 

Codes that attain the Gilbert-Varshamov bound on the Hamming space \cite{gilbert1952comparison,varshamov1957estimate} ensure that all codewords are at least at a certain target Hamming distance $\Delta$ from each other.  The generalized RGV construction is a randomized constant-composition counterpart of such codes for arbitrary DMCs and more general distance functions. 

\begin{definition} \label{def:distance}
    Let $\Omega$ be the set of bounded, continuous, symmetric, and type-dependent functions $d(\cdot,\cdot):\calX^n\times \calX^n \to \reals$, i.e., bounded functions that satisfy $d(\bx,\bx')=d(\bx',\bx)$ for all $\bx,\bx'\in \calX^n$, that depend on $(\bx,\bx')$ only through the joint empirical distribution $\hat{P}_{\bx\bx'}$, and that are continuous on the probability simplex.
\end{definition}
We use the notation $d(\bx,\bx')$ and $d(\hat{P}_{\bx\bx'})$ interchangeably for convenience, similarly to $q(\bx,\by)$ and $q(\hat{P}_{\bx\by})$.  We refer to $d\in\Omega$ as a {\em distance function}, though it need not be a distance in the topological sense (e.g., it may be negative).

Some examples of distance functions in $\Omega$ are as follows:
\begin{itemize}
    \item We say that the distance function is {\em additive} if it can be written as
    \begin{equation}
        d(\bx,\bx') = \frac{1}{n} \sum_{k=1}^n d(x_k,x'_k) \label{eq:additive_d}
    \end{equation}
    for some single-letter function $d(x,x')$ (abusing notation slightly).  Any such distance function is in $\Omega$, as long as $d(x,x')$ is symmetric.  Notable examples include the Hamming distance
    \begin{align}
        d_{\text H}(x,x') &= \indicator\{x'\neq x\},
    \end{align}
    and the Bhattacharyya distance
    \begin{align}
        d_{\text B}(x,x') &= -\log  \sum_{y\in\calY}\sqrt{W(y|x)W(y|x')}. \label{eq:Bhat}
    \end{align}
    Note that the latter choice depends on the channel, and to satisfy the boundedness assumption we require that any two inputs have a common output that is produced with positive probability.
    \item We will later consider a distance equal to the negative mutual information, $d(P_{X\widetilde{X}}) = -I_P(X;\widetilde{X})$, which will turn out to be universally optimal subject to the constraints of our construction. For constant-composition codes, it is equivalent to the {\em equivocation distance} $d(P_{X\widetilde{X}}) = H_P(\widetilde{X}|X)$, which was considered in a different but related context by Blahut \cite{BlahutComposition}. 
\end{itemize}

In the following, we describe how to construct a code $\calC_n$ with $M_n$ codewords of length $n$, such that any two distinct codewords $\bx,\bx'\in \calC_n$ satisfy $d(\bx,\bx') > \Delta$ for a given function $d(\cdot,\cdot)\in\Omega$ and threshold $\Delta\in\reals$. This guarantees that the minimum distance of the codebook exceeds $\Delta$.  The construction depends on an input distribution $P \in \calP(\calX)$, and throughout the paper, we let $P_n \in \calP_n(\calX)$ denote an arbitrary type with the same as support as $P$ satisfying $\max_{x\in\calX} |P_n(x) - P(x)| \le \frac{1}{n}$.

Along with $P\in \mathcal{P}(\calX)$, fixing $n,M_n$, a distance function $d(\cdot,\cdot)\in\Omega$, and constants $\delta>0, \Delta\in\mathbb{R}$, the construction is described by the following steps:
\begin{enumerate}
\item The first codeword, $\bx_1$, is drawn uniformly from $\calT(P_n)$;
\item The second codeword $\bx_2$ is drawn uniformly from 
\begin{align}
        \calT(P_n,\bx_1)&\triangleq\left\{\bar{\bx}\in \calT(P_n) \,:\, d(\bar{\bx},\bx_1)> \Delta\right\}\\
         &=\calT(P_n)\backslash \left\{\bar{\bx}\in \calT(P_n) \,:\, d(\bar{\bx},\bx_1)\leq \Delta\right\},
     \end{align}
i.e., the set of sequences with composition $P_n$ whose distance to $\bx_1$ exceeds $\Delta$;
\item Continuing recursively, the $i$-th codeword $\bx_i$ is drawn uniformly from
 \begin{align}
         &\calT(P_n,\bx_1^{i-1}) \nonumber \\
         &\triangleq\left\{\bar{\bx}\in \calT(P_n)  \,:\,  d(\bar{\bx},\bx_j)> \Delta, \, j=1\dotsc, i-1\right\}\\
         &=\calT(P_n,\bx_1^{i-2})
         \backslash \left\{\bar{\bx}\in \calT(P_n,\bx_1^{i-2}) \,:\, d(\bar{\bx},\bx_{i-1})\leq \Delta\right\}.
     \end{align}
\end{enumerate}
Throughout the paper, it will be useful to generalize the notation $\calT(P_n, \bx_1^{i-1})$ as follows. For any subset  $\calD\subseteq \calT(P_n)$, we define
\begin{flalign}\label{eq: cals calA dfn}
\calT(P_n,\calD)\triangleq \{\bx\in\calT(P_n):\; d(\bx, \bx')>\Delta,\, \forall \bx'\in\calD\}.
\end{flalign}

In Lemma \ref{lem:Ti_bounds} below, we will show that in order to ensure that the above procedure generates the desired number of codewords $M_n=\e^{nR_n}$ (i.e., the sets $\calT(P_n,\bx_1^{i-1})$ are non-empty for all $i=1,\dotsc,M_n$), it suffices to choose $\Delta$ and $\delta$ such that 
\beq
\e^{n(R_n+\delta)} \vol_{\bx}(\Delta) \leq |\calT(P_n)|
\label{eq:vol_condition}
\eeq
where
$
\vol_{\bx}(\Delta) = |\{\bar{\bx}\in \calT(P_n) \,:\, d(\bar{\bx},\bx)\leq \Delta\}|
$
is the ``volume'' of a ``ball'' of radius $\Delta$ according to the ``distance'' $d(\cdot,\cdot)$, centered at some $\bx\in \calT(P_n)$.  Since $d\in\Omega$ is symmetric and type-dependent, $\vol_{\bx}(\Delta)$ does not depend on the specific choice of $\bx\in \calT(P_n)$. 
It will be convenient to rewrite \eqref{eq:vol_condition} as
\begin{flalign}
    \sum_{\bar{\bx}\in \calT(P_n) \,:\, d(\bar{\bx}, \bx) \leq\Delta }\frac{1}{|\calT(P_n)|} \leq  \e^{-n(R_n+\delta)}.
    \label{eq: Delta dfn stricter condition}
\end{flalign}

\subsection{Codebook Properties}

Here we provide several lemmas characterizing the key properties of the generalized RGV construction.  
We begin with the fact that the construction is well-defined, in the sense that  the procedure described above always produces the desired number of codewords $M_n$, i.e., the set $\calT(P_n, \bx_1^{i-1})$ given the previous codewords is always non-empty.

\begin{lemma} \label{lem:Ti_bounds}
    The generalized RGV codebook construction with condition \eqref{eq: Delta dfn stricter condition} is such that for all $i\in\{1,\dotsc,M_n\}$, all $\bx_1^{i-1}$ occurring with non-zero probability, and any $\delta>0$, we have
        \begin{flalign}\label{eq: T card bounds}
   \hspace{-3mm}         (1-\e^{-n\delta})|\calT(P_n)|\leq |\calT(P_n, \bx_1^{i-1})|\leq |\calT(P_n)|. 
        \end{flalign}
\end{lemma}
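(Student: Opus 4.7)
The upper bound in \eqref{eq: T card bounds} is immediate since $\calT(P_n,\bx_1^{i-1})\subseteq \calT(P_n)$ by definition, so only the lower bound needs work. My plan is a one-shot union-bound argument that exploits the fact that the $d$-ball volume does not depend on its center.

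First, I would rewrite the complementary count: by the definition of $\calT(P_n,\bx_1^{i-1})$,
\begin{equation}
\calT(P_n)\setminus \calT(P_n,\bx_1^{i-1}) \;=\; \bigcup_{j=1}^{i-1}\bigl\{\bar{\bx}\in\calT(P_n)\,:\, d(\bar{\bx},\bx_j)\leq \Delta\bigr\}.
\end{equation}
Taking a union bound on the right gives
\begin{equation}
\bigl|\calT(P_n)\setminus \calT(P_n,\bx_1^{i-1})\bigr| \;\leq\; \sum_{j=1}^{i-1} \vol_{\bx_j}(\Delta).
\end{equation}

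Next, I would invoke the symmetry/type-dependence assumption on $d\in\Omega$: the ball volume $\vol_{\bx}(\Delta)$ is the same for every $\bx\in \calT(P_n)$, so every term in the sum equals the common value $\vol_{\bx}(\Delta)$. Bounding $i-1\leq M_n = \e^{nR_n}$ and applying the Gilbert--Varshamov-type hypothesis \eqref{eq: Delta dfn stricter condition}, which reads $\vol_{\bx}(\Delta)\leq \e^{-n(R_n+\delta)}|\calT(P_n)|$, yields
\begin{equation}
\bigl|\calT(P_n)\setminus \calT(P_n,\bx_1^{i-1})\bigr|\;\leq\; \e^{nR_n}\cdot \e^{-n(R_n+\delta)}|\calT(P_n)| \;=\; \e^{-n\delta}|\calT(P_n)|.
\end{equation}
Subtracting from $|\calT(P_n)|$ gives the desired lower bound $(1-\e^{-n\delta})|\calT(P_n)|$.

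The argument is essentially a single inequality chain, and I do not expect any genuine obstacle. The only subtle point worth flagging explicitly is that the bound must hold uniformly over all realizations $\bx_1^{i-1}$ of positive probability; this is free because the union bound does not depend on the specific codewords, only on their number, and because the common ball volume is independent of the ball's center. As a by-product, the lower bound is strictly positive for all $i\leq M_n$, so the recursive draw in step 3 of the construction is always from a non-empty set, confirming that the generalized RGV construction is well-defined under condition \eqref{eq:vol_condition}.
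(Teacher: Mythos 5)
Your proof is correct and takes essentially the same approach as the paper's: a union bound over the $d$-balls removed, combined with the center-independence of $\vol_{\bx}(\Delta)$ and the hypothesis \eqref{eq: Delta dfn stricter condition}. The only cosmetic difference is that the paper bounds the count at the final step $i=M_n$ and then appeals to the nesting $\calT(P_n,\bx_1^{M_n-1})\subseteq\cdots\subseteq\calT(P_n,\bx_1^{i-1})$, whereas you bound $i-1\le M_n$ directly for each $i$.
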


\begin{proof}
    The upper bound is trivial, since
    \begin{align}
  \calT(P_n, \bx_1^{M_n-1})\subseteq \cdots \subseteq  \calT(P_n, \bx_1^{i-1})&\subseteq \calT(P_n, \bx_1^{i-2}) \nonumber\\
  & \subseteq \cdots\subseteq \calT(P_n).
  \label{eq:subsets}
    \end{align}
 For the lower bound, we make use of \eqref{eq:vol_condition}--\eqref{eq: Delta dfn stricter condition}. After $M_n =\e^{nR_n}$ iterations of the above procedure, we have removed no more than $\e^{n R_n} \vol_{\bx}(\Delta)\leq  |\calT(P_n)|\e^{-n\delta}$ sequences from $\calT(P_n)$. This implies that after iteration $M_n =\e^{nR_n}$, 
\begin{align}
    |\calT(P_n, \bx_1^{M_n-1})| &\geq |\calT(P_n)| - \e^{n R_n} \vol_{\bx}(\Delta)\\
    &\geq  |\calT(P_n)|(1-\e^{-n\delta}).\label{eq:lb}
\end{align}
The lower bound in (\ref{eq: T card bounds}) for $i\in\{1,\dotsc,M_n\}$ follows from \eqref{eq:lb} and \eqref{eq:subsets}.

\end{proof}

Henceforth, whenever we refer to the generalized RGV construction, this implicitly includes the condition \eqref{eq: Delta dfn stricter condition} (or equivalently \eqref{eq:vol_condition}).

The following lemmas provide upper and lower bounds on the marginal distributions of small numbers of codewords (up to three) in the RGV construction.  We make use of the following exponentially vanishing quantity:
\begin{flalign}\label{eq: delta_n dfn }
    \delta_n\triangleq  \frac{\e^{-n\delta}}{1-\e^{-n\delta}}.
\end{flalign}
We begin with the joint distribution between two codewords, as this plays the most important role in our analysis.  Here and subsequently, $\Pr(\bx_k,\bx_m)$ is a shorthand for $\Pr(\bX_k = \bx_k, \bX_m = \bx_m)$, and similarly for other expressions such as $\Pr(\bx_i,\bx_j,\bx_k)$.

\begin{lemma}\label{lm: pairwise distribution of codewords}
Under the generalized RGV construction, for any $k \in \{1,\dotsc,M_n - 1\}$, $m>k$ and $\bx_k,\bx_m \in \calT(P_n)$, if $d(\bx_k,\bx_m)>\Delta$ then we have
\begin{flalign}
   \frac{(1-4\delta_n^2)}{|\calT(P_n)|^2} \e^{-2\delta_n}  \leq \Pr(\bx_k,\bx_m)\leq \frac{1}{(1-\e^{-n\delta})^2|\calT(P_n)|^2},\label{eq: pairwise distribution of codewords k m}
\end{flalign}
while $\Pr(\bx_k,\bx_m) = 0$ whenever $d(\bx_k,\bx_m) \le \Delta$.
\end{lemma}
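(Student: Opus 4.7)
The plan is to expand the joint probability using the sequential construction,
\[
\Pr(\bx_k, \bx_m) = \sum_{\bx_1^{k-1}, \bx_{k+1}^{m-1}} \prod_{i=1}^{m} \frac{\indicator\{\bx_i \in \calT(P_n, \bx_1^{i-1})\}}{|\calT(P_n, \bx_1^{i-1})|},
\]
and to bound each conditional factor via Lemma~\ref{lem:Ti_bounds}. The case $d(\bx_k, \bx_m) \le \Delta$ is immediate: the recursive construction forbids $\bx_m$ at any step $m > k$ whenever it lies within distance $\Delta$ of the already selected $\bx_k$.

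For the upper bound, I would bound the two factors with $i \in \{k,m\}$ by $\tfrac{1}{(1-\e^{-n\delta})|\calT(P_n)|}$ using the lower bound in \eqref{eq: T card bounds}, and pull them outside the summation. The inner sum $\sum_{\bx_{k+1}^{m-1}} \prod_{k < i < m} \Pr(\bx_i\mid \bx_1^{i-1})$ then telescopes to $1$ by iterated summation of the conditional distributions, and likewise $\sum_{\bx_1^{k-1}} \Pr(\bx_1^{k-1}) = 1$, directly yielding $\tfrac{1}{(1-\e^{-n\delta})^2|\calT(P_n)|^2}$.

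For the lower bound, I would apply the opposite inequality $\tfrac{1}{|\calT(P_n, \bx_1^{i-1})|} \ge \tfrac{1}{|\calT(P_n)|}$ only at $i \in \{k, m\}$, retaining the exact conditional distributions at the remaining indices. This rearranges into
\[
\Pr(\bx_k, \bx_m) \ge \frac{1}{|\calT(P_n)|^{2}} \,\Pr\nolimits_{\mathrm{aux}}\!\Bigl(\bx_k \in \calT(P_n, \bX_1^{k-1}) \text{ and } \bx_m \in \calT(P_n, \bX_1^{m-1})\Bigr),
\]
where $\Pr_{\mathrm{aux}}$ is the proper probability distribution on $(\bX_1^{k-1}, \bX_{k+1}^{m-1})$ obtained by drawing $\bX_1^{k-1}$ via the RGV process and then drawing $\bX_{k+1}^{m-1}$ via the RGV recursion continued with $\bx_k$ inserted at position~$k$. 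A union bound reduces the two indicators to bounding $\Pr_{\mathrm{aux}}(d(\bx_k, \bX_j) \le \Delta)$ for $j < k$ and $\Pr_{\mathrm{aux}}(d(\bx_m, \bX_j) \le \Delta)$ for $j \in \{1,\dots,m-1\} \setminus \{k\}$, where the hypothesis $d(\bx_k, \bx_m) > \Delta$ is used to drop $j=k$ from the latter family. Each of these marginal probabilities is at most $\tfrac{\vol(\Delta)}{(1-\e^{-n\delta})|\calT(P_n)|}$ via a single-codeword marginal upper bound inherited from Lemma~\ref{lem:Ti_bounds}, and summing over at most $M_n = \e^{nR_n}$ indices and invoking \eqref{eq: Delta dfn stricter condition} shows each union bound to be at most $\delta_n$. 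The resulting $\Pr(\bx_k, \bx_m) \ge \tfrac{1 - 2\delta_n}{|\calT(P_n)|^2}$ implies the stated bound since $(1-4\delta_n^2)\e^{-2\delta_n} = (1-2\delta_n)(1+2\delta_n)\e^{-2\delta_n} \le 1 - 2\delta_n$, by the elementary inequality $(1+x)\e^{-x} \le 1$ for $x \ge 0$.

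The main obstacle I anticipate is verifying that Lemma~\ref{lem:Ti_bounds}, and the marginal upper bound $\Pr(\bX_j = \bx_j) \le \tfrac{1}{(1-\e^{-n\delta})|\calT(P_n)|}$ it implies, continue to hold under the auxiliary distribution in which $\bx_k$ is prespecified rather than drawn. This should follow because the proof of Lemma~\ref{lem:Ti_bounds} rests on a volume-counting argument: at any step $j \le M_n$ the history of $j-1 \le M_n - 1$ sequences excludes at most $\e^{-n\delta}|\calT(P_n)|$ candidates from $\calT(P_n)$, which is insensitive to whether one element of the history was inserted or drawn.
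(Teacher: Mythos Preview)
Your proposal is correct. The upper bound and the $d(\bx_k,\bx_m)\le\Delta$ case match the paper's proof exactly. For the lower bound, however, you take a genuinely different route.

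The paper's argument is a recursive ``peeling'': after pulling out $1/|\calT(P_n)|^2$ for indices $k$ and $m$, it sums out the remaining indices $i\notin\{k,m\}$ one at a time, showing that each such sum satisfies
\[
\sum_{\bx_i}\Pr(\bx_i\mid\bx_1^{i-1})\,\indicator\{d(\bx_k,\bx_i)>\Delta\}\,\indicator\{d(\bx_m,\bx_i)>\Delta\}
=\frac{|\calT(P_n,\bx_1^{i-1},\bx_k,\bx_m)|}{|\calT(P_n,\bx_1^{i-1})|}
\ge 1-2\delta_n\e^{-nR_n},
\]
and then collects the $m-2\le\e^{nR_n}$ multiplicative factors via the inequality $(1-\alpha/N)^N\ge \e^{-\alpha}(1-\alpha^2/N)$ to obtain $(1-4\delta_n^2)\e^{-2\delta_n}$. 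Your approach instead packages the entire sum as a single probability $\Pr_{\mathrm{aux}}(A_k\cap A_m)$ under an auxiliary law with $\bx_k$ pinned, and bounds the complement by a global union bound over at most $2M_n$ events of individual probability $\le\delta_n\e^{-nR_n}$, giving $1-2\delta_n$ directly. This is cleaner and in fact yields the sharper constant you note, at the cost of having to verify that the auxiliary law is well-defined and that Lemma~\ref{lem:Ti_bounds} extends to histories containing an inserted (possibly ``illegal'') $\bx_k$; you correctly identify that the volume-counting proof of Lemma~\ref{lem:Ti_bounds} is oblivious to how the history was produced, so this goes through. The paper's peeling argument avoids introducing the auxiliary law altogether, keeping everything inside the original RGV measure.
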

\begin{proof}
    See Appendix \ref{app:pairwise lemma}.
\end{proof}

Note that here $k$ and $m$ are arbitrary indices, and $m$ need not correspond to the transmitted message.  In some cases, we will apply the lemma with $m$ being the transmitted message.

For the joint distribution between three codewords, we will only require an upper bound, and it will only be used for the ensemble tightness part.

\begin{lemma}\label{lm: triplets-wise distribution of codewords}
Under the generalized RGV construction, for any $i,j,k \in \{1,\dotsc,M_n\}$, such that $i<j<k$ and $\bx_i,\bx_j ,\bx_k\in \calT(P_n)$, if $\min\{d(\bx_i,\bx_j),d(\bx_i,\bx_k),d(\bx_j,\bx_k)\}>\Delta$ then
\begin{flalign}
  \Pr(\bx_i,\bx_j,\bx_k)\leq \frac{1}{(1-e^{-n\delta})^3|\calT(P_n)|^3},\label{eq: tiplets-wise distribution of codewords imj}
\end{flalign}
while $\min\{d(\bx_i,\bx_j),d(\bx_i,\bx_k),d(\bx_j,\bx_k)\}\le \Delta$ whenever $\Pr(\bx_i,\bx_j,\bx_k) = 0$.
\end{lemma}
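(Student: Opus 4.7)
The plan is to imitate the proof of Lemma~\ref{lm: pairwise distribution of codewords} with one extra level of nesting. Because the RGV procedure draws $\bx_\ell$ only as a function of $\bx_1^{\ell-1}$, the chain rule gives
\[
\Pr(\bx_i,\bx_j,\bx_k) \;=\; \sum_{\bx_\ell:\, 1\le \ell \le k,\, \ell \notin\{i,j,k\}} \prod_{\ell=1}^{k} \Pr(\bx_\ell\mid \bx_1^{\ell-1}),
\]
and, combining $\Pr(\bx_\ell\mid\bx_1^{\ell-1}) = 1/|\calT(P_n,\bx_1^{\ell-1})|$ (when the latter quantity is nonzero) with Lemma~\ref{lem:Ti_bounds}, the pointwise bound
\[
\Pr(\bx_\ell\mid\bx_1^{\ell-1}) \;\le\; \frac{1}{(1-\e^{-n\delta})\,|\calT(P_n)|}
\]
holds for every $\ell$ and every conditioning sequence $\bx_1^{\ell-1}$ occurring with positive probability.

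I would apply this bound to the three factors at $\ell\in\{i,j,k\}$, pulling $\bigl((1-\e^{-n\delta})|\calT(P_n)|\bigr)^{-3}$ out of the sum. The remaining factors, one for each $\ell\in\{1,\ldots,k\}\setminus\{i,j,k\}$, are genuine conditional probability mass functions that sum to $1$ over the free codeword $\bx_\ell$. Summing these out in forward order of the chain rule --- innermost index first, so $\bx_{k-1}$ down to $\bx_{j+1}$, then $\bx_{j-1}$ down to $\bx_{i+1}$, then $\bx_{i-1}$ down to $\bx_1$ --- each sum collapses to $1$, and what remains is exactly \eqref{eq: tiplets-wise distribution of codewords imj}.

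For the zero-probability statement, which I read as the natural analogue of Lemma~\ref{lm: pairwise distribution of codewords} (i.e., $\Pr(\bx_i,\bx_j,\bx_k)=0$ whenever the minimum pairwise distance is at most $\Delta$), I would argue pairwise: if $d(\bx_a,\bx_b)\le\Delta$ for some indices $a<b$ in $\{i,j,k\}$, then any realization of $\bx_1^{b-1}$ that places $\bx_a$ in position $a$ forces $\bx_b\notin \calT(P_n,\bx_1^{b-1})$, whence $\Pr(\bx_b\mid \bx_1^{b-1}) = 0$ and the corresponding term of the chain-rule sum vanishes.

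I do not expect any real obstacle; the only care required is to fix the summation order so that each intermediate conditional factor is summed out \emph{before} any later-indexed fixed factor is bounded, which keeps the three distinguished factors decoupled and makes every marginalization collapse cleanly to~$1$.
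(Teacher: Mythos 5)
Your proof is correct and follows essentially the same route as the paper's Appendix~\ref{app:triplets distribution lemma}: chain-rule factorization, applying Lemma~\ref{lem:Ti_bounds} to the three distinguished factors, and collapsing the remaining nested conditional sums to (at most) $1$. Your reading of the zero-probability clause as the analogue of Lemma~\ref{lm: pairwise distribution of codewords} is also the right one --- it is the direction the paper's proof actually establishes (via the indicator $\calI_{d,\Delta}$) and the one invoked at \eqref{eq: app lemma 8} --- so the ``whenever'' in the lemma statement appears to be reversed.
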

\begin{proof}
    See Appendix \ref{app:triplets distribution lemma}.
\end{proof}

Finally, by a basic symmetry argument, the marginal distribution of any given codeword $\bX_m$ (without any conditioning) is uniform over $\calT(P_n)$, as stated in the following.

\begin{lemma}\label{lm: marginal distribution Xm lemma}
    For any message index $m$, the marginal distribution of codeword $\bX_m$ is $\Pr(\bx_m) = \frac{1}{|\calT(P_n)|}$ for $\bx_m \in \calT(P_n)$, and zero elsewhere.
\end{lemma}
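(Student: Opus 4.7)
The plan is to exploit a coordinate-permutation symmetry inherent in the RGV construction. For any permutation $\tau\in S_n$ on the coordinates $\{1,\dots,n\}$, let $\tau\cdot\bx\triangleq(x_{\tau(1)},\dots,x_{\tau(n)})$, extended diagonally to tuples of sequences. Since $\hat{P}_{\tau\cdot\bx,\tau\cdot\bx'}=\hat{P}_{\bx,\bx'}$ and $d\in\Omega$ is type-dependent, we have $d(\tau\cdot\bx,\tau\cdot\bx')=d(\bx,\bx')$; likewise $\tau$ maps $\calT(P_n)$ onto itself. Consequently, for every $\by_1^{i-1}$, the map $\bar{\bx}\mapsto\tau\cdot\bar{\bx}$ is a bijection from $\calT(P_n,\by_1^{i-1})$ onto $\calT(P_n,\tau\cdot\by_1^{i-1})$, and in particular $|\calT(P_n,\by_1^{i-1})|=|\calT(P_n,\tau\cdot\by_1^{i-1})|$.

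Next I would write the RGV joint law as the product of the conditional uniform distributions,
\[
\Pr(\bX_1^m=\by_1^m)=\frac{\indicator\{\by_1\in\calT(P_n)\}}{|\calT(P_n)|}\prod_{i=2}^{m}\frac{\indicator\{\by_i\in\calT(P_n,\by_1^{i-1})\}}{|\calT(P_n,\by_1^{i-1})|},
\]
and invoke the cardinality equalities above together with the invariance of the indicators under $\tau$ to conclude $\Pr(\bX_1^m=\by_1^m)=\Pr(\bX_1^m=\tau\cdot\by_1^m)$. Marginalizing over $\by_1^{m-1}$ and relabelling the summation variables through the bijection induced by $\tau$ on $\calT(P_n)^{m-1}$, I obtain $\Pr(\bX_m=\by_m)=\Pr(\bX_m=\tau\cdot\by_m)$ for every $\tau\in S_n$.

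To finish, I would invoke the elementary fact that any two sequences of the same empirical type are related by some coordinate permutation: for every $\bx,\bx'\in\calT(P_n)$, there exists $\tau\in S_n$ with $\tau\cdot\bx=\bx'$. Combining this with the identity just derived gives $\Pr(\bX_m=\bx)=\Pr(\bX_m=\bx')$ for all $\bx,\bx'\in\calT(P_n)$, and since $\bX_m$ lies in $\calT(P_n)$ almost surely, the common value must equal $1/|\calT(P_n)|$. The main (and only mildly delicate) point is verifying that the diagonal action of $\tau$ preserves the recursively-defined sets $\calT(P_n,\by_1^{i-1})$; this reduces to the hypotheses $d\in\Omega$ being symmetric and type-dependent, so no real obstacle is expected.
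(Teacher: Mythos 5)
Your proposal is correct and takes essentially the same approach as the paper: both exploit the invariance of the RGV joint law under coordinate permutations, which follows because $d$ is type-dependent and the codeword sampling is uniform over type-dependent sets, and then conclude by noting that any two sequences in $\calT(P_n)$ are related by a permutation. Your version spells out the bijection $\calT(P_n,\by_1^{i-1})\to\calT(P_n,\tau\cdot\by_1^{i-1})$ and the product form of the joint law more explicitly, whereas the paper asserts the permutation-invariance of the joint distribution directly and sums over admissible prefixes, but the underlying argument is identical.
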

\begin{proof}
    See Appendix \ref{ap: Proof of Uniformity}.
\end{proof}

\section{Main Result} \label{sc: Main Theorem}

Using graph decomposition techniques, Csisz\'{a}r and K\"{o}rner \cite{CsiszarKorner81graph} studied the error exponents of constant-composition codes under a decoder that uses a type-dependent decoding metric $q(\hat{P}_{\bx,\by})$, and derived the following achievable exponent for an arbitrary input distribution $P$:
    \begin{flalign}\label{eq: E_ex dfn C and K}
        E_{q}(R,P,W)&= \min_{V\in\calT_I }D(V_{Y|X}\|W|P)+\big|I(\widetilde{X};Y,X)-R\big|_+,
    \end{flalign}
    where
    \begin{flalign}\label{eq: cal T I definition}
        &\calT_I\triangleq \Big\{ V_{X\widetilde{X}Y} \in\calP(\calX\times\calX\times\calY)\,:\, \nonumber  \\
        & \quad V_X=V_{\widetilde{X}}=P,q(V_{\widetilde{X}Y}) \geq q(V_{XY}), I(X;\widetilde{X})\leq R \Big\}.
    \end{flalign}
    This exponent was shown to be at least as high as the maximum of the expurgated exponent and the random coding exponent.

The following theorem presents an exact single-letter expression for the error exponent of the recursive RGV codebook construction described in the previous section.  We show in Section \ref{sc: Optimal Choices for d} that it reduces to the exponent of \cite{CsiszarKorner81graph}, $E_{q}(R,P,W)$, when the distance function $d(\cdot,\cdot)$ is optimized. 

Letting
    \begin{flalign}\label{eq: E_ex dfn}
        &E_{\mathrm{RGV}}(R,P,W,q,d, \Delta) \nonumber \\
        &= \min_{V_{X\widetilde{X}Y}\in\calT_{d,q,P}(\Delta) }D(V_{Y|X}\|W|P)+\bigl|I(\widetilde{X};Y,X)-R\bigr|_+,
    \end{flalign}
    where
    \begin{flalign}\label{eq: cal T alpha definition}
        &\calT_{d,q,P}(\Delta) \triangleq\Bigl\{
        V_{X\widetilde{X}Y}\in\mathcal{P}(\calX\times\calX\times\calY) \,:\, \nonumber \\ 
        &V_X=V_{\widetilde{X}}=P,\, q(V_{\widetilde{X}Y})\geq q(V_{XY}), \,d(V_{X\widetilde{X}})\geq \Delta \Bigr\},
    \end{flalign}
we have the following.

\begin{theorem}\label{th: main theorem sphere packing}
    For all $P\in\mathcal{P}(\calX)$, $\delta>0$, $\Delta\in\reals$, $d\in\Omega$, and $R>0$ satisfying 
        \begin{flalign}\label{eq: frist eq}
        R\leq \min_{P_{X\widetilde{X}} \,:\, d(P_{X\widetilde{X}}) \leq\Delta,\, P_X=P_{\widetilde{X}}=P} I(X;\widetilde{X})-2\delta,
    \end{flalign}
     the ensemble average error probability $\bar{P}_\e^{(n)}$ of the generalized RGV construction with parameters $(n,R,P,d, \Delta,\delta)$ and the bounded continuous type-dependent decoding metric $q(\cdot)$ over the DMC $W$ satisfies
    \begin{flalign}
       \bar{P}_\e^{(n)}\dotleq  \e^{-nE_{\mathrm{RGV}}(R,P,W,q,d,\Delta)}.
    \end{flalign}
    In addition, if $q$ is an additive decoding metric, then
    \begin{equation}\label{eq: ensemble tightness Theorem}
        \bar{P}_\e^{(n)} \dotgeq \e^{-nE_{\mathrm{RGV}}(R,P,W,q,d,\Delta+\epsilon)} 
    \end{equation}
    for arbitrarily small $\epsilon>0$. 
\end{theorem}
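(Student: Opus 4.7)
The proof decomposes into the achievability upper bound $\bar{P}_\e^{(n)} \dotleq \e^{-nE_{\mathrm{RGV}}(R,P,W,q,d,\Delta)}$ and, for additive $q$, the ensemble-tightness lower bound $\bar{P}_\e^{(n)} \dotgeq \e^{-nE_{\mathrm{RGV}}(R,P,W,q,d,\Delta+\epsilon)}$. Both directions reduce, via the method of types, to enumeration over joint types $V_{X\widetilde{X}Y}$ of (transmitted, competing, received) triples, with Lemmas \ref{lm: pairwise distribution of codewords}--\ref{lm: marginal distribution Xm lemma} supplying the two- and three-codeword distributional control that replaces the independence available in i.i.d.\ or plain constant-composition random coding.

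For the achievability direction I would condition on $S=m$ and on $(\bx_m,\by)$. Lemma \ref{lm: marginal distribution Xm lemma} gives $\bX_m$ uniform on $\calT(P_n)$, and the upper half of Lemma \ref{lm: pairwise distribution of codewords} says that, for any $k\ne m$, the conditional law of $\bX_k$ given $\bX_m=\bx_m$ is pointwise bounded by $(1-\e^{-n\delta})^{-2}/|\calT(P_n)|$ on $\{\bar\bx:d(\bar\bx,\bx_m)>\Delta\}$ and is zero elsewhere; the prefactor is subexponential. Applying the truncated union bound $\Pr(\text{err}\mid S{=}m,\bx_m,\by)\le\min\{1,\sum_{k\ne m}\Pr(q(\bX_k,\by)\ge q(\bx_m,\by)\mid\bx_m)\}$ and grouping competitors $\bX_k$ by the joint type $V_{X\widetilde{X}Y}$ of $(\bx_m,\bX_k,\by)$, the distance support $d>\Delta$ inherited from Lemma \ref{lm: pairwise distribution of codewords} restricts the sum to $V\in\calT_{d,q,P}(\Delta)$, and the standard conditional-type count yields $\sum_{k\ne m}\Pr(\cdots)\doteq\sum_V \exp\{n(R-I_V(\widetilde{X};X,Y))\}$. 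The clip by $1$ produces $|\,\cdot\,|_+$, and averaging $\by$ under $W^n(\cdot|\bx_m)$ contributes $D(V_{Y|X}\|W|P)$ by the usual Sanov-type estimate; optimizing over $V$ yields $\e^{-nE_{\mathrm{RGV}}(R,P,W,q,d,\Delta)}$. Condition \eqref{eq: frist eq} is what guarantees via Lemma \ref{lem:Ti_bounds} that the construction actually attains rate $R$, so that $M_n=\lfloor \e^{nR}\rfloor$ codewords can be drawn.

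For the ensemble-tightness direction under additive $q$ I would apply a Paley--Zygmund / second-moment lower bound to $\Pr(\bigcup_{k\ne m} A_k)$, where $A_k=\{q(\bX_k,\by)\ge q(\bx_m,\by)\}$. For each joint type $V\in\calT_{d,q,P}(\Delta+\epsilon)$, let $N_V$ count those $k\ne m$ whose joint empirical type with $(\bx_m,\by)$ is $\approx V$. The lower bound in Lemma \ref{lm: pairwise distribution of codewords} (which fires only when $d(V_{X\widetilde{X}})>\Delta$ strictly---the reason $\Delta$ must be enlarged to $\Delta+\epsilon$, so that rational $n$-types populate the interior of the distance ball and trigger the nontrivial branch of the lemma) gives $\EE N_V\dotgeq \e^{n(R-I_V(\widetilde{X};X,Y))}$, and Lemma \ref{lm: triplets-wise distribution of codewords}'s triplet upper bound yields $\EE N_V^2\dotleq \EE N_V+(\EE N_V)^2$, so that $\Pr(N_V\ge 1)\dotgeq \min\{1,\EE N_V\}\doteq \e^{-n|I_V(\widetilde{X};X,Y)-R|_+}$. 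Integrating over $\by$ under $W^n(\cdot|\bx_m)$ (additivity of $q$ is used so that $A_k$ is a function only of the joint type) and averaging over $\bx_m$ introduces the $D(V_{Y|X}\|W|P)$ cost, and selecting the best $V\in\calT_{d,q,P}(\Delta+\epsilon)$ completes the matching lower bound.

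I expect the ensemble-tightness direction to be the main obstacle. The RGV codewords are neither independent nor exchangeable: the law of $\bX_i$ depends on $i$ and on $\bX_1^{i-1}$, so the familiar ``focus on a single confusion pair by symmetry'' step is unavailable. Lemmas \ref{lm: pairwise distribution of codewords} and \ref{lm: triplets-wise distribution of codewords} are designed precisely to replace that symmetry by supplying uniform-in-indices two- and three-codeword bounds; the delicate bookkeeping is to verify that all the $(1-\e^{-n\delta})^{\pm k}$ and $\delta_n$ prefactors carried by these lemmas are subexponential in $n$, so that they vanish under $\dotleq$ and $\dotgeq$. A secondary subtlety is that the $\epsilon$-slack in $\Delta+\epsilon$ must be large enough for types $V$ with $d(V_{X\widetilde{X}})>\Delta$ to actually exist in $\mathcal{P}_n(\calX\times\calX)$ and to approximate any $V\in\calT_{d,q,P}(\Delta+\epsilon)$ as $n\to\infty$; this follows from the continuity of $d$ guaranteed by Definition \ref{def:distance}.
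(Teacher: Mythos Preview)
Your overall strategy matches the paper's closely: both directions go through the ``RCU-like'' expression \eqref{eq:rcu-like}, with Lemmas \ref{lm: pairwise distribution of codewords}--\ref{lm: marginal distribution Xm lemma} replacing independence. For the lower bound the paper uses de Caen's inequality \eqref{eq: DeCaenInq-original} on $\bigcup_{i\ne m}\calE_i$ and then shows $\Pr(\calE_i\cap\calE_j\mid\bx_m,\by)\dotleq\Pr(\calE_i\mid\bx_m,\by)\Pr(\calE_j\mid\bx_m,\by)$ via Lemma \ref{lm: triplets-wise distribution of codewords}; your type-by-type Paley--Zygmund argument on $N_V$ is an equivalent second-moment variant and would work just as well.

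There is, however, a genuine gap in your account of where additivity of $q$ enters. You write that ``additivity of $q$ is used so that $A_k$ is a function only of the joint type,'' but $A_k=\{q(\bX_k,\by)\ge q(\bx_m,\by)\}$ is already a joint-type event for \emph{any} type-dependent metric; that is precisely the standing assumption on $q$. In the paper, additivity is used \emph{only} in Lemma \ref{lm: empirical simplex lemma}, the step that passes from the minimization over $n$-types $\calT^{(n)}_{d,q,P}(\Delta)$ to the minimization over $\calT_{d,q,P}(\Delta+\epsilon)$. The issue is that when one rounds a simplex minimizer $V^*_{Y|X\widetilde{X}}$ to a nearby $n$-type, the constraint $q(V_{\widetilde{X}Y})\ge q(V_{XY})$ may be violated; additivity makes $q(V_{\widetilde{X}Y})-q(V_{XY})$ \emph{linear} in $V_{Y|X\widetilde{X}}$, which allows a monotone rounding scheme that preserves the constraint exactly (see \eqref{eq: q gap 1}). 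Your sketch does not address this types-to-simplex passage at all, and your $\epsilon$-slack discussion attributes the slack solely to the strict-versus-non-strict distinction in the $d$-constraint. That accounts for part of it, but the same $\epsilon$ (and the additivity hypothesis) is doing real work on the $q$-constraint side; without it you would only recover \eqref{eq: E_ex dfn jdhgvfkh} with the perturbed constraint $q(V_{\widetilde{X}Y})\ge q(V_{XY})+\epsilon$, which is strictly weaker than \eqref{eq: ensemble tightness Theorem}.
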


The achievability proof (i.e., upper bound on the error probability) is given in Section \ref{sc: upper bound}, and the ensemble tightness proof (i.e., lower bound on the error probability) for additive metrics is given in Section \ref{sc: lower bound}.

As will be shown in Section \ref{sc: Optimal Choices for d}, under the rate constraint \eqref{eq: frist eq}, if the distance function is chosen appropriately, the generalized RGV construction achieves the exponent $E_{q}(R,P,W)$ in \eqref{eq: E_ex dfn C and K}, which in turn shows the achievability of capacity for ML decoding or the LM rate in the mismatched case \cite{CsiszarKorner81graph,Hui83}. Moreover, for a distance function $d$ that uniquely attains its minimum value when $X=X'$, varying $\Delta$ from its minimum to maximum value yields all possible values of rates in $(0,H(P))$, which covers the entire range of possible rates with constant composition codes.

Theorem \ref{th: main theorem sphere packing} implies that the exact exponent of the coding scheme equals $E_{\mathrm{RGV}}(R,P,W,q,d,\Delta)$ whenever $\Delta$ is a continuity point.  We note that while the additivity of $q(\cdot)$ is required for the derivation of the lower bound on $ \bar{P}_\e^{(n)}$, the upper bound holds also for any continuous $q(\cdot)$ that need not be additive. 
        For non-additive $q$, in the 
        assertion of the lower bound \eqref{eq: ensemble tightness Theorem} for non additive $q$, we would have to replace $E_{\mathrm{RGV}}(R,P,W,q,d,\Delta+\epsilon) $ by
    \begin{flalign}\label{eq: E_ex dfn jdhgvfkh}
        \min_{V_{X\widetilde{X}Y}\in\mathcal{T}_{d,q,P,\epsilon}(\Delta+\epsilon) }D(V_{Y|X}\|W|P)+\bigl|I(\widetilde{X};Y,X)-R\bigr|_+,
    \end{flalign}
    where
     \begin{flalign}\label{eq: cal T I definition tag}
        &\mathcal{T}_{d,q,P,\epsilon}(\Delta+\epsilon) \triangleq \Big\{ V_{X\widetilde{X}Y} \in\mathcal{P}(\mathcal{X}\times\mathcal{X}\times\mathcal{Y})\,:\, \nonumber \\
        &V_X=V_{\widetilde{X}}=P, q(V_{\widetilde{X}Y}) \geq q(V_{XY})+\epsilon,\; d(V_{X\widetilde{X}}) \geq\Delta+\epsilon \Big\},
    \end{flalign}
    that is, we would have the extra $\epsilon$ in $q(V_{\widetilde{X}Y}) \geq q(V_{XY})+\epsilon$. 
    While this yields the desired tightness result whenever the optimization 
is "continuous" with respect to the metric constraint, it is unclear in 
what generality such continuity holds.  As a simple example, $\mathcal{T}_{d,q,P,\epsilon}(\Delta+\epsilon)$ is always empty under the erasures-only metric $q(x,y) = \indicator\{ W(y|x) > 0\}$, meaning that \eqref{eq: E_ex dfn jdhgvfkh} does not provide a tightness result in this case.

By a simple symmetrization argument, we can show that $E_{\mathrm{RGV}}(R,P,W,q,d,\Delta)$ is an achievable error exponent even when $d$ is not symmetric.  This is stated in the following.

\begin{corollary} \label{cor:asymm}
    Under the setup of Theorem \ref{th: main theorem sphere packing} with a non-symmetric continuous type-dependent bounded distance function $d$, if the pair $(R,\Delta)$ satisfies \eqref{eq: frist eq}, then the error exponent 
    
\noindent $E_{\mathrm{RGV}}(R,P,W,q,d,\Delta)$ is achievable at rate $R$. 
\end{corollary}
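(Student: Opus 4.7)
The plan is to reduce the asymmetric case to Theorem \ref{th: main theorem sphere packing} via a symmetrization of the distance function. Define
\begin{equation*}
    \tilde{d}(\bx,\bx') \triangleq \min\bigl\{d(\bx,\bx'),\, d(\bx',\bx)\bigr\}.
\end{equation*}
Since $d$ is continuous, bounded, and type-dependent, so is $\tilde{d}$, and symmetry holds by construction, so $\tilde{d} \in \Omega$. The idea is then to run the generalized RGV construction with $\tilde{d}$, and show that (i) the rate condition \eqref{eq: frist eq} carries over from $d$ to $\tilde{d}$, and (ii) the resulting exponent dominates $E_{\mathrm{RGV}}(R,P,W,q,d,\Delta)$.

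For step (i), observe that $\tilde{d}(P_{X\widetilde{X}}) \leq \Delta$ holds iff $d(P_{X\widetilde{X}}) \leq \Delta$ or $d(P_{\widetilde{X}X}) \leq \Delta$. Thus the feasible set in the rate condition for $\tilde{d}$ is $S_d \cup \psi(S_d)$, where $S_d = \{P_{X\widetilde{X}}: d(P_{X\widetilde{X}}) \leq \Delta,\, P_X=P_{\widetilde{X}}=P\}$ and $\psi$ is the swap $P_{X\widetilde{X}} \mapsto P_{\widetilde{X}X}$. The swap preserves the marginal constraints $P_X=P_{\widetilde{X}}=P$ and leaves $I(X;\widetilde{X})$ invariant, so the minima of $I(X;\widetilde{X})$ over $S_d$ and over $\psi(S_d)$ coincide, hence so does the minimum over their union. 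Consequently, \eqref{eq: frist eq} for $\tilde{d}$ is the same condition as for $d$, and Theorem \ref{th: main theorem sphere packing} may be invoked.

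For step (ii), note that if $V_{X\widetilde{X}Y}$ satisfies $\tilde{d}(V_{X\widetilde{X}}) \geq \Delta$ then necessarily $d(V_{X\widetilde{X}}) \geq \Delta$, so $\calT_{\tilde{d},q,P}(\Delta) \subseteq \calT_{d,q,P}(\Delta)$. Minimizing the objective in \eqref{eq: E_ex dfn} over a smaller set can only increase the value, yielding
\begin{equation*}
    E_{\mathrm{RGV}}(R,P,W,q,\tilde{d},\Delta) \geq E_{\mathrm{RGV}}(R,P,W,q,d,\Delta).
\end{equation*}
Applying Theorem \ref{th: main theorem sphere packing} to the RGV construction with $\tilde{d}$ therefore establishes achievability of $E_{\mathrm{RGV}}(R,P,W,q,d,\Delta)$ at rate $R$. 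The only real subtlety lies in step (i): the fact that enlarging the feasible set to include both orientations of $(X,\widetilde{X})$ does not tighten the rate bound is precisely what makes the symmetrization free of charge, and this rests solely on the swap-invariance of mutual information together with the symmetry of the marginal constraint $P_X = P_{\widetilde{X}} = P$.
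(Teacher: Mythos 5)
Your proof is correct and takes essentially the same route as the paper: you symmetrize via $\tilde d(\bx,\bx')=\min\{d(\bx,\bx'),d(\bx',\bx)\}$, show the rate condition \eqref{eq: frist eq} is unchanged (using the swap-invariance of $I(X;\widetilde{X})$ and the symmetric marginal constraints), and observe that $\calT_{\tilde d,q,P}(\Delta)\subseteq\calT_{d,q,P}(\Delta)$ makes $E_{\mathrm{RGV}}$ with $\tilde d$ at least as large. This matches the paper's argument step for step.
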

\begin{proof}
    We apply Theorem \ref{th: main theorem sphere packing} with the symmetric distance
    \begin{equation}
        d'(\bx,\bx^{\prime})=\min\big\{ d(\bx,\bx^{\prime}),d(\bx^{\prime},\bx)\big\}. \label{eq:symmetrized}
    \end{equation}
    Notice that this choice enforces $d(\bx,\bx^{\prime}) > \Delta$ for all pairs $(\bx_{i},\bx_{j})$ in the codebook, regardless of whether $i<j$ or $i>j$.
    
    The exponent in \eqref{eq: E_ex dfn} with symmetric distance $d'$ simplifies as follows:
    \begin{align}
    & \min_{\substack{V:V_{X}=V_{\widetilde{X}}=P,\\
    q(V_{\widetilde{X}Y})\ge q(V_{XY}),\\\min\{d(P_{X\widetilde{X}}),d(P_{\widetilde{X}X})\}\ge\Delta}}D(V_{Y|X}\|W|P)+[I(\widetilde{X};X,Y)-R]_{+} \nonumber \\
    & \quad\ge\min_{\substack{V:V_{X}=V_{\widetilde{X}}=P,\\
    q(V_{\widetilde{X}Y})\ge q(V_{XY}),\\ d(P_{X\widetilde{X}})\ge\Delta}}D(V_{Y|X}\|W|P)+[I(\widetilde{X};X,Y)-R]_{+},
    \end{align}
    since on the right-hand side we are minimizing over a larger set. 
    Moreover, the minimization in the rate condition \eqref{eq: frist eq} with distance $d'$ simplifies as follows:
    \begin{align}
    & \min_{P_{X\widetilde{X}}\,:\,P_{X}=P_{\widetilde{X}}=P,\min\{d(P_{X\widetilde{X}}),d(P_{\widetilde{X}X})\}\le\Delta}I(X;\widetilde{X}) \nonumber \\
    & =\min\bigg\{
    \min_{P_{X\widetilde{X}}\,:\,P_{X}=P_{\widetilde{X}}=P,d(P_{X\widetilde{X}})\le\Delta}I(X;\widetilde{X}),\nonumber\\ 
    &\qquad \qquad\qquad\min_{P_{X\widetilde{X}}\,:\,P_{X}=P_{\widetilde{X}}=P,d(P_{\widetilde{X}X})\le\Delta}I(X;\widetilde{X})    \bigg\}\\
    & \qquad=\min_{P_{X\widetilde{X}}\,:\,P_{X}=P_{\widetilde{X}}=P,d(P_{X\widetilde{X}})\le\Delta}I(X;\widetilde{X}),
    \end{align}
    where the second line follows since $\min_{z\in A\cup B}f(z)=\min\big\{\min_{z\in A}f(z),\min_{z\in B}f(z)\big\}$,
    and the last line follows from the symmetry of mutual information.
\end{proof}

We briefly discuss the proof of Theorem \ref{th: main theorem sphere packing}.  While the theorem states the error exponent, the central part of the analysis is in arriving at the following asymptotic expression for the ensemble average probability of error: \begin{flalign}
    &\hspace{-2mm}\bar{P}_\e^{(n)}\doteq   \sum_{\bx \in \calT(P_n),\by}\frac{1}{|\calT(P_n)|}W^n(\by|\bx)\nonumber\\
    &\times\min\Biggl\{1,(M_n-1) \sum_{\substack{\bx' \in \calT(P_n) \,:\, q^{}(\bx',\by)\geq q^{}(\bx,\by)\\ d(\bx',\bx)\geq \Delta }} \frac{1}{|\calT(P_n)|} \Biggr\}, \label{eq:rcu-like}
\end{flalign}
which holds for every type-dependent decoding metric $q$ (not necessarily additive or continuous). 
This can be interpreted as a stronger (albeit asymptotic) analog of the {\em random coding union} bound \cite{Polyanskiy10} that achieves not only the random coding exponent, but also the low-rate improvements of the expurgated exponent. 

It is also worth discussing the connection of Theorem \ref{th: main theorem sphere packing} with the analysis of \cite{CsiszarKorner81graph} based on graph decomposition techniques.  A key result shown therein is the existence of a rate-$R$ constant composition codebook $\calC_n$ such that each $\bx\in\calC_n$ satisfies
        \begin{equation}
        |\calT_{\bar V}(\bx) \cap \calC_n|\leq \exp\{n(R-I(P,\bar V))\} \label{eq:Vbar}
        \end{equation}
        for all conditional types $\bar V$ representing a ``channel'' from $\mathcal{X} \to \mathcal{X}$, where $I(P, \bar V) = I_{P \times \bar V}(X;X')$.  In the derivation of $E_q$ ({\em cf.}, \eqref{eq: E_ex dfn C and K}), \eqref{eq:Vbar} is used to establish the empirical mutual information bound $I_{\hat{P}_{\bx,\bx'}}(X;X') \le R$ for any two codewords $\bx,\bx' \in \calC_n$.  
        It is also used to upper bound the number of output sequences $\by$ that can give rise to a given joint type $\tilde{P}_{XX'Y}$, with \eqref{eq:Vbar} characterizing the $\tilde{P}_{XX}$ marginal and standard techniques characterizing $\tilde{P}_{Y|XX'}$.

        Although it was not shown in \cite{CsiszarKorner81graph}, \eqref{eq:Vbar} can be used to establish the achievability part of Theorem \ref{th: main theorem sphere packing} for general distance functions.  To see this, let $I_{\min}$ be the smallest empirical mutual information among codeword pairs $(\bx;\bx')$ with $d(\bx;\bx') \le \Delta$, as stated in Theorem  \ref{th: main theorem sphere packing}.  If $R < I_{\min}$, then the left-hand side of \eqref{eq:Vbar} is zero, meaning all codeword pairs satisfy $d(\bx;\bx') > \Delta$.  Upon noticing this fact, the rest of the proof of \cite[Theorem 1]{CsiszarKorner81graph} remains essentially unchanged and yields the RGV exponent.
        
        Compared to \cite{CsiszarKorner81graph} and other related works, the main advantages of our approach are as follows: (i) We provide an explicit recursive random coding construction rather than only proving existence; (ii) We establish, to our knowledge, the first ensemble tightness result for any construction achieving the expurgated exponent; (ii) We provide a direct extension to channels with continuous alphabets, whereas \cite{CsiszarKorner81graph} relies heavily on combinatorial arguments and types.

\subsection{Proof of Achievability (Upper Bound on $\bar{P}_\e^{(n)}$)}\label{sc: upper bound}

The proof is given in three steps.

\vspace{0.1cm}
\noindent {\em \underline{Step 1: Characterizing the permitted rates}}
\vspace{0.1cm}

\noindent 
For convenience, we define
\begin{flalign}\label{eq: frist eq B}
    R'\triangleq \min_{P_{X\widetilde{X}} \in \mathcal{P}(\calX^2) \,:\, d(P_{X\widetilde{X}}) \leq\Delta,\, P_X=P_{\widetilde{X}}=P} I(X;\widetilde{X})-2\delta.
\end{flalign}
Recalling that $\calT(P_{\widetilde{X}|X})$ stands for a conditional type class \cite[Ch.~2]{CsiszarKorner81} corresponding to $\bx \in \calT(P_n)$, and  letting $\mathcal{P}_n(\calX|\bx)$ be the set of all conditional types, 
we have for $n$ sufficiently large that\begin{flalign}
   & \sum_{\bar{\bx}\in \calT(P_n) \,:\, d(\bar{\bx},\bx)\leq\Delta} \frac{1}{|\calT(P_n)|}\notag\\
    &~~~~\leq (n+1)^{|\calX|^2}\max_{\substack{P_{\widetilde{X}|X}\in \mathcal{P}_n(\calX|\bx) \,:\, P_{\widetilde{X}}=P_X=P_n\\ d(P_{X\widetilde{X}})\leq \Delta}}\frac{|\calT(P_{\widetilde{X}|X})|}{|\calT(P_n)|} \label{eq:deltaub1}\\
   &~~~~ \leq   \exp\bigg(-n\bigg(\min_{\substack{P_{X\widetilde{X}}\in \mathcal{P}_n(\calX^2) \,:\, d(P_{X\widetilde{X}}) \leq\Delta \\P_X=P_{\widetilde{X}}=P}} I(X;\widetilde{X})-\delta \bigg)\bigg) \label{eq:deltaub2} \\
    &~~~~\leq  \e^{-n(R'+\delta)}\label{eq:deltaub3},
\end{flalign}
where \eqref{eq:deltaub1} follows since the number of conditional types is upper bounded by $(n+1)^{|\calX|^2}$, \eqref{eq:deltaub2} holds for $n$ sufficiently large because $|\calT(P_{\widetilde{X}|X})| \doteq \e^{nH_P(\widetilde{X}|X)}$ and $|\calT(P_n)|\doteq \e^{nH(P)}$ \cite[Ch.~2]{CsiszarKorner81}, and \eqref{eq:deltaub3} follows from \eqref{eq: frist eq B} and the fact that $\mathcal{P}_n(\calX^2)\subseteq \mathcal{P}(\calX^2)$. Hence, if the rate of the generalized RGV construction satisfies $R_n\leq R'$, we have 
\beq\label{eq: how ccc} 
\sum_{\bar{\bx}\in \calT(P_n) \,:\, d(\bar{\bx},\bx)\leq\Delta} \frac{1}{|\calT(P_n)|}\leq \e^{-n(R_n+\delta)},
\eeq 
which is  precisely the condition assumed in \eqref{eq: Delta dfn stricter condition}.

We henceforth assume that the number of codewords of the generalized RGV construction is such that $R_n\leq R'$, and calculate the resulting average probability of error.

\vspace{0.1cm}
\noindent {\em \underline{ Step 2: Conditional error probability}}
\vspace{0.1cm}

\noindent We define the $i$-th pairwise error event given $(\bX_m,\bY)=(\bx_m,\by)$, where $i\neq m$ as
\begin{flalign}
    \calE_i = \left\{ q^{}(\bX_i, \by) \geq q^{}(\bx_m,\by)\right\}, \label{eq:E_i}
\end{flalign}
meaning that the random codeword $\bX_i$ is favored over $\bx_m$ (or the two are favored equally). The ensemble average error probability is 
\beq
\bar P_e^{(n)} = \frac{1}{M_n}\sum_{m=1}^{M_n} \bar P_{e,m}^{(n)},\label{eq: avg err prob}
\eeq
where the probability of error assuming that the $m$-th codeword has been transmitted is
\begin{align}
\bar P_{e,m}^{(n)} &=\EE[\Pr(\text{error}\,|\,\bX_m,\bY) ],
\end{align}
and where
\begin{align}
\Pr(\text{error}\,|\,\bx_m,\by) &=  \Pr\Bigg(\bigcup_{\substack{i=1\\ i\neq m}}^{M_n} \calE_i
 \,\bigg|\, \bX_m = \bx_m,\bY=\by\Bigg) \label{eq:pem2}
\end{align}
is the probability of decoding error for the $m$-th codeword assuming that the realizations of the codeword and received sequences are $\bx_m$ and $\by$ (recall that ties are counted as errors).  We initially perform the analysis conditioned on the transmitted and received sequences being $\bx_m$ and $\by$, respectively (and implicitly on $m$ being transmitted), and later we duly average over these choices.

Now, since only sequences $\bx_i$ such that $d(\bx_i,\bx_m)> \Delta$ have positive probability conditioned on $\bX_m=\bx_m$, we have 
\begin{flalign}
    &\Pr(\calE_i | \bx_m,\by) \nonumber \\
    &= \sum_{\substack{
    \bx_i \,:\,  q^{}(\bx_i,\by)\geq q^{}(\bx_m,\by) \\d(\bx_i,\bx_m)> \Delta}} \Pr(\bx_i| \bx_m,\by)\\
    &= \sum_{\substack{
    \bx_i \,:\,  q^{}(\bx_i,\by)\geq q^{}(\bx_m,\by) \\d(\bx_i,\bx_m)> \Delta}} \Pr(\bx_i| \bx_m)
     \label{eq:case_1}\\
    &= \sum_{\substack{
    \bx_i \,:\,  q^{}(\bx_i,\by)\geq q^{}(\bx_m,\by) \\d(\bx_i,\bx_m)> \Delta}} \frac{ \Pr(\bx_i, \bx_m)}{\Pr( \bx_m)}
     \label{eq:case_1 rev}\\
    &\leq \frac{1}{(1-\e^{-n\delta})^2}\sum_{\substack{
    \bx_i \in \calT(P_n) \,:\,  q^{}(\bx_i,\by)\geq q^{}(\bx_m,\by) \\d(\bx_i,\bx_m)> \Delta}} \frac{1}{|\calT(P_n)|}    , \label{eq:case_5 rev}
\end{flalign}
where \eqref{eq:case_1} follows since $\bX_i - \bX_m-\bY$ forms a Markov chain, and \eqref{eq:case_5 rev} follows from Lemmas \ref{lm: pairwise distribution of codewords} and \ref{lm: marginal distribution Xm lemma}.

Applying the union bound to \eqref{eq:pem2} and substituting \eqref{eq:case_5 rev}, we obtain
\begin{align}
&\Pr(\text{error}\,|\,\bx_m,\by) \nonumber \\
&\leq \sum_{\substack{i \in \{1,\dotsc,M_n\},\\ i\neq m}}\Pr\big( \calE_i
 \,\big|\, \bX_m = \bx_m,\bY=\by\big) \\
 &\leq  \frac{1}{(1-\e^{-n\delta})^2}\sum_{\substack{i \in \{1,\dotsc,M_n\},\\ i\neq m}}
~ \sum_{
 \substack{\bx_i \in \calT(P_n)\,:\\q^{}(\bx_i,\by)\geq q^{}(\bx_m,\by)\\d(\bx_i,\bx_m)\geq \Delta }} \frac{1}{|\calT(P_n)|} \label{eq:pem4 rev}\\
&  =  (M_n-1)\frac{1}{(1-\e^{-n\delta})^2}
 \sum_{\substack{
    \bx'\in\calT(P_n) \,:\,\\  q^{}(\bx',\by)\geq q^{}(\bx_m,\by) \\d(\bx',\bx_m)> \Delta}} \frac{1}{|\calT(P_n)|},
 \label{eq:pem3 rev}
\end{align}
where \eqref{eq:pem3 rev} follows since summands in the summation over $\bx_i$ are equal for all $i$.  

 Applying the obvious inequality $\Pr(\text{error}\,|\,\bx_m,\by)\leq 1$, and slightly enlarging the set of summands by replacing $d(\bx',\bx)> \Delta$ by $d(\bx',\bx)\geq \Delta$, 
 it follows that
\begin{flalign}
\bar{P}_\e^{(n)}
&\dotleq  \sum_{\bx \in \calT(P_n),\by}\frac{1}{|\calT(P_n)|}W^n(\by|\bx)\notag\\&
\times 
 \min\Bigg\{1,(M_n-1) \sum_{\substack{\bx' \in \calT(P_n)\,:\\q^{}(\bx',\by)\geq q^{}(\bx,\by)\\d(\bx',\bx)\geq \Delta }} \frac{1}{|\calT(P_n)|} \Bigg\}, \label{eq:final_rcu}
\end{flalign}
where we have averaged over $(\bx_m,\by)$ and used Lemma \ref{lm: marginal distribution Xm lemma}.

\vspace{0.1cm}
\noindent {\em \underline{Step 3: Deducing the error exponent}}
\vspace{0.1cm}

\noindent Deducing the error exponent from \eqref{eq:final_rcu} amounts to a standard analysis based on the method of types, so we provide a rather brief treatment.

Similarly to \eqref{eq:deltaub1}, the inner sum in \eqref{eq:final_rcu} satisfies
\begin{flalign}
    \sum_{\substack{\bx' \in \calT(P_n):\\q^{}(\bx',\by)\geq q^{}(\bx,\by),\\ d(\bx',\bx)\geq \Delta }} \frac{1}{|\calT(P_n)|} \dotleq 
    \max_{\substack{\hat{P}_{\bx'|\bx\by}\in\calP_n(\calX|\bx\by) :\\q^{}(\bx',\by)\geq q^{}(\bx,\by)\\ d(\bx',\bx)\geq \Delta}} \frac{|\calT(\hat{P}_{\bx'|\bx\by})|}{|\calT(P_n)|}. \label{eq:inner_sum}
\end{flalign}
Applying the standard properties of types $|\calT(\hat{P}_{\bx'|\bx\by})| \doteq \e^{nH_{\hat{P}}(\widetilde{X}|Y,X)}$ and $|\calT(P_n)| \doteq \e^{nH(P_n)}$ \cite[Ch.~2]{CsiszarKorner81}, we can simplify the objective on the right-hand side of \eqref{eq:inner_sum} to $e^{-n I(\widetilde{X};X,Y)}$.  Moreover, we have
$
W^n(\by|\bx) = \e^{n(D(\hat{P}_{\by|\bx}\|W|P_n) +H( \hat{P}_{\by|\bx}) )},
$
which implies that $(\bX_m,\bY)$ has a given conditional type $V_{Y|X}$ with probability $\e^{-nD(V_{Y|X}\|W|P_n)}$ times a subexponential factor.  Using the following continuity lemma to replace $P_n$ by its limiting value $P$, we deduce the final single-letter exponent:
\begin{flalign}
\hspace{-1mm}\bar{P}_\e^{(n)}\dotleq   \e^{-n\min_{V\in\calT_{d,q,P}(\Delta) }D(V_{Y|X}\|W|P)+|I(\widetilde{X};Y,X)-R|_+} ,\label{eq: TF1}
\end{flalign}
where $\calT_{d,q,P}(\Delta)$ is defined in (\ref{eq: cal T alpha definition}).

\begin{lemma} \label{lem:continuity}
	Consider a DMC $W$ and an input distribution $P \in \calP(\calX)$, along with continuous and bounded $d,q$ and a threshold $\Delta$.  For any sequence $P_n \in \calP(\calX)$ with the same support as $P$ such that $P_n(x) \to P(x)$ for all $x$, we have
	\begin{equation}
	\liminf_{n \to \infty} E_{\mathrm{RGV}}(R,P_n,W,q,d,\Delta) \ge E_{\mathrm{RGV}}(R,P,W,q,d,\Delta). \label{eq:cont_LB}
	\end{equation}
\end{lemma}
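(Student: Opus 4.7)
The plan is a standard compactness-and-continuity argument. Write $E_n = E_{\mathrm{RGV}}(R,P_n,W,q,d,\Delta)$ and $E^\star = E_{\mathrm{RGV}}(R,P,W,q,d,\Delta)$. If $\liminf_n E_n = +\infty$, the inequality \eqref{eq:cont_LB} is vacuous, so assume the $\liminf$ is finite. Extract a subsequence $\{n_k\}$ along which $E_{n_k}$ converges to $\liminf_n E_n$, and pick $V^{(k)} \in \calT_{d,q,P_{n_k}}(\Delta)$ whose objective is within $1/k$ of $E_{n_k}$. Finiteness of the objective at $V^{(k)}$ forces, for every $x$ with $P(x)>0$, the support condition $V^{(k)}_{Y|X=x}(y) > 0 \Rightarrow W(y|x) > 0$. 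By compactness of $\mathcal{P}(\calX\times\calX\times\calY)$, pass to a further subsequence (still denoted $\{V^{(k)}\}$) with $V^{(k)} \to V^\star$, noting that the support condition above is preserved in the limit since the underlying alphabets are finite.

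Next I would verify that $V^\star \in \calT_{d,q,P}(\Delta)$. The marginal constraint $V^\star_X = V^\star_{\widetilde{X}} = P$ follows immediately from $V^{(k)}_X = V^{(k)}_{\widetilde{X}} = P_{n_k} \to P$, while continuity of $q$ and $d$ on the simplex preserves the inequalities $q(V^\star_{\widetilde{X}Y}) \ge q(V^\star_{XY})$ and $d(V^\star_{X\widetilde{X}}) \ge \Delta$ in the limit.

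It then remains to verify that the objective converges jointly in $(V^{(k)}, P_{n_k})$ to its value at $(V^\star, P)$. The rate penalty $|I(\widetilde{X};Y,X)-R|_+$ is continuous in $V$ since entropies are continuous on the finite simplex and $(\cdot)_+$ is continuous. For the divergence term, expand
\begin{equation*}
    D(V^{(k)}_{Y|X}\|W|P_{n_k}) = \sum_{x \,:\, P(x)>0} P_{n_k}(x) \sum_{y} V^{(k)}_{Y|X=x}(y)\log\frac{V^{(k)}_{Y|X=x}(y)}{W(y|x)},
\end{equation*}
and observe that, using the support compatibility and the convention $0\log 0 = 0$ (together with $t\log t \to 0$ as $t\to 0^+$), each inner term is continuous in $V^{(k)}_{Y|X=x}$ along the converging subsequence; since $P_{n_k}(x) \to P(x)$ and the outer sum has finitely many terms, the whole KL term converges to $D(V^\star_{Y|X}\|W|P)$. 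Combining, the objective at $(V^{(k)},P_{n_k})$ converges to the objective at $(V^\star,P)$, which is at least $E^\star$ since $V^\star \in \calT_{d,q,P}(\Delta)$, yielding $E^\star \le \liminf_n E_n$ as desired.

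The main obstacle I expect is handling KL continuity when a coordinate $V^{(k)}_{Y|X=x}(y)$ vanishes in the limit or when $W(y|x)=0$; the second possibility is ruled out along the subsequence by finiteness of $E_{n_k}$, and the first is absorbed by the convention $0\log 0=0$, after which the convergence is routine. A minor side issue is that the infimum defining $E_{n_k}$ need not be attained for any fixed $n_k$, which is why I work with $1/k$-approximate minimizers rather than exact ones.
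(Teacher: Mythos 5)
Your proof is correct and follows essentially the same compactness-and-continuity route as the paper's: extract a convergent subsequence of (approximate) minimizers, use continuity of $d$ and $q$ together with $P_n\to P$ to place the limit point in $\calT_{d,q,P}(\Delta)$, and conclude by convergence of the objective. You flesh out two details the paper leaves implicit --- the use of $1/k$-approximate minimizers to sidestep existence of exact minimizers, and the careful treatment of the KL term via the support-compatibility condition --- but the underlying argument is the same.
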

\begin{proof}
	See Appendix \ref{sec:pf_continuity}.
\end{proof}

\subsection{Proof of Ensemble Tightness (Lower Bound on $\bar{P}_\e^{(n)}$)}\label{sc: lower bound}

We proceed in two steps.

\vspace{0.2cm}
\noindent {\em \underline{Step 1: Lower bounding the conditional error probability }}
\vspace{0.2cm}

We shall use the de Caen's lower bound on the probability of a union \cite{de1997lower} of events $\{\calE_i\}_{i=1}^M$:
\begin{flalign}\label{eq: DeCaenInq-original}
\Pr\Bigg(\bigcup_{i=1}^{M} \calE_i
 \Bigg) \geq  \sum_{i=1}^{M}\frac{\left[\Pr(\calE_i)\right]^2}{\sum_{j=1}^M \Pr(\calE_i\cap\calE_j
 ) }.
\end{flalign}
Explicitly taking into account the case in which $\Pr(\calE_i)$ can be zero for some $i$ values, in which case $\Pr\big(\bigcup_{i=1}^{M} \calE_i
 \big)=\Pr\big(\bigcup_{i \,:\, \Pr(\calE_i)>0} \calE_i
 \big)$, we rewrite the de Caen bound as follows:
\begin{flalign}\label{eq: DeCaenInq_rewrite}
\Pr\Bigg(\bigcup_{i=1}^{M} \calE_i
 \Bigg) \geq  \sum_{ \substack{i=1, \\ \Pr(\calE_i)>0}}^{M}\frac{\left[\Pr(\calE_i)\right]^2}{\Pr(\calE_i)+\sum_{j \ne i}\Pr(\calE_i\cap\calE_j
 ) }.
\end{flalign}
Recalling \eqref{eq: avg err prob}-\eqref{eq:pem2}, and applying \eqref{eq: DeCaenInq_rewrite} to the events $\{\calE_i\}_{i=1}^{M_n}$ defined in \eqref{eq:E_i},
we obtain 
\begin{flalign}\label{eq: DeCaenInq}
&\Pr\Bigg(\bigcup_{\substack{i=1\\ i\neq m}}^{M_n} \calE_i
 \,\bigg|\, \bX_m = \bx_m,\bY=\by\Bigg) \nonumber \\
 \geq&  \sum_{\substack{i=1, i\neq m,\\ \Pr(\calE_i|\bx_m,\by)>0
 }}^{M_n}\frac{\left[\Pr(\calE_i|\bx_m,\by)\right]^2}{\Pr(\calE_i|\bx_m,\by)+\sum_{j\notin \{i,m\}}\Pr(\calE_i\cap\calE_j|\bx_m,\by) }.
\end{flalign}

We first lower bound $ \Pr(\calE_i | \bx_m,\by) $ using \eqref{eq:case_1 rev}
along with Lemmas \ref{lm: pairwise distribution of codewords} and \ref{lm: marginal distribution Xm lemma} to obtain
\begin{flalign}\label{eq: lower bound on Pr Cal E}
 &\Pr(\calE_i | \bx_m,\by) \nonumber \\
   &\geq (1-4\delta_n^2)\e^{-2\delta_n} \sum_{\substack{
    \bx' \in \calT(P_n) \,:\,  q^{}(\bx',\by)\geq q^{}(\bx_m,\by) \\d(\bx',\bx_m)> \Delta}}\frac{1}{|\calT(P_n)|} .
\end{flalign}
Next, we evaluate $\Pr(\calE_i\cap\calE_j|\bx_m,\by)$.
To this end we let $\calI_{d,\Delta}(\bx_i,\bx_j,\bx_m)$ denote the indicator of the event of $(\bx_i,\bx_j,\bx_m)$ mutually satisfying the pairwise $d$-distance constraints; i.e., 
\begin{flalign}\label{eq: indicator triple}
&\calI_{d,\Delta}(\bx_i,\bx_j,\bx_m) \nonumber\\
&=\indicator\{\min\{d(\bx_i,\bx_j),d(\bx_i,\bx_m),d(\bx_m,\bx_j)\}>\Delta\}.\end{flalign}
From Lemmas \ref{lm: triplets-wise distribution of codewords} and \ref{lm: marginal distribution Xm lemma}, we obtain
\begin{flalign}\label{eq: triple upper bound 3}
\Pr(\bx_i,\bx_j|\bx_m)&=\frac{\Pr(\bx_i,\bx_j,\bx_m)}{\Pr(\bx_m)}\\
&=\frac{\Pr(\bx_i,\bx_m,\bx_j)}{1/|\calT(P_n)|}\label{eq: app lemma 7}\\
&\le \frac{\calI_{d,\Delta}(\bx_i,\bx_j,\bx_m)}{(1-e^{-n\delta})^3|\calT(P_n)|^2}\label{eq: app lemma 8}.
\end{flalign}
Now, since $(\bX_i,\bX_j)-\bX_m-\bY$ forms a Markov chain,  
\begin{flalign}
&\Pr(\calE_i\cap\calE_j|\bx_m,\by) \nonumber \\
&=\sum_{\substack{\bx_i,\bx_j:\; q(\bx_i,\by)\geq q(\bx_m,\by),\\ q(\bx_j,\by)\geq q(\bx_m,\by)}}\Pr(\bx_i,\bx_j|\bx_m,\by) \\
&=\sum_{\substack{\bx_i,\bx_j:\; q(\bx_i,\by)\geq q(\bx_m,\by), \\ q(\bx_j,\by)\geq q(\bx_m,\by)}}\Pr(\bx_i,\bx_j|\bx_m)\\
&\dotleq \sum_{\substack{\bx_i,\bx_j \in \calT(P_n) \,:\, q(\bx_i,\by)\geq q(\bx_m,\by), \\q(\bx_j,\by)\geq q(\bx_m,\by)}}\frac{\calI_{d,\Delta}(\bx_i,\bx_j,\bx_m)}{|\calT(P_n)|^2}\label{eq: triple upper bound 2}\\
&\leq \sum_{\substack{\bx_i,\bx_j  \in \calT(P_n) \,:\\ q(\bx_i,\by)\geq q(\bx_m,\by), \\q(\bx_j,\by)\geq q(\bx_m,\by)}}\frac{
\indicator\{\min\{d(\bx_i,\bx_m),d(\bx_m,\bx_j)\}>\Delta\}
}{|\calT(P_n)|^2}\label{eq: triple indicator split}\\
&= \sum_{\bx_i  \in \calT(P_n) \,:\, q(\bx_i,\by)\geq q(\bx_m,\by)}\frac{\indicator\{d(\bx_i,\bx_m)>\Delta \}}{|\calT(P_n)|}\nonumber\\
&\times \sum_{\bx_j  \in \calT(P_n) \,:\,  q(\bx_j,\by)\geq q(\bx_m,\by)} \frac{\indicator\{d(\bx_j,\bx_m)>\Delta\}}{|\calT(P_n)|}\\
&\dotleq \Pr(\calE_i|\bx_m,\by)\Pr(\calE_j|\bx_m,\by)\label{eq: plug in caen}.
\end{flalign}
where \eqref{eq: triple upper bound 2} follows from \eqref{eq: app lemma 8}, \eqref{eq: triple indicator split} follows since by definition of $\calI_{d,\Delta}(\bx_i,\bx_j,\bx_m)$ (see \eqref{eq: indicator triple}), and \eqref{eq: plug in caen} follows from \eqref{eq: lower bound on Pr Cal E}.

Combining \eqref{eq: DeCaenInq} and \eqref{eq: plug in caen} yields
\begin{flalign}
&\Pr\Biggl(\bigcup_{\substack{i=1, \\ i\neq m}}^{M_n}\calE_i|\bx_m,\by\Biggr) \nonumber \\
&\dotgeq  \sum_{\substack{i=1, i\neq m, \\  \Pr(\calE_i|\bx_m,\by)>0}}^{M_n} \left[\Pr(\calE_i|\bx_m,\by)\right]^2 \bigg( \Pr(\calE_i|\bx_m,\by) \nonumber \\
    &\qquad\quad+\Pr(\calE_i|\bx_m,\by)\cdot \sum_{j\notin \{i,m\}}\Pr(\calE_j|\bx_m,\by) \bigg)^{-1} \\
&=  \sum_{\substack{i=1,\\ i\neq m}}^{M_n}\frac{\Pr(\calE_i|\bx_m,\by)}{1+ \sum_{j\notin \{i,m\}}\Pr(\calE_j|\bx_m,\by) },\label{eq: jdsagfkhghg}
\end{flalign}
 where \eqref{eq: jdsagfkhghg} follows since fixing $i$ we have that
if $\Pr(\calE_i|\bx_m,\by)=0$, then obviously the $i$-th summand on the r.h.s.\ of \eqref{eq: jdsagfkhghg} is equal to zero and therefore does not affect the summation, and if 
$\Pr(\calE_i|\bx_m,\by)>0$, this term can be cancelled out from both the numerator and denominator, in which case the $i$ summand on the l.h.s.\ of \eqref{eq: jdsagfkhghg} is equal to that of the r.h.s.\ of \eqref{eq: jdsagfkhghg}. 

Since \eqref{eq:case_5 rev} and \eqref{eq: lower bound on Pr Cal E} imply that 
\begin{flalign}
\Pr(\calE_i|\bx_m,\by) \doteq \sum_{\substack{
    \bx' \,:\,  q^{}(\bx'\by)\geq q^{}(\bx_m,\by) \\d(\bx',\bx_m)> \Delta}}\frac{1}{|\calT(P_n)|},\label{eq: doteq fhdkh}
\end{flalign}
letting $\tilde{p}(\bx_m,\by)$ denote the right-hand side of \eqref{eq: doteq fhdkh}, we obtain
\begin{flalign}
    \Pr\Biggl(\bigcup_{\substack{i=1, \\ i\neq m}}^{M_n}\calE_i|\bx_m,\by\Biggr) & \dotgeq
    (M_n-1)\cdot \frac{\tilde{p}(\bx_m,\by)}{1+(M_n-2)\tilde{p}(\bx_m,\by)}\\
    &\geq 
     \frac{(M_n-1)\tilde{p}(\bx_m,\by)}{1+(M_n-1)\tilde{p}(\bx_m,\by)}\\
    &\geq  \frac{1}{2}\min\{1,(M_n-1)\tilde{p}(\bx_m,\by)\},
\end{flalign}
where the last step follows from the inequality $\frac{x}{1+x}
\geq  \frac{1}{2}\min\{1,x\}$, which holds for all $x\geq 0$.

Averaging over $(\bx_m,\by)$ via Lemma \ref{lm: marginal distribution Xm lemma}, and substituting the definition of $\tilde{p}(\bx_m,\by)$, we obtain the lower bound 
\begin{flalign}\label{eq: first summand final final}
    \bar P_{e,m}^{(n)}&\dotgeq  \sum_{\bx \in \calT(P_n),\by }\frac{1}{|\calT(P_n)|}W^n(\by|\bx) \nonumber\\
    &\times \min\Bigg\{1, (M_n-1)\sum_{\substack{\bx' \in \calT(P_n) \,:\, q^{}(\bx',\by)\geq q^{}(\bx,\by)\\ d(\bx',\bx)> \Delta }} \frac{1}{|\calT(P_n)|} \Bigg\}.
\end{flalign}

\vspace{0.2cm}
\noindent {\em \underline{Step 2: Deducing the error exponent }}
\vspace{0.2cm}

\noindent Applying a similar argument to that used in deriving (\ref{eq: TF1}), we obtain from \eqref{eq: first summand final final} that
\begin{multline}\label{eq: standard method of type 2}
    \bar{P}_\e^{(n)} \dotgeq \exp \Big(-n\min_{V\in\calT^{(n)}_{d,q}(\Delta)}D(V_{Y|X}\|W|P_n) \\ +\big|I(\widetilde{X};Y,X)-R\big|_+\Big)
 \end{multline} 
 where 
\begin{flalign}\label{eq: cal T definition star n}
   &\calT_{d,q,P}^{(n)}(\Delta)\triangleq \Bigl\{
   V_{X\widetilde{X}Y}\in\mathcal{P}_n(\calX\times\calX\times\calY) \,:\,\nonumber \\ &V_X=V_{\widetilde{X}}=P_n, q(V_{\widetilde{X}Y})\geq q(V_{XY}),\,d(P_{X\widetilde{X}})\geq \Delta
    \Bigr\}.
\end{flalign}
Note that this exponent differs from $E_{\rm RGV}(R,P,W,\phi,d, \Delta)$ only in that the minimization is performed over empirical distributions rather than the probability simplex. 
The following lemma concludes the proof of ensemble tightness; this is the only part of the analysis where the assumption of additive $q$ is used.
\begin{lemma}\label{lm: empirical simplex lemma}
    Given $P\in \mathcal{P}(\calX)$ and its corresponding type $P_n\in \mathcal{P}_n(\calX)$, under any $d \in \Omega$ and additive and bounded metric $q$, we have for any $\epsilon>0$ and sufficiently large $n$ that
    \begin{flalign}\label{eq: standard method of type 2 lemma}
        &\min_{V\in\calT^{(n)}_{d,q}(\Delta)}D(V_{Y|X}\|W|P_n)+\big|I(\widetilde{X};Y,X)-R\big|_+ \nonumber\\
        &\qquad \leq E_{\mathrm{RGV}}(R,P,W,q,d,\Delta+\epsilon) + \epsilon.
     \end{flalign}
\end{lemma}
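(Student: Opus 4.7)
My plan is to take a minimizer $V^*$ of the RHS problem (defining $E_{\mathrm{RGV}}(R,P,W,q,d,\Delta+\epsilon)$) and construct from it an $n$-type $V^{(n)} \in \calT^{(n)}_{d,q}(\Delta)$ whose objective value (with $P_n$ in the KL term) is within $\epsilon$ of $V^*$'s objective (with $P$) for all sufficiently large $n$. The construction has three steps: (i) perturb $V^*$ to obtain $V^{**}$ satisfying the $q$-inequality strictly while changing the objective by at most $\epsilon/2$; (ii) approximate $V^{**}$ by an $n$-type $V^{(n)}$ with $X$- and $\widetilde{X}$-marginals both equal to $P_n$; (iii) invoke continuity to check that $V^{(n)}$ is feasible for the LHS and that its objective is within $\epsilon/2$ of that of $V^{**}$. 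The $\Delta \to \Delta + \epsilon$ slack gives the $d$-constraint room to absorb both perturbations, and the $+\epsilon$ slack in the objective absorbs the continuity losses.

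The critical step, and the only one using additivity of $q$, is (i). Under additive $q$, the gap $L(V) := q(V_{\widetilde{X}Y}) - q(V_{XY}) = \EE_V[q(\widetilde{X},Y) - q(X,Y)]$ is linear in $V$. A brief linear-algebra argument yields a dichotomy: either $q(x,y) = f(y) + g(x)$ is decomposable, in which case $L \equiv 0$ on the slice $\{V : V_X = V_{\widetilde{X}} = P\}$ and the $q$-constraint is trivially satisfied by every $V^{(n)}$ with marginals $P_n$ (so no perturbation is needed); or there exists an auxiliary $V_0$ with $V_{0,X} = V_{0,\widetilde{X}} = P$ and $L(V_0) > 0$. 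In the latter case set $V^{**} = (1-\lambda)V^* + \lambda V_0$, which by linearity satisfies $L(V^{**}) = \lambda L(V_0) > 0$ strictly. Continuity of $d$, of $D(\cdot\|W|P)$, and of the mutual information allows $\lambda$ to be chosen small enough that $d(V^{**}_{X\widetilde{X}}) \geq \Delta + \epsilon/2$ and the objective at $V^{**}$ is within $\epsilon/2$ of that at $V^*$, while $V^{**}_X = V^{**}_{\widetilde{X}} = P$ by construction.

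Step (ii) is a standard type-approximation rounding: since $P_n$ shares the support of $P$ and is within $1/n$ of it in each coordinate, one can build $V^{(n)} \in \calP_n(\calX \times \calX \times \calY)$ with $V^{(n)}_X = V^{(n)}_{\widetilde{X}} = P_n$ and $\|V^{(n)} - V^{**}\|_1 \to 0$. Step (iii) then follows from continuity: $d(V^{(n)}_{X\widetilde{X}}) \geq \Delta$ and $L(V^{(n)}) \geq 0$ for all large $n$, so $V^{(n)} \in \calT^{(n)}_{d,q}(\Delta)$; and continuity of $D$ and $I$, combined with Lemma~\ref{lem:continuity} to swap $P_n$ for $P$ in the divergence, shows that the objective at $(V^{(n)}, P_n)$ is within $\epsilon/2$ of the objective at $(V^{**}, P)$. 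Combining with step (i) yields the total slack of $\epsilon$.

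The main obstacle is step (i): without additivity of $q$, the functional $L$ is nonlinear in $V$, and a convex-combination perturbation need not strictly increase $L$. This is precisely why the authors' surrounding discussion flags non-additive $q$ as producing only a weaker tightness statement via the stricter constraint \eqref{eq: cal T I definition tag}.
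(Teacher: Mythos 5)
Your proposal is correct and takes a genuinely different route from the paper's proof. The paper proves the approximation in two stages: it first fixes the joint type $V_{X\widetilde X}\in\calP_n(\calX^2)$ and shows the inner minimization over conditional types $V_{Y|X\widetilde X}\in\calP_n(\calY|V_{X\widetilde X})$ can be approximated by one over $\calP(\calY|\calX^2)$; the additivity of $q$ is used there to design a \emph{direction-aware rounding} of $\widetilde V_{Y|X\widetilde X}$ that rounds up precisely the mass on the $(x,\bar x,y)$-cells with the largest values of $q(\bar x,y)-q(x,y)$, which guarantees the rounded type satisfies $\Psi(\cdot)\ge 0$ whenever the unrounded distribution does. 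The outer approximation (replacing $\calP_n(\calX^2)$ by $\calP(\calX^2)$) is handled separately via a convex combination with a constraint-maximizing conditional $V_{Y|X\widetilde X,n}^{\max}$, and the $\Delta\to\Delta+\epsilon$ slack absorbs the rounding of $V^*_{X\widetilde X}$. Your proof instead does a \emph{single-stage} approximation of the global minimizer, and localizes the use of additivity in a different structural fact: linearity of $L(V)=\EE_V[q(\widetilde X,Y)-q(X,Y)]$ yields a dichotomy (either $q(x,y)=f(y)+g(x)$, in which case $L$ vanishes identically on every slice $\{V_X=V_{\widetilde X}\}$, or there is a $V_0$ with $L(V_0)>0$), which lets you perturb $V^*$ into the strict interior of the $q$-constraint before rounding. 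This arguably isolates why additivity matters more transparently than the paper's rounding scheme does, and your closing remark that the obstruction for non-additive $q$ is exactly the nonlinearity of $L$ is the right diagnosis.

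A few details to tighten. First, $L(V^{**})=(1-\lambda)L(V^*)+\lambda L(V_0)\ge \lambda L(V_0)>0$, not an equality; the conclusion stands. Second, you need the auxiliary $V_0$ to satisfy the support condition $W(y|x)=0\implies V_{0,Y|X}(y|x)=0$, otherwise $D(V^{**}_{Y|X}\|W|P)=\infty$ and the continuity claim fails; this is the same issue the paper handles through its constraint \eqref{eq: artificial constraint}, and one should check the dichotomy still produces such a $V_0$ (restricting to the support-compatible sub-slice works). Third, in the decomposable case you assert $L\equiv 0$ on $\{V_X=V_{\widetilde X}=P_n\}$ directly; this follows since $L(V)=\EE_{V_{\widetilde X}}[g]-\EE_{V_X}[g]=0$ whenever the two marginals agree, independent of what they agree to, so the claim is clean but deserves the one-line justification. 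Finally, Lemma \ref{lem:continuity} concerns continuity of the optimized exponent in $P$, not of the objective; what you actually need (and tacitly use) is continuity of $D(V_{Y|X}\|W|P_n)$ and $I_V(\widetilde X;X,Y)$ in $(V,P_n)$ subject to the support restriction, which is direct.
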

 \noindent The proof of Lemma \ref{lm: empirical simplex lemma} is given in Appendix \ref{ap: empirical simplex lemma appendix}.

\section{Dual Expression and Continuous Alphabets} \label{sc:dual}

In this section, we show that in the case that the distance function $d$ and decoding metric $q$ are additive, the RGV exponent of Theorem \ref{th: main theorem sphere packing} permits an equivalent dual expression obtained using Lagrange duality. Moreover, we explain how it can be derived directly using cost-constrained coding \cite{Gallager68,ScarlettMartinezGuilleniFabregas_mismatch_2014_IT,ScarlettPengMerhavMartinezGuilleniFabregas_mismatch_2014_IT}, without resorting to constant-composition coding.  This approach extends directly to memoryless channels with infinite or continuous alphabets under mild technical assumptions, namely, that all auxiliary cost functions involved have a finite mean with respect to $P$.

\subsection{Dual expression}

We begin by stating the dual form of the RGV exponent and rate condition in Theorem \ref{th: main theorem sphere packing}.  As mentioned above, we focus on additive distances of the form \eqref{eq:additive_d}, and additive decoding metrics of the form \eqref{eq:q_additive}

\begin{theorem} \label{th:Lagrange}
    Under the setup of Theorem \ref{th: main theorem sphere packing} with an additive distance function $d$ and additive decoding metric $q$, the error exponent \eqref{eq: E_ex dfn} can be written as
    \begin{flalign}
        &E_{\mathrm{RGV}}(R,P,W,q,d, \Delta) \nonumber \\ 
        &= \sup_{\rho\in[0,1],r\ge0,s\ge0,a(\cdot)}-\sum_{x}P(x)\log\sum_{y}W(y|x) \nonumber \\
        &~\times\bigg(\frac{\sum_{x'}P(x')e^{sq(x',y)}e^{a(x')}e^{r(d(x,x')-\Delta)}}{e^{sq(x,y)}e^{a(x)}}\bigg)^{\rho}-\rho R, \label{eq:E_dual}
    \end{flalign}
    and rate condition \eqref{eq: frist eq} can be written as
    \begin{flalign}
        &R \le \sup_{\substack{ r\ge0, a(\cdot)}} -\sum_{x}P(x) \nonumber\\
        & \quad\times \log\sum_{x'}P(x')e^{a(x')-\phi_a}e^{-r(d(x,x')-\Delta)}-2\delta, \label{eq:R_dual}
    \end{flalign}
    where $\phi_a = \EE_P[a(X)]$.
\end{theorem}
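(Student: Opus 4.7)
The plan is to derive both \eqref{eq:E_dual} and \eqref{eq:R_dual} from the primal expressions they dualize via Lagrange duality. Both primals are convex minimizations over a compact set of probability distributions on a finite alphabet subject to finitely many linear equality and inequality constraints, so strong duality holds whenever Slater's condition is met -- and this is the case whenever $\Delta$ is strictly below its maximum feasible value (boundary cases are handled by a standard continuity argument). The template is the same as the one used in \cite{ScarlettMartinezGuilleniFabregas_mismatch_2014_IT,ScarlettPengMerhavMartinezGuilleniFabregas_mismatch_2014_IT} for the random-coding and expurgated mismatched exponents; what is new here is the simultaneous presence of the $q$- and $d$-constraints together with the marginal constraint on $\widetilde X$.

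I would warm up by dispatching the rate condition \eqref{eq:R_dual}. The primal is $\min I(X;\widetilde X)$ over $V_{X\widetilde X}$ satisfying $V_X=V_{\widetilde X}=P$ and $\EE_V[d(X,\widetilde X)]\le\Delta$. Factorizing $V(x,\widetilde x)=P(x)V(\widetilde x|x)$, keeping $V_X=P$ hard, and attaching a multiplier $r\ge 0$ to the $d$-inequality and a function $a(\cdot)$ to the marginal constraint $V_{\widetilde X}=P$, the inner problem decouples into $|\calX|$ independent ``KL-to-$P(\widetilde x)$ plus linear functional'' minimizations in $V(\widetilde x|x)$, each solved in closed form by a tilted exponential whose normalizer gives the log-partition-function value. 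The centering $\phi_a=\EE_P[a(X)]$ arises naturally when the constant Lagrangian offset $\sum_{\widetilde x}a(\widetilde x)P(\widetilde x)$ is folded back into the outer expectation over $V_X=P$; the invariance of the dual under additive shifts of $a(\cdot)$ serves as a sanity check. Taking $\sup_{r\ge 0,a}$ yields \eqref{eq:R_dual}.

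For \eqref{eq:E_dual} the same template applies with two extra ingredients. First I use the identity $|t|_+=\max_{\rho\in[0,1]}\rho t$ and swap min and max via Sion's minimax theorem, justified because the feasible set of $V_{X\widetilde X Y}$ is compact convex and the objective $D(V_{Y|X}\|W|P)+\rho[I(\widetilde X;XY)-R]$ is convex in $V$ and affine in $\rho$. Then I Lagrangify the three constraints of $\calT_{d,q,P}(\Delta)$ with multipliers $s\ge 0$, $r\ge 0$, and $a(\cdot)$, keeping $V_X=P$ hard, and factorize $V(x,\widetilde x,y)=P(x)V(\widetilde x,y|x)$ so that the unconstrained minimization splits into $|\calX|$ problems, one per $x$.

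The main obstacle is this per-$x$ inner minimization, because the marginal $V_{Y|X}(y|x)=\sum_{\widetilde x}V(\widetilde x,y|x)$ is coupled to the joint: the former appears inside the KL divergence while the latter appears inside the mutual information. The standard Gallager-style resolution is a log-sum-exp / Donsker--Varadhan computation which generates the $\rho$-th power structure of \eqref{eq:E_dual} and reduces the inner minimum to
\[
    -\log\sum_{y}W(y|x)\bigg(\frac{\sum_{x'}P(x')e^{sq(x',y)}e^{a(x')}e^{r(d(x,x')-\Delta)}}{e^{sq(x,y)}e^{a(x)}}\bigg)^{\rho},
\]
where the numerator collects the three tilts attached to the $\widetilde X$-side of the Lagrangian constraints and the denominator the reference-codeword tilts on the $V_X$-side of the $q$-inequality and the $a$-multiplier. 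Averaging over $x\sim P$ and taking $\sup_{\rho,r,s,a}$ then produces \eqref{eq:E_dual}; as in the rate-condition derivation, the multiplicative centering by $e^{a(x)}$ in the denominator is the manifestation of the $\rho$-rescaling of the $\phi_a$ offset that appears once the inner minimum is evaluated and the outer expectation over $x\sim P$ is applied. Keeping track of which tilts land in the numerator versus the denominator is the most delicate step, but it is determined unambiguously by the signs in the Lagrangian.
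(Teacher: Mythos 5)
Your proposal arrives at the correct dual expression, but the ordering of the two dualization steps is genuinely different from the paper's, and the difference hides a pair of subtleties worth spelling out. The paper (following the template of \cite{ScarlettMartinezGuilleniFabregas_mismatch_2014_IT}) first \emph{holds $V_{XY}=P_{XY}$ fixed} and Lagrangifies only the $\widetilde X$-side constraints, obtaining the single-letter function $g_{s,r,a}(x,y)$ with no $\rho$ present; only then does it invoke $|z|_+=\max_{\rho\in[0,1]}\rho z$ and minimax and perform the outer minimization over $V_{XY}$. You instead introduce $\rho$ and swap via Sion first, and then Lagrangify the three $\calT_{d,q,P}(\Delta)$-constraints jointly. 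This works, but because $\rho$ is now alive inside the inner per-$x$ problem, the nested KL computation you sketch actually produces tilts of the form $\e^{-\ell(\widetilde x,y)/\rho}$, so the raw Lagrange multipliers land as $s/\rho$, $r/\rho$, $a/\rho$ and a reparametrization is needed to match \eqref{eq:E_dual}; you gesture at this (``manifestation of the $\rho$-rescaling'') but do not carry it through. A second point you should make explicit: for the per-$x$ decomposition to hold after softening $V_{\widetilde X}=P$, the mutual information $I(\widetilde X;XY)$ has to be rewritten as $D(V_{X\widetilde X Y}\|P\times V_{XY})$ \emph{on the constraint surface, before} the marginal constraint is Lagrangified. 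Using the native form $D(V\|V_{\widetilde X}\times V_{XY})$ would drag the global marginal $V_{\widetilde X}$ into every per-$x$ subproblem and break the decoupling. The paper's ordering sidesteps both issues because in its Step~1 both $P$ and $P_{XY}$ in the reference measure are fixed, so the objective is a clean KL to a constant. Your ``main obstacle'' (the coupling of $V_{Y|X}$ between the channel KL and the mutual information) is real, and your nested log-sum-exp resolution is the right idea, but it is stated rather than derived -- including the verification that the inner minimizer over $V_{\widetilde X|XY}$ gives the log-partition of the $\rho$-rescaled tilt and that the subsequent minimization over $V_{Y|X}$ generates the $\rho$-th power. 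What your ordering buys is that it is self-contained and does not require first recognizing $I(\widetilde X;XY)=D(V\|P\times P_{XY})$; what the paper's ordering buys is that the Lagrange multipliers come out already in the form appearing in \eqref{eq:E_dual} with no rescaling. Your treatment of the rate condition \eqref{eq:R_dual} matches the paper's (it is the LM-rate duality with $\widetilde X$ in place of $Y$) and is fine.
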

\begin{proof}
    The proof uses Lagrange duality analogously to the corresponding statements for the random coding and expurgated exponents \cite{ScarlettMartinezGuilleniFabregas_mismatch_2014_IT,ScarlettPengMerhavMartinezGuilleniFabregas_mismatch_2014_IT}; see Appendix \ref{sc:Lagrange}.
\end{proof}

The expression in \eqref{eq:E_dual} bears a strong resemblance to the mismatched random coding exponent for constant-composition coding \cite{ScarlettMartinezGuilleniFabregas2012AllertonSU}; in fact, the only difference is the presence of additional term $e^{r(d(x,x')-\Delta)}$.

The proof of Theorem \ref{th:Lagrange} does not use the symmetry of $d$, and hence the equivalence holds even for non-symmetric $d$ as per Corollary \ref{cor:asymm}.  The direct derivation below, however, does require a symmetric distance function, but one can still infer the achievability of the exponent for non-symmetric choices via the symmetrization argument used in Corollary \ref{cor:asymm}.

\subsection{Direct derivation via cost-constrained coding} \label{sec:direct}

One way of understanding \eqref{eq:E_dual} is by noting that it is the exponent that one obtains upon applying Gallager-type bounding techniques, e.g., Markov's inequality and $\min\{1,\alpha\} \le \min_{\rho\in[0,1]} \alpha^\rho$, to the asymptotic multi-letter random coding union bound expression in \eqref{eq:rcu-like} for constant-composition coding.  To our knowledge, the ``dual analysis'' of constant-composition random coding was initiated by Poltyrev \cite{PoltyrevRandom}.

The preceding approach permits continuous channel outputs, but requires discrete inputs.  It turns out, however, that we can attain an analog of \eqref{eq:rcu-like} for a cost-constrained coding scheme in which the input may also be continuous.  In this section, we describe the changes needed in the code construction and analysis for this purpose.  To simplify the presentation, we still use summations to denote averaging, but these can directly be replaced by integrals in continuous-alphabet settings.  A disadvantage of  this approach is that it is difficult to claim ensemble tightness; we provide only achievability results.

\subsubsection{Code construction}
Fix an input distribution $P$ and four auxiliary costs $a_{1}(x),\dotsc,a_{4}(x)$. Let $P^{n}$ be the $n$-fold product of $P$, let $a_{j}(\bx) = \frac{1}{n}\sum_{k=1}^n a_j(x_k)$ be the normalized additive extension of $a_j$, and define the cost-constrained distribution 
\begin{equation}
    P_{\bX}(\bx)=\frac{1}{\mu}P^{n}(\bx)\indicator\Big\{ \big| a_{j}(\bx) - \phi_{j}\big| \le \epsilon, ~~j=1,2,3,4 \Big\},\label{eq:cost}
\end{equation}
where $P^n(\bx) = \prod_{k=1}^n P(x_k)$, $\phi_j = \EE_P[a_j(X)]$, $\epsilon > 0$ is a parameter, and $\mu$ is a normalizing constant.  Note that the functions $a_j$ represent {\em auxiliary costs} that are intentionally introduced to improve the performance (in terms of the error exponent) of the random-coding ensemble.  One can incorporate a {\em system cost} (e.g., a power constraint) in exactly the same way to ensure a per-codeword constraint of the form $\frac{1}{n} \sum_{k=1}^n c(x_k) \le \Gamma$ for some cost function $c$ and threshold $\Gamma$; in such cases (which are crucial for continuous-alphabet settings), all of the subsequent analysis remains unchanged as long as $P$ is chosen to satisfy $\EE_P[c(X)] < \Gamma$. 

By definition, $P_{\bX}$ is i.i.d.~conditioned on each $a_j$ being close to its mean.  We observe that $\mu$ is the probability (under $P^n$) of the event in the indicator function of \eqref{eq:cost} occurring, and we immediately obtain
\begin{equation}
    \lim_{n\to\infty} \mu = 1
\end{equation}
by the law of large numbers.

With the definition of $P_{\bX}$ in place, we recursively generate the codewords in a similar manner to Section \ref{sc: Coding Scheme}:
\begin{gather}
    \Pr(\bx_{1})=P_{\bX}(\bx_{1}) \label{eq:p_cost_1} \\
    \Pr(\bx_{2}|\bx_1)=\frac{1}{\mu_{2}(\bx_1)}P_{\bX}(\bx_{2})\indicator\big\{ d(\bx_{1},\bx_{2})>\Delta\big\}\\
    \vdots \nonumber
\end{gather}
\vspace*{-2ex}
\begin{multline}
    \Pr(\bx_{m}|\bx_1^{m-1})
    =\frac{1}{\mu_{m}(\bx_1^{m-1})}P_{\bX}(\bx_{m}) \\ \times \indicator\big\{ d(\bx_{i},\bx_{m})>\Delta\,\,\,\forall i<m\big\}, \label{eq:p_cost_m} 
\end{multline}
where each $\mu_m(\cdot)$ is a normalizing constant depending on all of the previous codewords.  Note that in the case of continuous alphabets, each probability $\Pr(\bx_i \,|,\ \cdot)$ should be replaced by a conditional density function $f(\bx_i \,|\, \cdot)$.

We proceed by describing the analysis in two steps. To avoid repetition, we omit certain parts of the analysis that are the same as the constant-composition case.

\subsubsection{Key properties} 

Similarly to the constant-composition case, we seek to arrive at an upper bound of the form 
\begin{flalign}
&\hspace{-2mm}\bar{P}_\e^{(n)}\dotleq \sum_{\bx,\by} P_{\bX}(\bx) W^n(\by|\bx) \nonumber 
\\
&\times \min\Biggl\{1,(M_n-1) \sum_{\substack{\bx' \,:\, q^{}(\bx',\by)\geq q^{}(\bx,\by)\\ d(\bx',\bx)\geq \Delta }}  P_{\bX}(\bx') \Biggr\} \label{eq:rcu-like-2}
\end{flalign}
that holds under the rate condition \eqref{eq:R_dual}.  Towards establishing this bound, we prove the following four important properties:
\begin{enumerate}[~\em (a)]
	\item For any $\bx$ such that $P_{\bX}(\bx) > 0$, we have under $\bX' \sim P_{\bX}$ that
	\begin{multline}
		-\frac{1}{n}\log\Pr\big(d(\bx,\bX')\le\Delta\big) 
		\ge \sup_{r\ge0,a(\cdot)}-\sum_{x}P(x) \\
		\times \log\sum_{x'}P(x')e^{a(x')-\phi_a}e^{-r(d(x,x')-\Delta)}-\delta,\label{eq:final0}
	\end{multline}
	thus matching the rate condition in \eqref{eq:R_dual}.
    \item The normalizing constants in \eqref{eq:p_cost_1}--\eqref{eq:p_cost_m} satisfy $\mu_{m}(\bX_1^{m-1}) \ge 1-\e^{-n\delta}$ almost surely under the rate condition \eqref{eq:R_dual}, for any choice of $\delta > 0$.
    \item The marginal distribution of any given codeword (indexed by $m$) satisfies $\Pr(\bx_m) \doteq P_{\bX}(\bx_m)$.
    \item The marginal joint distribution of any two codewords (indexed by $k$ and $m$) satisfies $\Pr(\bx_k,\bx_m) \dotleq P_{\bX}(\bx_k)P_{\bX}(\bx_m)\indicator\{ d(\bx_k,\bx_m) > \Delta \}.$
\end{enumerate}
As we outline below, the first two properties are used as stepping stones to obtaining the final two.  Once the final two properties are established, then a near-identical analysis to that of \eqref{eq:E_i}--\eqref{eq:final_rcu} yields \eqref{eq:rcu-like-2}.

To establish the first property \eqref{eq:final0}, we bound the probability therein for fixed $\bx$:
\begin{align}
    &\Pr\big(d(\bx,\bX')\le\Delta\big) \nonumber \\
    & =\sum_{\bx'}P_{\bX}(\bx')\indicator\big\{ d(\bx,\bx')\le \Delta\big\} \label{eq:Markov_d}\\
    & \le\sum_{\bx'}P_{\bX}(\bx')\e^{-nr(d(\bx,\bx')-\Delta)} \label{eq:Markov_d2}\\
    & \le\sum_{\bx'}P_{\bX}(\bx')\e^{-nr(d(\bx,\bx')-\Delta)}\e^{n(a_{1}(\bx')-\phi_{1} + \epsilon)} \label{eq:Markov_d3}\\
    & \le\sum_{\bx'}P^{n}(\bx')\e^{-nr(d(\bx,\bx')-\Delta)}\e^{n(a_{1}(\bx')-\phi_{1} + 2\epsilon)}, \label{eq:Markov_d4}
\end{align}
where \eqref{eq:Markov_d2} uses Markov inequality with an arbitrary parameter $r > 0$, \eqref{eq:Markov_d3} uses the fact that $a_{1}(\bx') \ge \phi_{1} - \epsilon$ by construction, and \eqref{eq:Markov_d4} holds for sufficiently large $n$ because $\mu\to1$ in (\ref{eq:cost}). Taking the logarithm and applying Gallager's single-letterization argument \cite{Gallager68}, we get 
\begin{flalign}
    &-\log\Pr\big(d(\bx,\bX')\le\Delta\big) \nonumber \\
    &~~\ge-\sum_{k=1}^{n}\log\sum_{x'}P(x')\e^{-r(d(x_{k},x')-\Delta)}\e^{a_{1}(x')-\phi_{1}} - 2n\epsilon. \label{wq:blah}
\end{flalign}
We now choose $a_{2}(x)=-\log\sum_{x'}P(x')\e^{r(d(x,x')-\Delta)}\e^{a_{1}(x')-\phi_{1}}$,
which ensures that the leading term on the right-hand side of (\ref{wq:blah}) is equal to
$n a_{2}^{n}(\bx)$.  Hence, substituting the definition $\phi_{2}=\EE_{P}[a_{2}(X)]$ and using $a_{2}(\bx) \ge \phi_{2} - \epsilon$ by construction, we obtain
\begin{flalign}
    &-\frac{1}{n}\log\Pr\big(d(\bx,\bX')\le\Delta\big) \nonumber \\
    &\,\ge -\sum_{x}P(x)\log\sum_{x'}P(x')\e^{-r(d(x,x')-\Delta)}\e^{a_{1}(x')-\phi_{1}} - 3\epsilon.\label{eq:final}
\end{flalign}
Choosing $\epsilon = \frac{\delta}{3}$ and optimizing $r$ and $a_1(\cdot)$, we obtain \eqref{eq:final0}, thus completing the proof of the first property above.

The second property above follows easily from the first: Letting $\bX' \sim P_{\bX}$, we have $\mu_{m}(\bx_1^{m-1})=\Pr(d(\bx_{i},\bX')>\Delta,\,\forall i < m)$, and the union bound gives 
\begin{align}
    1-\mu_{m}(\bx_1^{m-1})
    &\le \sum_{i < m}\Pr\big(d(\bx_{i},\bX')\le \Delta\big)\label{eq:1-mu} \\
    &\le e^{nR_n} \Pr\big(d(\bx_{i},\bX')\le \Delta\big)\\
    &\le e^{-n\delta}, \label{eq:second_cond_3}
\end{align}
where \eqref{eq:second_cond_3} follows from \eqref{eq:final0} and the rate condition \eqref{eq:R_dual}. 

Upper bounding the indicator functions in \eqref{eq:p_cost_1}--\eqref{eq:p_cost_m} by one gives $\Pr(\bx_m) \dotleq P_{\bX}(\bx_m)$, thus proving one direction of the dot-equality in the third property above.  The other direction requires more effort, and is deferred to Appendix \ref{app:cost_lb}.

For the forth property above, we use \eqref{eq:p_cost_m} and the fact that $\mu_m(\bx_1^{m-1}) \ge 1-e^{-n\delta}$ to obtain
\begin{flalign}
&\Pr(\bx_k,\bx_m) \nonumber\\
&\,= \sum_{\bx_1^{k-1}, \bx_{k+1}^{m-1}} \Pr(\bx_1^{k-1}) \Pr(\bx_k|\bx_1^{k-1})\nonumber\\
&\quad\quad\times \Pr(\bx_{k+1}^{m-1}|\bx_1^k) \Pr(\bx_{m}|\bx_{1}^{m-1}) \\
&\,\le \sum_{\bx_1^{k-1}, \bx_{k+1}^{m-1}} \Pr(\bx_1^{k-1}) \cdot \frac{P_{\bX}(\bx_k)}{1-e^{-n\delta}} \cdot \Pr(\bx_{k+1}^{m-1}|\bx_1^k) \nonumber\\
&\quad \quad \times  \frac{P_{\bX}(\bx_m) \indicator\{ d(\bx_k,\bx_m) > \Delta \}}{1-e^{-n\delta}} \\
&\,=  \frac{1}{(1-e^{-n\delta})^2} P_{\bX}(\bx_k) P_{\bX}(\bx_m)\indicator\{ d(\bx_k,\bx_m) > \Delta \}.
\end{flalign}

\subsubsection{Upper-bounding the multi-letter upper bound}

Once \eqref{eq:rcu-like-2} is established, the steps in deriving \eqref{eq:E_dual} are standard.  Such an analysis requires two additional auxiliary costs, and these are given by $a_3$ and $a_4$ in \eqref{eq:cost}. In particular, we set $a_3(x) = a(x)$ in \eqref{eq:E_dual} and
\begin{flalign}
&a_4(x) = -\log\sum_{y}W(y|x)\nonumber\\
&\times\bigg(\frac{\sum_{x'}P(x')e^{sq(x',y)}e^{a(x')}e^{r(d(x,x')-\Delta)}}{e^{sq(x,y)}e^{a(x)}}\bigg)^{\rho}.
\end{flalign}

In fact, removing the constraint $d(\bx,\bx')>\Delta$ from the pairwise error probability term in \eqref{eq:rcu-like-2} recovers the standard random-coding union bound, which was already used in \cite{ScarlettMartinezGuilleniFabregas_mismatch_2014_IT} to establish the exponent in \eqref{eq:E_dual} without the term $\e^{r(d(x,x') - \Delta)}$.  Hence, the change in the analysis compared to \cite{ScarlettMartinezGuilleniFabregas_mismatch_2014_IT} only amounts to an application of the inequality $\indicator\{ d(\bx,\bx') \ge \Delta \} \le \e^{nr(d(\bx,\bx') - \Delta)}$, similarly to \eqref{eq:Markov_d}.  Due to this similarity, the details are omitted.

\section{Optimal Distance Functions}\label{sc: Optimal Choices for d}

In this section, we study optimal choices for the distance function $d(\cdot,\cdot)$ in Theorem \ref{th: main theorem sphere packing}, thus characterizing how the codewords should be separated in order to get the best possible exponent for our construction. While some of the analysis in this section includes the constant $\delta>0$, the best exponent will always be obtained in the limit as $\delta \to 0$.

\subsection{Reduction to the Csisz\'ar-K\"orner Exponent - Negative Mutual Information Distance}
We show that when the distance function $d(\cdot,\cdot)$ is optimized, and $\Delta$ is chosen appropriately, the exponent in Theorem \ref{th: main theorem sphere packing} recovers the exponent $E_{q}(R,P,W)$ in (\ref{eq: E_ex dfn C and K}) \cite{CsiszarKorner81graph}.

\begin{corollary} \label{cor:ck_exponent}
     Let $\epsilon>0$ be given, let $q(\cdot)$ be an arbitrary type-dependent continuous decoding rule, and let $R$, $P$, and $d \in \Omega$ be given. The exponent of the ensemble average error probability of the generalized RGV construction with sufficiently small $\delta$,  $d(P_{X\widetilde{X}})=-I(X;\widetilde{X})$, $\Delta=-(R+2\delta)$,  sufficiently large $n$, and decoding metric $q(\cdot)$ over the DMC $W$ is at least as high as $E_{q}(R,P,W)-\epsilon$.

   \end{corollary}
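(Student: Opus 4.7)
The plan is to specialize Theorem \ref{th: main theorem sphere packing} with $d(P_{X\widetilde X})=-I(X;\widetilde X)$ and $\Delta=-(R+2\delta)$, and to use a compactness/continuity argument to recover $E_q(R,P,W)$ up to an $\epsilon$ slack. First I would observe that $-I(X;\widetilde X)$ is symmetric, bounded in $[-\log|\calX|,0]$, and continuous in the joint type, so $d\in\Omega$. Next I would check the rate condition \eqref{eq: frist eq}: the constraint $d(P_{X\widetilde X})\le\Delta$ translates to $I(X;\widetilde X)\ge R+2\delta$, and either this set is empty (making the bound vacuously $+\infty$) or by continuity of mutual information one can exhibit a joint distribution with marginals $P$ attaining $I(X;\widetilde X)=R+2\delta$ exactly, so the infimum equals $R+2\delta$ and the condition reduces to $R\le R$.

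Under the same substitution, the constraint $d(V_{X\widetilde X})\ge\Delta$ in \eqref{eq: cal T alpha definition} is equivalent to $I_V(X;\widetilde X)\le R+2\delta$, so $\calT_{d,q,P}(\Delta)$ is exactly $\calT_I$ from \eqref{eq: cal T I definition} with the bound $R$ relaxed to $R+2\delta$. This relaxation contains $\calT_I$, so minimizing over it produces something no larger than $E_q$, i.e.\ $E_{\mathrm{RGV}}(R,P,W,q,d,\Delta)\le E_q(R,P,W)$ is automatic. The real task is the matching lower bound up to $\epsilon$.

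For that lower bound I would argue by contradiction. Suppose a sequence $\delta_k\downarrow 0$ admits minimizers $V_k$ in the $2\delta_k$-relaxed set whose objective $D(V_{k,Y|X}\|W|P)+|I_{V_k}(\widetilde X;Y,X)-R|_+$ stays strictly below $E_q(R,P,W)-\epsilon$. Each minimizer has finite divergence and so lies in the closed subset of $\mathcal{P}(\calX\times\calX\times\calY)$ on which $V_{Y|X}(y|x)=0$ whenever $W(y|x)=0$. This subset is compact, so along a subsequence $V_k\to V^*$. Continuity of mutual information, of $q$, and of marginalization passes every constraint to the limit, yielding $I_{V^*}(X;\widetilde X)\le R$ and hence $V^*\in\calT_I$. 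On the chosen closed subdomain both $D(\cdot\|W|P)$ and $|\cdot|_+$ are continuous, so the limit of the objective values equals the objective at $V^*$, which is therefore at most $E_q-\epsilon$, contradicting the definition of $E_q$. Theorem \ref{th: main theorem sphere packing} then yields the corollary for sufficiently large $n$. The main delicacy is the potential jump of $D(V_{Y|X}\|W|P)$ to $+\infty$ at the boundary of its effective domain; restricting attention to the closed subdomain where $D$ is finite, which automatically contains all minimizers, renders $D$ continuous without losing compactness and makes the Bolzano--Weierstrass step go through.
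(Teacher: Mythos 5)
Your proposal is correct and follows essentially the same route as the paper: specialize Theorem~\ref{th: main theorem sphere packing} with $d=-I$ and $\Delta=-(R+2\delta)$, verify the rate condition, observe that $\calT_{d,q,P}(\Delta)$ becomes the Csisz\'ar--K\"orner set $\calT_I$ with the mutual-information constraint slightly relaxed by $O(\delta)$, and then argue that the slack vanishes as $\delta\to 0$. The one place you diverge is the last step: the paper simply invokes continuity of $E_q(R,P,W)$ in $R$, citing \cite{CsiszarKorner81graph}, whereas you supply a self-contained Bolzano--Weierstrass argument by contradiction, restricting to the compact subdomain where $W(y|x)=0\implies V_{Y|X}(y|x)=0$ so that $D(V_{Y|X}\|W|P)$ becomes continuous. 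That extra care is exactly right and is the kind of argument that underlies the cited continuity fact; it makes the proof more self-contained at the cost of length. Two minor cosmetic points: the ``upper bound'' paragraph ($E_{\mathrm{RGV}}\le E_q$) is unnecessary for the corollary and can be dropped, and you do not actually need the infimum in \eqref{eq: frist eq} to equal $R+2\delta$ exactly---the paper only establishes $\ge R+2\delta$, which is all the rate condition requires. (Incidentally, your relaxed constraint $I_V(X;\widetilde X)\le R+2\delta$ is the one that follows directly from the substitution; the paper writes $R+3\delta$ in \eqref{eq:set_TI}, which appears to be a harmless slack or typo, immaterial as $\delta\to 0$.)
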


\begin{proof}
We claim that the choices $d(P_{X\widetilde{X}})=-I(X;\widetilde{X})$ and $\Delta=-(R+2\delta)$ are valid for all $R$ in the sense of satisfying the rate condition \eqref{eq: frist eq}.  To see this, note that
\begin{flalign}
&\min_{\substack{P_{X\widetilde{X}} \,:\, d(P_{X\widetilde{X}}) \leq\Delta\\ P_X=P_{\widetilde{X}}=P}} I(X;\widetilde{X})\bigg|_{\substack{d(P_{X\widetilde{X}})=-I(X;\widetilde{X})\\ \Delta=-(R+2\delta)}} \nonumber  \\
&\quad =\min_{\substack{P_{X\widetilde{X}} \,:\, I(X;\widetilde{X})\geq R+2\delta
\\ P_X=P_{\widetilde{X}}=P}} I(X;\widetilde{X})\\
&\quad \geq R+2\delta,
\end{flalign}
as required.
Now, under the same choices, we have
    \begin{flalign}\label{eq: rhs close to EqR}
        & E_{\mathrm{RGV}}(R,P,W,q,d, \Delta)\Big|_{\substack{d(P_{X\widetilde{X}})=-I(X;\widetilde{X}),\,  \Delta=-(R+2\delta)}}\nonumber\\
        &\quad = \min_{V\in\calT_{I,\delta} }D(V_{Y|X}\|W|P)+\big|I_{}(\widetilde{X};Y,X)-R\big|_+,
     \end{flalign}
    where
    \begin{flalign}
        \calT_{I,\delta} &\triangleq \Big\{\begin{array}{l}
        V_{X\widetilde{X}Y} \in \mathcal{P}(\calX\times\calX\times\calY) \,:\, V_X=V_{\widetilde{X}}=P,\\
        q(V_{\widetilde{X}Y})\geq q(V_{XY}),
         I_{}(\widetilde{X};X)\leq R+3\delta
         \end{array}\Big\}. \label{eq:set_TI}
    \end{flalign}
    The result follows by taking $\delta \to 0$ and using the continuity of $E_{q}(R,P,W)$ in $R$ \cite{CsiszarKorner81graph}. 
 \end{proof}

The following proposition reveals that the above choice of $(d,\Delta)$ is the one that maximizes the general exponent given in Theorem \ref{th: main theorem sphere packing}.

\begin{proposition}\label{th: alpha optimality}
    Under the setup of Theorem \ref{th: main theorem sphere packing} with 
    \beq\label{eq: 45 again}
    R\leq \min_{\substack{P_{X\widetilde{X}} \,:\, P_X=P_{\widetilde{X}}=P\\d(P_{X\widetilde{X}}) \leq \Delta}} I(X;\widetilde{X})-2\delta,
    \eeq
    we have
    \begin{flalign}
    &E_{\mathrm{RGV}}(R,P,W,q,d, \Delta) \nonumber \\
    &\leq E_{\mathrm{RGV}}(R,P,W,q,d, \Delta)\Big|_{d=-I(X;\widetilde{X}) , \;\Delta=-(R+2\delta)}.
    \end{flalign}
\end{proposition}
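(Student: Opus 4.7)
The strategy is to observe that both sides of the asserted inequality minimize the same objective $D(V_{Y|X}\|W|P) + |I(\widetilde{X};Y,X)-R|_+$ over distributions with marginals $V_X = V_{\widetilde{X}} = P$ and the metric constraint $q(V_{\widetilde{X}Y}) \geq q(V_{XY})$; only the distance constraint differs. Hence it is enough to prove the set inclusion
\[
\calT_{d', q, P}(\Delta') \,\subseteq\, \calT_{d, q, P}(\Delta),
\]
where $d'(P_{X\widetilde{X}}) = -I(X;\widetilde{X})$ and $\Delta' = -(R+2\delta)$. Since a minimum over a superset can only be smaller, this inclusion delivers the proposition immediately.

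To prove the inclusion, I would fix $V_{X\widetilde{X}Y}$ in the left-hand set, so $V_X = V_{\widetilde{X}} = P$, $q(V_{\widetilde{X}Y}) \geq q(V_{XY})$, and $I_V(X;\widetilde{X}) \leq R+2\delta$, and then show $d(V_{X\widetilde{X}}) \geq \Delta$. The key tool is the contrapositive of the rate hypothesis \eqref{eq: 45 again}: every joint distribution on $\calX^2$ with marginals $P$ and $I < R+2\delta$ satisfies $d > \Delta$. When $I_V < R+2\delta$ this conclusion is immediate, so the only case requiring further work is the boundary $I_V = R+2\delta$.

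For the boundary case, I would use the convex interpolation $V^t_{X\widetilde{X}} = (1-t)\, V_{X\widetilde{X}} + t\,(P \times P)$ for $t \in [0,1]$, which preserves both marginals. Since the mutual information is convex in the joint distribution for fixed marginals and $I(P \times P) = 0$,
\[
I(V^t_{X\widetilde{X}}) \,\leq\, (1-t)\, I_V(X;\widetilde{X}) \,=\, (1-t)(R+2\delta) \,<\, R+2\delta
\]
for every $t \in (0,1]$, where the strict inequality uses $R + 2\delta > 0$ (ensured by $\delta > 0$ and $R \geq 0$). Applying the contrapositive of the rate hypothesis to each $V^t_{X\widetilde{X}}$ yields $d(V^t_{X\widetilde{X}}) > \Delta$ for all $t \in (0,1]$, and sending $t \to 0$ while invoking the continuity of $d$ on the probability simplex (which is part of the definition of $\Omega$) gives $d(V_{X\widetilde{X}}) \geq \Delta$ in the limit, as required.

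The main obstacle is exactly this boundary case $I_V = R+2\delta$: the contrapositive of the rate hypothesis yields only the strict inequality $d > \Delta$ under a strict inequality on $I$, so the desired $d \geq \Delta$ at the boundary is not immediate. Bridging this gap is what forces one to use both the continuity of $d$ and the convex perturbation toward the product distribution $P \times P$; once the inclusion is in hand, the monotonicity of minima over nested sets closes the proof with no further calculation.
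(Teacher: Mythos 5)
Your proof is correct and follows the same essential route as the paper: use the contrapositive of the rate condition \eqref{eq: 45 again} to derive the set inclusion $\calT_{d',q,P}(\Delta')\subseteq\calT_{d,q,P}(\Delta)$ (with $d'=-I$, $\Delta'=-(R+2\delta)$), and then invoke monotonicity of minima over nested sets. Where you go beyond the paper is in treating the boundary case $I_V(X;\widetilde{X})=R+2\delta$. The paper asserts $\calT_{d,q,P}(\Delta)\supseteq\calT_{I,\delta}$ directly from the contrapositive, but the contrapositive only yields $I_V<R+2\delta\implies d(V_{X\widetilde{X}})>\Delta$, so the non-strict constraint $I_V\le R+2\delta$ in $\calT_{d',q,P}(\Delta')$ is not literally covered; the paper glosses over this (and moreover its set $\calT_{I,\delta}$ is written with the constraint $I\le R+3\delta$, which is even looser and cannot follow from the contrapositive, presumably a typo for $R+2\delta$). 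Your perturbation toward $P\times P$, combined with convexity of $D(\cdot\,\|\,P\times P)$ in its first argument and the continuity of $d$ guaranteed by $d\in\Omega$, is a clean and correct way to close this boundary gap; an equivalent phrasing is that $\{V:I_V\le R+2\delta\}$ lies in the closure of $\{V:I_V<R+2\delta\}$ (non-empty since $I(P\times P)=0<R+2\delta$), and $\{V:d(V_{X\widetilde{X}})\ge\Delta\}$ is closed by continuity of $d$. This is a genuine refinement of the argument, not merely a cosmetic difference, though it does not change the high-level structure of the proof.
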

\begin{proof}
    From \eqref{eq: 45 again}, we see that among all $P'_{X\widetilde{X}}$ such that  $P'_X=P'_{\widetilde{X}}=P$, the condition $d(P'_{X\widetilde{X}}) \leq \Delta$ implies $R+2\delta\leq I_{P'}(X;\widetilde{X})$. The contrapositive statement is that among all $P_{X\widetilde{X}}'$   such that  $P'_X=P'_{\widetilde{X}}=P$, the condition $R+2\delta > I_{P'}(X;\widetilde{X})$ implies $d(P_{X\widetilde{X}}') > \Delta$.
    As a result, when \eqref{eq: 45 again} holds, $\calT_{d,q,P}(\Delta)$ defined in (\ref{eq: cal T alpha definition}) satisfies 
    \begin{flalign}
        \calT_{d,q,P}(\Delta)\supseteq\calT_{I,\delta},
    \end{flalign}
    where $\calT_{I,\delta}$ is defined in \eqref{eq:set_TI}.
    Therefore,
    \begin{flalign}\label{eq: optimal choice proposition}
       &\hspace{-1mm} E_{\mathrm{RGV}}(R,P,W,q,d, \Delta) \nonumber\\
        &= \min_{V\in\calT_{d,q,P}(\Delta) }D(V_{Y|X}\|W|P)+\big|I_{}(\widetilde{X};Y,X)-R\big|_+\\
        &\leq ~\min_{V\in\calT_{I,\delta} }~~D(V_{Y|X}\|W|P)+\big|I_{}(\widetilde{X};Y,X)-R\big|_+,
    \end{flalign}
    so the exponent is upper bounded by that corresponding to $d(P_{X\widetilde{X}})= -I(X;\widetilde{X})$ and $\Delta=-(R+2\delta)$.    
\end{proof}

We note that the choice $d(P_{X\widetilde{X}})=-I(X;\widetilde{X})$ is {\em universally optimal} in maximizing the achievable exponent in Theorem \ref{th: main theorem sphere packing} (subject to \eqref{eq: frist eq}), in the sense that it has no dependence on the channel, decoding rule, or input distribution. This provides an interesting analogy with the decoding rule $q(P_{XY}) = I(X;Y)$, which is known to be universally optimal for achieving the regular random-coding exponent; however, it remains an open problem as to whether such a choice also attains the expurgated exponent \cite{CsiszarKorner81graph}.

\subsection{A Non-Universal Optimal Distance Function} \label{sec:non_univ}

In this subsection, we show that the non-universal distance function $d(P_{X\tilde X})=\beta_{R,W,q}(P_{X\widetilde{X}})$ also achieves the exponent of Csisz\'ar and K\"orner, where
\begin{flalign}\label{eq: beta dfn}
     \beta_{R,W,q}(P_{X\widetilde{X}})\triangleq & \min_{V_{X\widetilde{X}Y}\in\calT'(P_{X\widetilde{X}}) }\Gamma(V_{X\tilde X Y}),
\end{flalign}
with\footnote{The dependence of $\Gamma$ on $(R,W,q)$ is left implicit to lighten notation.}
\beq\label{eq: gamma function definition}
\Gamma(V_{X\tilde X Y}) \triangleq D(V_{Y|X}\|W|V_X)+\big|I_{}(\widetilde{X};Y,X)-R\big|_+,
\eeq
and 
\begin{flalign}\label{eq: calT prime dfn}
    \calT'(P_{X\widetilde{X}}) \triangleq  \Bigl\{   V_{X\widetilde{X}Y}\in\mathcal{P}(\calX\times\calX\times\calY) \,:\, \nonumber \\
    V_{X\widetilde{X}}=P_{X\widetilde{X}}, q(V_{\widetilde{X}Y})\geq q(V_{XY}) \Bigr\}.
\end{flalign}
We first provide a corollary characterizing the exponent of Theorem \ref{th: main theorem sphere packing} with $d(\cdot) =  \beta_{R,W,q}(\cdot)$, and then prove its equivalence to (\ref{eq: E_ex dfn C and K}).

\begin{corollary}\label{cr: R Delta region}
    If the pair $(R,\Delta)$ satisfies the condition 
    \begin{flalign}\label{eq: frist eq2}
    R\leq \min_{\substack{P_{X\widetilde{X}} \,:\, P_X=P_{\widetilde{X}}=P\\\beta_{R,W,q}(P_{X\widetilde{X}}) \leq\Delta}} I(X;\widetilde{X})-2\delta,
    \end{flalign}
    then the ensemble average error probability $\bar{P}_\e^{(n)}$ of the RGV codebook construction with parameters $(n,R,P,\beta_{R,W,q}, \Delta,\delta)$ using the continuous type-dependent decoding rule $q(\cdot)$ over the channel $W$ satisfies
    \begin{flalign}
    \bar{P}_\e^{(n)}&\dotleq  %
    \e^{-n\Delta}.
    \end{flalign}
    \end{corollary}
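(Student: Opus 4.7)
The plan is to reduce the Corollary directly to Theorem~\ref{th: main theorem sphere packing} applied with the specific choice $d(\cdot) = \beta_{R,W,q}(\cdot)$, and then show that the resulting exponent $E_{\mathrm{RGV}}(R,P,W,q,\beta_{R,W,q},\Delta)$ is lower bounded by $\Delta$. The hypothesis \eqref{eq: frist eq2} is precisely the rate condition \eqref{eq: frist eq} with the substitution $d=\beta_{R,W,q}$, so Theorem~\ref{th: main theorem sphere packing} applies verbatim and yields
\begin{equation*}
\bar{P}_{\e}^{(n)} \dotleq \e^{-n E_{\mathrm{RGV}}(R,P,W,q,\beta_{R,W,q},\Delta)}.
\end{equation*}
Thus it suffices to establish $E_{\mathrm{RGV}}(R,P,W,q,\beta_{R,W,q},\Delta) \ge \Delta$.

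The key observation is that every distribution $V_{X\widetilde X Y}$ that is feasible in the minimization defining $E_{\mathrm{RGV}}$ (namely $V \in \calT_{\beta_{R,W,q},q,P}(\Delta)$) is automatically feasible in the inner minimization defining $\beta_{R,W,q}(V_{X\widetilde X})$. Indeed, such a $V$ satisfies $V_X = V_{\widetilde X} = P$ and $q(V_{\widetilde X Y}) \ge q(V_{XY})$, which are exactly the defining conditions of $\calT'(V_{X\widetilde X})$ in \eqref{eq: calT prime dfn} (with the marginal constraint trivially satisfied). Consequently, by the definition of $\beta_{R,W,q}$ in \eqref{eq: beta dfn} as a minimum, the objective $\Gamma(V_{X\widetilde X Y})$ of the RGV minimization is at least $\beta_{R,W,q}(V_{X\widetilde X})$, which in turn is at least $\Delta$ by the constraint $\beta_{R,W,q}(V_{X\widetilde X}) \ge \Delta$ defining $\calT_{\beta_{R,W,q},q,P}(\Delta)$.

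Since $V_X = P$, the objective $\Gamma(V_{X\widetilde X Y}) = D(V_{Y|X}\|W|V_X) + |I(\widetilde X; Y,X)-R|_+$ coincides with the RGV objective $D(V_{Y|X}\|W|P) + |I(\widetilde X; Y,X)-R|_+$. Therefore every feasible $V$ contributes at least $\Delta$ to the minimization, so $E_{\mathrm{RGV}}(R,P,W,q,\beta_{R,W,q},\Delta) \ge \Delta$, and the conclusion follows.

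I do not expect any real obstacle: the entire proof is essentially a tautological unfolding of the definition of $\beta_{R,W,q}$ as a minimum over a set that contains $V$ itself whenever $V$ is RGV-feasible. The only subtlety worth flagging explicitly in the write-up is verifying that the rate hypothesis \eqref{eq: frist eq2} of the Corollary matches exactly the hypothesis \eqref{eq: frist eq} of Theorem~\ref{th: main theorem sphere packing} under the substitution $d = \beta_{R,W,q}$, so that the theorem can be invoked without modification.
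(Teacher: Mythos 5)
Your proof is correct and follows essentially the same route as the paper's: both hinge on the observation that any $V$ feasible in $\calT_{\beta_{R,W,q},q,P}(\Delta)$ lies in $\calT'(V_{X\widetilde X})$, so $\Gamma(V)\ge\beta_{R,W,q}(V_{X\widetilde X})\ge\Delta$. The paper packages this as a two-stage re-writing of the minimization defining $E_{\mathrm{RGV}}$, whereas you give the equivalent pointwise lower bound on each feasible objective value; the content is identical.
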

\begin{proof}
    First observe that the minimization in $E_{\mathrm{RGV}}(R,P,W,q, d, \Delta)$ (see (\ref{eq: E_ex dfn})) can be done in two stages: Minimize first over $P_{X\widetilde{X}}$, and then over $V_{X\widetilde{X}Y}$ that are consistent with $P_{X\widetilde{X}}$.  By doing so, we obtain
    \begin{flalign}
        &E_{\mathrm{RGV}}(R,P,W,q,d, \Delta) \nonumber \\
        &= \min_{\substack{P_{X\widetilde{X}} \,:\, P_X=P_{\widetilde{X}}=P\\d(P_{X\widetilde{X}})\geq \Delta}}\, \min_{V_{X\widetilde{X}Y}\in\calT'(P_{X\widetilde{X}}) }D(V_{Y|X}\|W|P) \nonumber \\
        &\hspace*{4cm} +\big|I_{}(\widetilde{X};Y,X)-R\big|_+,
    \label{eq:beta_eq_rgv}
    \end{flalign}
    where $\calT'(P_{X\widetilde{X}}) $ is defined in (\ref{eq: calT prime dfn}). 
    From the definition of $ \beta_{R,W,q}(P_{X\widetilde{X}})$ (\ref{eq: beta dfn}), we can rewrite \eqref{eq:beta_eq_rgv} as
    \begin{flalign}
        E_{\mathrm{RGV}}(R,P,W,q,d, \Delta)&= \min_{\substack{P_{X\widetilde{X}} \,:\, P_X=P_{\widetilde{X}}=P\\d(P_{X\widetilde{X}})\geq \Delta}}\beta_{R,W,q}(P_{X\widetilde{X}}).
    \end{flalign}
    Hence, by the choice $d(\cdot) =\beta_{R,W,q}(\cdot)$ we obtain
    \begin{flalign}
        E_{\mathrm{RGV}}(R,P,W,q,d, \Delta)&= \min_{\substack{P_{X\widetilde{X}} \,:\, P_X=P_{\widetilde{X}}=P\\d(P_{X\widetilde{X}})\geq \Delta}} d(P_{X\widetilde{X}}) \\
        &\geq\Delta.
    \end{flalign}
    Combined with (\ref{eq: frist eq}), this yields that for a pair $(R,\Delta)$ that satisfies (\ref{eq: frist eq2}), we have $\bar{P}_\e^{(n)}\dotleq\e^{-n\Delta}$. 
\end{proof}

Note that while the preceding proof gives an exponent of $\Delta$, one cannot make $\Delta$ arbitrarily large, because past a certain point the condition \eqref{eq: frist eq2} will never be satisfied.

The following proposition shows that error exponents corresponding to Corollaries \ref{cor:ck_exponent} and \ref{cr: R Delta region} are identical, and hence, both are optimal when \eqref{eq: frist eq} holds.

\begin{proposition}
    For any $P \in \mathcal{P}(\calX)$, the achievable rate-exponent pairs $(R,E)$ resulting from Theorem \ref{th: main theorem sphere packing} (i.e., taking the union over all $\delta > 0$ and $\Delta > 0$) are identical for the choices $d(P_{X\widetilde{X}})=-I(X;\widetilde{X})$ and  $d(P_{X\widetilde{X}})=\beta_{R,W,q}(P_{X\widetilde{X}})$.
\end{proposition}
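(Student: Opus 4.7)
The plan is to show that both distance functions yield the same rate-exponent region, which I claim (after letting $\delta\to 0$) equals $\{(R,E):E\le E_q(R,P,W)\}$ with $R$ ranging over the admissible rates of Theorem~\ref{th: main theorem sphere packing}. The crucial ingredient, already implicit in the two-stage decomposition~\eqref{eq:beta_eq_rgv} used to prove Corollary~\ref{cr: R Delta region}, is the identity
\begin{equation}
E_q(R,P,W)=\min_{\substack{P_{X\widetilde X}:\,P_X=P_{\widetilde X}=P\\ I(X;\widetilde X)\le R}}\beta_{R,W,q}(P_{X\widetilde X}),\label{eq:plan_id}
\end{equation}
obtained by splitting the minimization in~\eqref{eq: E_ex dfn C and K} along the $V_{X\widetilde X}$-marginal.

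The inclusion ``$(R,E)$ achievable via $d=\beta_{R,W,q}\Rightarrow$ achievable via $d=-I(X;\widetilde X)$'' is immediate from Proposition~\ref{th: alpha optimality}: for every $(\Delta,\delta)$ satisfying~\eqref{eq: frist eq2} (which coincides with~\eqref{eq: 45 again} for $d=\beta_{R,W,q}$), the RGV exponent under $\beta_{R,W,q}$ is upper bounded by the one under $d=-I(X;\widetilde X)$ at $\Delta=-(R+2\delta)$. For the reverse inclusion, I would fix any $E<E_q(R,P,W)$, which together with Corollary~\ref{cor:ck_exponent} exhausts the set of pairs achievable via $d=-I$. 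I would then apply Corollary~\ref{cr: R Delta region} with $d=\beta_{R,W,q}$ and $\Delta=E$, which delivers exponent $E$ provided the rate condition~\eqref{eq: frist eq2} holds. To verify it, use~\eqref{eq:plan_id}: any $P_{X\widetilde X}$ in the feasible set $\{P_X=P_{\widetilde X}=P,\ \beta_{R,W,q}(P_{X\widetilde X})\le E\}$ satisfies $I(X;\widetilde X)>R$, for otherwise~\eqref{eq:plan_id} would force $E_q(R,P,W)\le E$, contradicting $E<E_q(R,P,W)$.

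The main obstacle is to upgrade this pointwise strict inequality to a uniform gap $\eta>0$ with $I(X;\widetilde X)\ge R+\eta$ throughout the feasible set, so that the choice $\delta<\eta/2$ secures~\eqref{eq: frist eq2}. This follows from a compactness/semicontinuity argument: $\beta_{R,W,q}$ is upper semicontinuous in $P_{X\widetilde X}$, since it is the minimum of the continuous function $\Gamma$ over the set $\calT'(P_{X\widetilde X})$, and a direct subsequence argument shows that if $P_{X\widetilde X}^{(k)}\to P_{X\widetilde X}^{*}$ with $\beta_{R,W,q}(P_{X\widetilde X}^{(k)})\le E$ for all $k$, then along a further subsequence the associated minimizers $V^{(k)}\in\calT'(P_{X\widetilde X}^{(k)})$ converge to some $V^{*}\in\calT'(P_{X\widetilde X}^{*})$ with $\Gamma(V^{*})\le E$, witnessing $\beta_{R,W,q}(P_{X\widetilde X}^{*})\le E$. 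Hence the sub-level set is closed and therefore compact in the simplex, the continuous function $I(X;\widetilde X)$ attains its minimum on it, and that minimum is strictly greater than $R$, giving the desired $\eta>0$; the degenerate case of an empty sub-level set makes~\eqref{eq: frist eq2} hold vacuously and needs no separate treatment.
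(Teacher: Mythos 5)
Your proof is correct, and it takes a somewhat different route from the paper's. The paper works at the level of the full joint distribution $V_{X\widetilde{X}Y}$: it rewrites the rate condition for $d=\beta_{R,W,q}$ as $R<\min_{V\in\calV\,:\,\Gamma(V)\le\Delta}I_V(X;\widetilde{X})$ and shows by a contrapositive/complementation argument that this is logically equivalent to $\Delta<\min_{V\in\calV\,:\,I_V(X;\widetilde{X})\le R}\Gamma(V)=E_q(R,P,W)$, then takes $\delta\to 0$. You instead stay at the level of the pair marginal $P_{X\widetilde{X}}$: you use the decomposition identity $E_q(R,P,W)=\min_{P_{X\widetilde X}:\,I\le R}\beta_{R,W,q}(P_{X\widetilde X})$ directly and verify the rate condition at $\Delta=E<E_q$ by showing the set $\{\beta_{R,W,q}\le E\}$ lies inside $\{I>R\}$, with a compactness argument supplying the uniform margin $\eta$ needed to accommodate the $2\delta$ slack. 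Both proofs hinge on the same core observation (that $\beta\le E$ and $I\le R$ cannot coexist when $E<E_q$), and both implicitly or explicitly need attainment of the relevant minima to justify a positive $\delta$; yours is more explicit about that point, which is a small rigor gain at the cost of a bit more machinery.

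One caveat on terminology, which does not invalidate the substance: you describe $\Gamma$ as \emph{continuous} and $\beta_{R,W,q}$ as \emph{upper semicontinuous}, but $\Gamma$ contains a KL divergence term and can equal $+\infty$, so it is only \emph{lower} semicontinuous; and the property of $\beta_{R,W,q}$ that you actually use --- that the sub-level set $\{\beta_{R,W,q}\le E\}$ is closed --- is precisely lower semicontinuity, not upper semicontinuity. Fortunately your direct subsequence argument establishes exactly this (the step $\Gamma(V^*)\le\liminf_k\Gamma(V^{(k)})\le E$ only requires $\Gamma$ to be l.s.c.), so the proof goes through; just relabel the semicontinuity.
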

\begin{proof}
    Consider the exponent in Corollary \ref{cor:ck_exponent} for $d(P_{X\widetilde{X}})=-I(X;\widetilde{X})$.  For fixed $R$, the highest possible exponent $E$ is obtained by choosing $\Delta$ such that \eqref{eq: frist eq} holds with equality, and then taking $\delta \to 0$ to obtain the achievable pair
    \begin{flalign}\label{eq: R E I}
        &(R,E) \nonumber \\
        &= \left(R , \min_{
        V_{X\widetilde{X}Y}\in \calT_I
         }D(V_{Y|X}\|W|P)+|I(\widetilde{X};Y,X)-R|_+ \right),
    \end{flalign}
     where
    \begin{flalign}\label{eq: cal T definition}
        &\calT_I \triangleq \Bigl\{
        V_{X\widetilde{X}Y}\in\mathcal{P}(\calX\times\calX\times\calY) \,:\, \nonumber \\
        &V_X=V_{\widetilde{X}}=P, q(P_{\widetilde{X},Y})\geq q(P_{X,Y}), I(X;\widetilde{X})\leq R \Bigr\}.
    \end{flalign}
    
      Next, Corollary \ref{cr: R Delta region} states that $\Delta$ is an achievable exponent at rate $R$ for $d(P_{X\widetilde{X}})=\beta_{R,W,q}(P_{X\widetilde{X}})$ provided that
    \begin{flalign} \label{eq:R_alt}
        R<\min_{P_{X\widetilde{X}} \,:\, P_X = P_{\widetilde{X}} = P, \beta_{R,W,q}(P_{X\widetilde{X}})\leq \Delta}I(X;\widetilde{X}).
    \end{flalign}
The condition $\beta_{R,W,q}(P_{X\widetilde{X}})\leq \Delta$ is equivalent to:
    \begin{align}
        &\beta_{R,W,q}(P_{X\widetilde{X}})\leq \Delta \nonumber \\
        &\iff  \min_{V_{X\widetilde{X}Y} \,:\, q(P_{\widetilde{X},Y})\ge q(P_{X,Y}), V_{X\widetilde{X}}=P_{X\widetilde{X}}}\Gamma(V_{X\widetilde{X}Y})\leq \Delta\\
        &\iff \Gamma(V_{X\widetilde{X}Y})\leq \Delta \text{ for some } V_{X\widetilde{X}Y} \nonumber \\
        & \quad\qquad \text{s.t. } q(P_{\widetilde{X},Y})\ge q(P_{X,Y}), V_{X\widetilde{X}}=P_{X\widetilde{X}}.
    \end{align}
Using this, we can rewrite the right-hand side of \eqref{eq:R_alt} as
    \begin{align}
        &\min_{P_{X\widetilde{X}} \,:\, P_X = P_{\widetilde{X}} = P, \beta_{R,W,q}(P_{X\widetilde{X}})\leq \Delta}I(X;\widetilde{X}) \nonumber \\ 
            &= \min_{\substack{P_{X\widetilde{X}} \,:\, P_X = P_{\widetilde{X}} = P, \\ 
            \Gamma(V_{X\widetilde{X}Y})\leq \Delta \text{ for some } V_{X\widetilde{X}Y} \,:\,
             \substack{q(P_{\widetilde{X},Y})\ge q(P_{X,Y}), \\ V_{X\widetilde{X}}=P_{X\widetilde{X}}} }} I(X;\widetilde{X}) \\
            &= \min_{P_{X\widetilde{X}} \,:\, P_X = P_{\widetilde{X}}=P}\,\, \min_{V_{X\widetilde{X}Y} \,:\, q(P_{\widetilde{X},Y})\ge q(P_{X,Y})}\nonumber\\
                &\quad \quad \begin{cases}
                    I_{P}(X;\widetilde{X}) & \Gamma(V_{X\widetilde{X}Y}) \le \Delta \text{ and } V_{X\widetilde{X}}=P_{X\widetilde{X}} \\
                    \infty & \text{otherwise}
                \end{cases} \\
           &= \min_{\substack{V_{X\widetilde{X}Y} \,:\, q(P_{\widetilde{X},Y})\ge q(P_{X,Y}), \\ \Gamma(V_{X\widetilde{X}Y})\leq \Delta,V_{X}=V_{\widetilde{X}}=P } }I_V(X;\widetilde{X}), \label{eq:iauegiaefk}
    \end{align}
    where the last step uses the fact that $I_{P}(X;\widetilde{X}) = I_{V}(X;\widetilde{X})$ whenever $V_{X\widetilde{X}}=P_{X\widetilde{X}}$.
    From \eqref{eq:iauegiaefk}, it follows that \eqref{eq:R_alt} can be written as 
    \begin{equation}
        R < \min_{V_{X\widetilde{X}Y} \in \calV \,:\,\Gamma(V_{X\widetilde{X}Y})\leq \Delta} I_V(X;\widetilde{X}), \label{eq:equiv1}
    \end{equation}
    where $\calV = \{ V_{X\widetilde{X}Y} \,:\, q(P_{\widetilde{X},Y})\ge q(P_{X,Y}), V_{X}=V_{\widetilde{X}}=P \}$.  We claim that \eqref{eq:equiv1} is equivalent to
    \begin{flalign}
        \Delta < \min_{V_{X\widetilde{X}Y} \in \calV \,:\,  I_V(X;\widetilde{X})\le R}\Gamma(V_{X\widetilde{X}Y}). \label{eq:equiv2}
    \end{flalign}
    To see this, we show that \eqref{eq:equiv1} implies \eqref{eq:equiv2}, and that the complement of \eqref{eq:equiv1} implies the complement of \eqref{eq:equiv2}:
    \begin{itemize}
        \item First suppose that \eqref{eq:equiv1} holds.  This means that within $\calV$ we have $\Gamma(V_{X\widetilde{X}Y}) \le \Delta \implies R < I_V(X;\widetilde{X})$, and the contrapositive statement is that within $\calV$ we have $R \ge I_V(X;\widetilde{X}) \implies \Gamma(V_{X\widetilde{X}Y}) > \Delta$, which implies \eqref{eq:equiv2}.
        \item Now suppose that \eqref{eq:equiv1} fails.  This means that there exists $V \in \calV$ such that $\Gamma(V_{X\widetilde{X}Y}) \le \Delta$ and $R \ge I_V(X;\widetilde{X})$, which implies that \eqref{eq:equiv2} fails.
    \end{itemize}
    Finally, we note that the right-hand side of \eqref{eq:equiv2} is precisely $E_{\mathrm{RGV}}(R,P,W,q,d, \Delta)|_{d=-I(X;\widetilde{X})}$ (see (\ref{eq: E_ex dfn})), and we recall that $\Delta$ equals the achievable exponent for $d(P_{X\widetilde{X}})=\beta_{R,W,q}(P_{X\widetilde{X}})$.  Thus, \eqref{eq:equiv2} states that given $R$, this exponent can be made arbitrarily close to $E_q(R,P,W)$ in \eqref{eq: E_ex dfn C and K}.  Since the latter is optimal by Proposition \ref{th: alpha optimality}, the proof is complete.
\end{proof}

\subsection{Bhattacharyya and Chernoff Distances} \label{sc:Bhat}

Here we show that an additive distance function with per-letter distance
\begin{equation}
    d_s(x,x') = -\log\sum_{y} W(y|x)\bigg(\frac{\e^{q(x',y)}}{\e^{q(x,y)}}\bigg)^s, \label{eq:chernoff_dist}
\end{equation}
for suitably-chosen $s > 0$ also recovers the maximum of the random coding and expurgated exponents.  We call this the {\em Chernoff distance}, because it is closely related to the Chernoff bound for bounding a probability of the event of the form $\{q^{}(\bX',\by) \ge q(\bx,\by)\}$.  In the case of ML decoding $q(x,y) = \log W(y|x)$, we choose $s = \frac{1}{2}$, and hence $d_s$ reduces to the Bhattacharrya distance, which is symmetric.  For general decoding metrics, we may require $s \ne \frac{1}{2}$, and thus $d_s$ is not symmetric; however, the RGV exponent is still achievable according to Corollary \ref{cor:asymm}.

We note that since we are considering bounded metrics, the distance $d_s(x,x')$ is also bounded, in accordance with Definition \ref{def:distance}.  However, this may rule out certain choices such as $q(x,y) = \log W(y|x)$ for channels with zero-probability transitions, in which we wish to assign the value $q(x,y) = -\infty$ when $W(y|x) = 0$. 

We will show that the additive distance $d_s$ recovers both the random coding and expurgated exponents for mismatched decoding \cite{CsiszarKorner81graph,ScarlettPengMerhavMartinezGuilleniFabregas_mismatch_2014_IT}.  This implies the {\em near-optimality} of $d_s$, in the sense that no examples are known for which $E_q(R,P,W)$ is strictly higher than the maximum of the random-coding and expurgated exponents.

Recovering the (ensemble-tight) random coding exponent is immediate: By setting $\Delta$ equal its maximum possible value, the rate condition in \eqref{eq: frist eq} becomes trivial, and we can lower bound the exponent in \eqref{eq: E_ex dfn} by dropping the constraint $d(P_{X\widetilde{X}}) \ge \Delta$ and writing $I(\widetilde{X};Y,X) \ge I(\widetilde{X};Y)$.  The resulting exponent matches that of \cite{CsiszarKorner81,ScarlettMartinezGuilleniFabregas_mismatch_2014_IT}. Alternatively, setting $r=0$ in \eqref{eq:E_dual} gives the same exponent in the dual form.

Recovering the expurgated exponent is more difficult; we do this using the dual form in Theorem \ref{th:Lagrange}.  Setting $\rho=1$ in (\ref{eq:E_dual}), and letting $s$ coincide with the choice in \eqref{eq:chernoff_dist}, we obtain
\begin{flalign}
    &E_{\mathrm{RGV}}(R,P,W,q,d, \Delta)\notag\\ 
    & \ge-\sum_{x}P(x)\log\sum_{x'}P(x')\sum_{y}W(y|x)\nonumber\\
    & \quad \quad \times \Big(\frac{\e^{q(x',y)}}{\e^{q(x,y)}}\Big)^s \frac{\e^{a(x')}}{\e^{a(x)}}e^{r(d(x,x')-\Delta)}- R\\
    & =-\sum_{x}P(x)\log\sum_{x'}P(x')\e^{-d_{s}(x,x')}\frac{\e^{a(x')}}{\e^{a(x)}}\e^{r(d(x,x')-\Delta)}- R.
\end{flalign}
Setting $d=d_{s}$ and $r=\frac{\rho'}{1+\rho'}$ for some $\rho' \ge 0$
gives
\begin{flalign}
    &E_{\mathrm{RGV}}(R,P,W,q,d, \Delta) \nonumber \\
    &\ge-\sum_{x}P(x)\log\sum_{x'}P(x')\e^{-\frac{d_{s}(x,x')}{1+\rho'}}\frac{\e^{a(x')}}{\e^{a(x)}}+\Delta\frac{\rho'}{1+\rho'}- R.\label{eq:Bhat3}
\end{flalign}
Then, choosing
\begin{flalign}
    &\Delta=
    -(1+\rho')\bigg(\sum_{x}P(x)\log\bigg[ \sum_{x'}P(x')\e^{-\frac{d_{s}(x,x')}{1+\rho'}}\frac{\e^{a(x')}}{\e^{a(x)}}\bigg]\nonumber\\
    &\hspace*{5.9cm} +R+2\delta\bigg),\label{Bhat4}
\end{flalign}
we obtain from (\ref{eq:Bhat3}) that
\begin{align}
    &E_{\mathrm{RGV}}(R,P,W,q,d, \Delta) \nonumber \\
     & \ge-\sum_{x}P(x)\log\sum_{x'}P(x')\e^{-\frac{d_{s}(x,x')}{1+\rho'}}\frac{\e^{a(x')}}{\e^{a(x)}} \nonumber \\
        & -\rho'\bigg(\sum_{x}P(x)\log\sum_{x'}P(x')\e^{-\frac{d_{s}(x,x')}{1+\rho'}}\frac{\e^{a(x')}}{\e^{a(x)}}+R+2\delta\bigg)-R\\
    & =-(1+\rho')\bigg(\sum_{x}P(x)\log\sum_{x'}P(x')\e^{-\frac{d_{s}(x,x')}{1+\rho'}}\frac{\e^{a(x')}}{\e^{a(x)}}\bigg)\nonumber\\
    &\qquad -(1+\rho'+2\delta\rho')R.
\end{align}
Upon taking $\delta \to 0$ and optimizing over $\rho' \ge 0$, $s \ge 0$, and $a(\cdot)$, this exponent is identical to the dual form for the mismatched decoding expurgated exponent given in \cite{ScarlettPengMerhavMartinezGuilleniFabregas_mismatch_2014_IT}, which is known to be equivalent to the primal form given in \cite{CsiszarKorner81graph}.

We also need to check that
the choice of $\Delta$ in \eqref{Bhat4} complies with the rate condition in \eqref{eq:R_dual}. We choose the same $a(\cdot)$
as in the exponent, but a value different $r$ (note that the two need not be the same). We simplify the condition as follows:
\begin{flalign}
    R & \le-\sum_{x}P(x)\log\sum_{x'}P(x')\e^{a(x')-\phi_{a}}\e^{-r(d_{s}(x,x')-\Delta)}-2\delta\\
    & =-\sum_{x}P(x)\log\sum_{x'}P(x')\e^{a(x')-\phi_{a}}\e^{-rd_{s}(x,x')}-r\Delta-2\delta\\
    & =-\sum_{x}P(x)\log\sum_{x'}P(x')\e^{a(x')-\phi_{a}}\e^{-rd_{s}(x,x')} \nonumber \\
        &\quad + r(1+\rho')\bigg(\sum_{x}P(x)\log\bigg[\sum_{x'}P(x')\e^{-\frac{d_{s}(x,x')}{1+\rho'}}\frac{\e^{a(x')}}{\e^{a(x)}}\bigg] \nonumber\\
        &\hspace*{4.8cm} +R+2\delta\bigg)-2\delta, \label{eq:final_R_dual}
\end{flalign}
where we have substituted \eqref{Bhat4}.

By setting $r=\frac{1}{1+\rho'}$ and noting that $-\sum_{x}P(x)\log\sum_{x'}P(x')\e^{a(x')-\phi_{a}}\e^{-rd_{s}(x,x')}$ is identical to $-\sum_{x}P(x)\log\sum_{x'}P(x')\frac{\e^{a(x')}}{\e^{a(x)}}\e^{-rd_{s}(x,x')}$ (by expanding the logarithms and using $\phi_a = \sum_{x}P(x) a(x)$), we observe that \eqref{eq:final_R_dual} reduces to $R\le R$, which is
trivially satisfied.

\section{Discussion and Conclusion}\label{sc: Discussion}

In this paper, we introduced a sequential random scheme based on randomizing a generalized form of Gilbert-Varshamov codes with a general distance function. 
This ensemble ensures that the codewords are sufficiently separated in the input space, and simultaneously achieves both the random coding and expurgated exponents.  
We proved that the RGV exponent is ensemble-tight for any additive decoding metric, and to our knowledge, this is the first such result for any construction achieving the expurgated exponent.  In addition, we provided dual-domain expressions, along with a direct derivation that extends beyond the finite-alphabet setting, and we presented choices of the distance function that attain the best possible exponent.

\appendix

\subsection{Proof of Lemma \ref{lm: pairwise distribution of codewords}}
\label{app:pairwise lemma}

In the following, products of the form $\prod_{i \ne \{k,m\}}$ are a shorthand for $\prod_{i \in \{1,\dotsc,m\} \backslash \{k,m\}}$.  In addition, for the special case of $m=k+1$, any summations over $\bx_{k+1}^{m-1}$ are void, and any terms of the form $\Pr(\bx_{k+1}^{m-1}|\bx_1^k)$ should be omitted (i.e., replaced by $1$)
Since by assumption $k < m$, we have
\begin{flalign}
&\Pr(\bx_k,\bx_m) \nonumber  \\
& = \sum_{\bx_1^{k-1}, \bx_{k+1}^{m-1}} \Pr(\bx_1^{k-1}) \Pr(\bx_k|\bx_1^{k-1})  \nonumber \\
& \qquad \qquad \times \Pr(\bx_{k+1}^{m-1}|\bx_1^k) \Pr(\bx_{m}|\bx_{1}^{m-1}) \\
& = \sum_{\bx_1^{k-1}, \bx_{k+1}^{m-1}} \Pr(\bx_1^{k-1}) \Pr(\bx_{k+1}^{m-1}|\bx_1^k) \nonumber \\
& \qquad \times \frac{\prod_{i=1}^{k-1} \indicator\{d(\bx_k,\bx_i)>\Delta\}}{|\calT(P_n,\bx_1^{k-1})|} \frac{\prod_{i=1}^{m-1} \indicator\{d(\bx_{m},\bx_i)>\Delta\}}{|\calT(P_n,\bx_1^{m-1})|} \label{eq:lem7eq0}\\
&\geq \frac{\indicator\{d(\bx_k,\bx_{m})>\Delta\}}{|\calT(P_n)|^2} \sum_{\bx_1^{k-1}, \bx_{k+1}^{m-1}} \Pr(\bx_1^{k-1}) \Pr(\bx_{k+1}^{m-1}|\bx_1^k) \nonumber \\
& \qquad \quad\times \prod_{i=1}^{k-1} \indicator\{d(\bx_k,\bx_i)>\Delta\} \prod_{i \notin \{k,m\}} \indicator\{d(\bx_{m},\bx_i)>\Delta\}\label{eq:lem71stineq}\\
&= \frac{\indicator\{d(\bx_k,\bx_{m})>\Delta\}}{|\calT(P_n)|^2} \sum_{\bx_1^{k-1}, \bx_{k+1}^{m-1}} \prod_{i \notin \{k,m\}} \Pr(\bx_i|\bx_1^{i-1}) \nonumber \\
& \qquad \times\indicator\{d(\bx_k,\bx_i)>\Delta\} \indicator\{d(\bx_{m},\bx_i)>\Delta\} \label{eq:lem72ndineq}
\end{flalign}
where \eqref{eq:lem7eq0} follows by noting that the two fractions appearing are precisely $\Pr(\bx_k|\bx_1^{k-1})$ and $\Pr(\bx_{m}|\bx_1^{m-1})$, \eqref{eq:lem71stineq} follows from Lemma \ref{lem:Ti_bounds}, and \eqref{eq:lem72ndineq} writes $\Pr(\bx_1^{k-1}) \Pr(\bx_{k+1}^{m-1}|\bx_1^k)$ recursively, as well as extending $\prod_{i=1}^{k-1} \indicator\{d(\bx_k,\bx_i)>\Delta\}$ to $\prod_{i \notin \{k,m\}} \indicator\{d(\bx_k,\bx_i)>\Delta\}$ since the term $\Pr(\bx_1^{k-1}) \Pr(\bx_{k+1}^{m-1}|\bx_1^k)$ is zero whenever $d(\bx_k,\bx_i) \le \Delta$ for some $k < i < m$.

We now apply a recursive procedure to the summation in \eqref{eq:lem72ndineq}.  Letting $\psi_i(\bx_i,\bx_1^{i-1},\bx_k,\bx_m)$ denote the argument to the product therein, we have
\begin{multline}
\sum_{\bx_1^{k-1}, \bx_{k+1}^{m-1}} \prod_{i \notin \{k,m\}} \psi_i(\bx_i,\bx_1^{i-1},\bx_k,\bx_m) \\ 
= \bigg(\sum_{\bx_1^{k-1}, \bx_{k+1}^{m-2}}\prod_{i \notin \{k,m,m-1\}} \psi_i(\bx_i,\bx_1^{i-1},\bx_k,\bx_m)\bigg) \\
\times \sum_{\bx_{m-1}} \psi_{m-1}(\bx_{m-1},\bx_1^{m-2},\bx_k,\bx_m). \label{eq:recursion}
\end{multline}
The summation over $\bx_{m-1}$ can be expanded as follows:
{\allowdisplaybreaks 
\begin{align}
& \sum_{\bx_{m-1}} \psi_{m-1}(\bx_{m-1},\bx_1^{m-2},\bx_k,\bx_m) \nonumber \\
&\quad = \sum_{\bx_{m-1}} \frac{ \indicator\{ \bx_{m-1}\in \calT(P_n,\bx_1^{m-2}) \} }{ |\calT(P_n,\bx_1^{m-2})| } \indicator\{d(\bx_k,\bx_{m-1})>\Delta\} \nonumber\\
&\qquad \qquad \times \indicator\{d(\bx_{m},\bx_{m-1})>\Delta\} \label{eq:single_sum_1} \\
&\quad = \frac{ |\calT(P_n,\bx_1^{m-2},\bx_k,\bx_{m})| }{ |\calT(P_n,\bx_1^{m-2})| } \label{eq:single_sum_2} \\
&\quad \ge \frac{ |\calT(P_n,\bx_1^{m-2})| - 2\,\vol_{\bx}(\Delta) }{ |\calT(P_n,\bx_1^{m-2})| } \label{eq:single_sum_3}\\
&\quad = 1 - \frac{ 2\,\vol_{\bx}(\Delta) }{ |\calT(P_n,\bx_1^{k-2})| } \\
&\quad \ge 1 - \frac{ 2e^{-n(R_n+\delta)} }{ 1-e^{-n\delta} } \label{eq:single_sum_4} \\
&\quad  = 1 - 2 \delta_ne^{-nR_n}, \label{eq:single_sum_5}
\end{align}}
\noindent where \eqref{eq:single_sum_2} follows since the three indicator functions are simultaneously equal to one if and only if $\bx_{m-1} \in \calT(P_n,\bx_1^{m-2},\bx_k,\bx_{m})$, \eqref{eq:single_sum_3} follows since the only sequences that can be in $\calT(P_n,\bx_1^{m-2})$ but not $\calT(P_n,\bx_1^{m-2},\bx_k,\bx_{m})$ are those in the $d$-balls centered as $\bx_k$ and $\bx_{m}$ (recall also that $\vol_{\bx}$ does not depend on $\bx$), and \eqref{eq:single_sum_4} follows from the volume upper bound and the set cardinality lower bound Lemma \ref{lem:Ti_bounds}, and \eqref{eq:single_sum_5} applies the definition of $\delta_n$ in \eqref{eq: delta_n dfn }.

Applying the above procedure recursively to the indices $m-2$, $m-3$, and so on in \eqref{eq:recursion} (skipping index $k$), and substituting into \eqref{eq:lem72ndineq}, we obtain
\begin{flalign}
\Pr(\bx_k,\bx_m)&\geq \frac{\indicator\{d(\bx_k,\bx_{m})>\Delta\}}{|\calT(P_n)|^2} \bigg( 1 - \frac{2\delta_n}{\e^{nR_n}}\bigg)^{\e^{nR_n}}\label{eq:lem72ineq3}\\
&\geq \frac{\indicator\{d(\bx_k,\bx_{m})>\Delta\}}{|\calT(P_n)|^2} (1-4\delta_n^2)\e^{-2\delta_n}\label{eq:lem72ineq4}
\end{flalign}
where \eqref{eq:lem72ineq3} also applies $m-2 \le e^{nR_n}$ in the exponent, and \eqref{eq:lem72ineq4} follows from the standard inequality $\big(1-\frac{\alpha}{N}\big)^N \geq  e^{-\alpha} \big(1-\frac{\alpha^2}{N}\big)$. This establishes the desired lower bound. 

The upper bound in \eqref{eq: pairwise distribution of codewords k m} simply follows by applying Lemma \ref{lem:Ti_bounds} to \eqref{eq:lem7eq0}, and upper bounding the indicator functions by one.

\subsection{Proof of Lemma \ref{lm: triplets-wise distribution of codewords}}
\label{app:triplets distribution lemma}

Recall the abbreviation in \eqref{eq: indicator triple} (which we use with $k$ in place of $m$). Recalling the assumption $i < j < k$, we have
\begin{flalign}
&\Pr(\bx_i,\bx_j,\bx_k) \nonumber \\
& = \sum_{\bx_1^{i-1}, \bx_{i+1}^{j-1}, \bx_{j+1}^{k-1}} \Pr(\bx_1^{i-1}) \Pr(\bx_i|\bx_1^{i-1}) \Pr(\bx_{i+1}^{j-1}|\bx_1^i) \nonumber \\
& \qquad \times\Pr(\bx_{j}|\bx_{1}^{j-1}) 
\Pr(\bx_{j+1}^{k-1}|\bx_1^j) \Pr(\bx_{k}|\bx_{1}^{k-1}) \label{eq:lem7eq0 aejh0}\\
& = \sum_{\bx_1^{i-1}, \bx_{i+1}^{j-1}, \bx_{j+1}^{k-1}} \Pr(\bx_1^{i-1}) \Pr(\bx_{i+1}^{j-1}|\bx_1^i)
\Pr(\bx_{j+1}^{k-1}|\bx_1^j)\nonumber\\
    &\qquad\times \frac{\prod_{r=1}^{i-1} \indicator\{d(\bx_i,\bx_r)>\Delta\}}{|\calT(P_n,\bx_1^{i-1})|} \frac{\prod_{s=1}^{j-1} \indicator\{d(\bx_{j},\bx_s)>\Delta\}}{|\calT(P_n,\bx_1^{j-1})|}
\nonumber \\
& \qquad \times\frac{\prod_{t=1}^{m-1} \indicator\{d(\bx_{m},\bx_t)>\Delta\}}{|\calT(P_n,\bx_1^{k-1})|}
  \label{eq:lem7eq0 aejh}\\
&\leq \frac{\calI_{d,\Delta}(\bx_i,\bx_j,\bx_k)}{(1-e^{-n\delta})^3|\calT(P_n)|^3} \sum_{\bx_1^{i-1}, \bx_{i+1}^{j-1}, \bx_{j+1}^{k-1}} \Pr(\bx_1^{i-1}) \nonumber \\
& \qquad \times\Pr(\bx_{i+1}^{j-1}|\bx_1^i) \Pr(\bx_{j+1}^{k-1}|\bx_1^j) \label{eq:lem1eq0 aejh liuegafuyg}
\end{flalign}
\begin{flalign}
&= \frac{\calI_{d,\Delta}(\bx_i,\bx_j,\bx_k)}{(1-e^{-n\delta})^3|\calT(P_n)|^3} \sum_{\bx_1^{i-1}}\Pr(\bx_1^{i-1}) \nonumber \\
& \qquad \times\sum_{ \bx_{i+1}^{j-1}}\Pr(\bx_{i+1}^{j-1}|\bx_1^i) \sum_{ \bx_{j+1}^{k-1}} \Pr(\bx_{j+1}^{k-1}|\bx_1^j) \\
&= \frac{\calI_{d,\Delta}(\bx_i,\bx_j,\bx_k)}{(1-e^{-n\delta})^3|\calT(P_n)|^3} ,
\label{eq:lem72ndineq euvgf}
\end{flalign}
where \eqref{eq:lem7eq0 aejh0} substitutes the conditional codeword distributions given all previous codewords, and \eqref{eq:lem1eq0 aejh liuegafuyg} uses Lemma \ref{lem:Ti_bounds}.

\subsection{Proof of Lemma \ref{lm: marginal distribution Xm lemma}} \label{ap: Proof of Uniformity}

Let $\pi$ be a permutation of the indices $[1,\dotsc,n]$, and let $\pi(\bx)$ be the outcome of applying the permutation $\pi$ to the sequence $\bx$. 
By the definition of the generalized RGV construction (in particular, the fact that the codewords are drawn uniformly and $d$ is type-dependent), we have
\begin{multline}
\Pr\left(\bX_1=\bx_1,\bX_2=\bx_2,\dotsc,\bX_m=\bx_m\right) \\
 =\Pr\left(\bX_1=\pi(\bx_1),\bX_2=\pi(\bx_2),\dotsc,\bX_m=\pi(\bx_m)\right).
\end{multline}
We now consider summing both sides over all sequences $(\bx_1,\dotsc,\bx_{m-1})$ that are admissible in the sense of meeting the requirement $d(\bx_i,\bx_j)>\Delta$ for all $i,j\in\{1,...,m\}$. 
 Clearly such a summation yields $\Pr\left(\bX_m=\bx_m\right)$ on the left-hand side.  Moreover, for each such $(\bx_1,\dotsc,\bx_{m-1})$, the type-dependent nature of $d$ implies that $(\pi(\bx_1),\dotsc,\pi(\bx_{m-1}))$ and $(\pi^{-1}(\bx_1),\dotsc,\pi^{-1}(\bx_{m-1}))$ are also admissible.  As a result, we are also summing the right-hand side over all admissible sequences, yielding 
\begin{flalign}
\Pr\left(\bX_m=\bx_m\right)=\Pr\left(\bX_m=\pi(\bx_m)\right),
\end{flalign}
which implies that $\bX_m$ is distributed uniformly over $\calT(P_n)$. 

\subsection{Proof of Lemma \ref{lem:continuity}} \label{sec:pf_continuity}

The RGV exponent, defined in \eqref{eq: E_ex dfn}, is a minimization over joint distributions $V_{X\widetilde{X}Y}$ within the constraint set $\calT_{d,q,P}(\Delta)$ given in \eqref{eq: cal T alpha definition}.  

Let $V^*_{X\widetilde{X}Y}$ denote the minimizer subject to $\calT_{d,q,P}(\Delta)$, and let $V^*_{X\widetilde{X}Y,n}$ denote the minimizer subject to $\calT_{d,q,P_n}(\Delta)$.  Since the space of probability distributions is compact, any infinite subsequence of $V^*_{X\widetilde{X}Y,n}$ must have a further subsequence converging to some $V^*_{X\widetilde{X}Y,\infty}$. Moreover, since $d$ and $q$ are continuous and $V^*_{X\widetilde{X}Y,n} \in \calT_{d,q,P_n}(\Delta)$ with $P_n \to P$, we must have $V^*_{X\widetilde{X}Y,\infty} \in \calT_{d,q,P}(\Delta)$, from which \eqref{eq:cont_LB} follows.

\subsection{Proof of Lemma \ref{lm: empirical simplex lemma}}\label{ap: empirical simplex lemma appendix}

In this appendix, we make use of the following notation, also used in Section \ref{sec:non_univ}:
\beq\label{eq: gamma function definition2}
    \Gamma(V_{X\tilde X Y}) \triangleq D(V_{Y|X}\|W|V_X)+\big|I_{}(\widetilde{X};Y,X)-R\big|_+.
\eeq
We observe that the exponent on the right-hand side of (\ref{eq: standard method of type 2}) can be rewritten as 
\begin{flalign}
\min_{\substack{V_{X\widetilde{X}}\in\mathcal{P}_{n}(\calX^2)\,:\,\\ V_{X}=V_{\widetilde{X}}=P_{n},\;\\
d(V_{X\widetilde{X}})\ge\Delta
}
}\min_{\substack{V_{Y|X\widetilde{X}}\in\mathcal{P}_{n}(\calY|V_{X\widetilde{X}})\,:\, \\ q(P_{n}\times V_{Y|\widetilde{X}})-q(P_{n}\times V_{Y|X})\ge0}} \Gamma(V_{X\widetilde{X}}\times V_{Y|X\widetilde{X}}),\label{eq:double_min 11}
\end{flalign}
where the notation $V_{Y|X\widetilde{X}}\in\mathcal{P}_{n}(\calY|V_{X\widetilde{X}})$ means that $V_{X\widetilde{X}}\times V_{Y|X\widetilde{X}}$ is a joint empirical distribution for sequences of length $n$.  Throughout the appendix, we will make use of the fact the minimizers must be such that
\begin{flalign}\label{eq: artificial constraint}
	W(y|x)=0 \implies V_{Y|X}(y|x) = 0, 
\end{flalign}
since otherwise the KL divergence in \eqref{eq: gamma function definition2} would be infinite.  Observe that within the space of joint distributions satisfying \eqref{eq: artificial constraint}, the function $\Gamma(\cdot)$ is continuous. 

We first show that the inner minimization can be approximated by a minimization over $V_{Y|X\widetilde{X}}\in\mathcal{P}(\calY|\calX^2)$, and then we show that the outer minimization can be approximated by a minimization over $V_{X\widetilde{X}}\in\mathcal{P}(\calX^2)$.

\textbf{Inner minimization. }
Define $\Psi(V_{X\widetilde{X}}\times V_{Y|X\widetilde{X}})=q(P_{n}\times V_{Y|\widetilde{X}})-q(P_{n}\times V_{Y|X})$,
so that the constraint in \eqref{eq:double_min 11} is given by $\Psi(V_{X\widetilde{X}}\times V_{Y|X\widetilde{X}})\ge0$. For any $V_{X\widetilde{X}}\in \mathcal{P}_{n}(\calX^2)$, 
we need to show that 
the inner minimization in (\ref{eq:double_min 11})
can be expanded from $\mathcal{P}_{n}(\calY|V_{X\widetilde{X}})$ to $\mathcal{P}(\calY|\calX^2)$. Specifically, we wish to show that for any $\epsilon>0$, it holds for sufficiently large $n$ that 
\begin{align}
	&\min_{V_{Y|X\widetilde{X}}\in\mathcal{P}_{n}(\calY|V_{X\widetilde{X}})\,:\,\Psi(V_{X\widetilde{X}}\times V_{Y|X\widetilde{X}})\ge0}\Gamma(V_{X\widetilde{X}}\times V_{Y|X\widetilde{X}})\nonumber \\
	& \leq \min_{V_{Y|X\widetilde{X}}\in\mathcal{P}(\calY|\calX^2)\,:\,\Psi(V_{X\widetilde{X}}\times V_{Y|X\widetilde{X}})\ge0} \Gamma(V_{X\widetilde{X}}\times V_{Y|X\widetilde{X}})+\epsilon.
	\label{eq:double_min}
\end{align}
Since we are considering additive decoding metrics, i.e., $q(P_{XY}) = \EE_P[q(X,Y)]$, we have
\begin{flalign}
	&\Psi(V_{X\widetilde{X}}\times V_{Y|X\widetilde{X}}) \nonumber \\
	&=\sum_{x,\overline{x},y} V_{X\widetilde{X}}(x,\overline{x})V_{Y|X\widetilde{X}}(y|x,\overline{x}) \cdot [q(\overline{x},y)-q(x,y)]. \label{eq:Psi_expansion}
\end{flalign}
To prove (\ref{eq:double_min}), fix any $\widetilde{V}_{Y|X\widetilde{X}}\in\mathcal{P}(\calY|\calX^2)$ with
$\Psi(V_{X\widetilde{X}}\times
\widetilde{V}_{Y|X\widetilde{X}})\ge0$, and let $V_{Y|X\widetilde{X}}^{(n)}$
be the quantized version of $\widetilde{V}_{Y|X\widetilde{X}}$ that rounds up for the highest values of $q(\overline{x},y)-q(x,y)$, and rounds down for the smallest values:
\begin{flalign}
&V^{(n)}_{Y|X\widetilde{X}}(y|x,\overline{x}) \nonumber\\
&=\left\{\begin{array}{l} \frac{1}{nV_{X\widetilde{X}}(x,\overline{x})}\lceil n\cdot V_{X\widetilde{X}}(x,\overline{x})\cdot  \widetilde{V}_{Y|X\widetilde{X}}(y|x,\overline{x})\rceil \\\qquad \qquad \mbox{ if } q(\overline{x},y)-q(x,y) > c_{x\overline{x}}
\\
\frac{1}{nV_{X\widetilde{X}}(x,\overline{x})}\lfloor n\cdot V_{X\widetilde{X}}(x,\overline{x})\cdot \widetilde{V}_{Y|X\widetilde{X}}(y|x,\overline{x})\rfloor \\\qquad \qquad \mbox{  if }  q(\overline{x},y)-q(x,y) < c_{x\overline{x}},
  \end{array}\right.
\end{flalign}
where for each $(x,\overline{x})$, we choose $c_{x\overline{x}}$ (as well as rounding the entries with $q(\overline{x},y)-q(x,y) = c_{x\overline{x}}$ up or down as needed) in such a way that the entries of $V^{(n)}_{Y|X\widetilde{X}}(y|x,\overline{x})$ sum to one. 

By this construction and the fact that $\Psi(V_{X\widetilde{X}}\times V_{Y|X\widetilde{X}})$ is a positive linear combination of the values $q(\overline{x},y)-q(x,y)$ ({\em cf.}, \eqref{eq:Psi_expansion}),  we have
\begin{gather}
\Psi(V_{X\widetilde{X}}\times V^{(n)}_{Y|X\widetilde{X}})\geq \Psi(V_{X\widetilde{X}}\times \widetilde{V}_{Y|X\widetilde{X}} )\label{eq: q gap 1} \end{gather}
and 
\begin{multline}
\hspace*{-1ex}\sum_{x,\widetilde{x},y} \big| V_{X\widetilde{X}}(x,\widetilde{x}) V^{(n)}_{Y|X\widetilde{X}}(y|x,\widetilde{x})-V_{X\widetilde{X}}(x,\widetilde{x}) \widetilde{V}_{Y|X\widetilde{X}}(y|x,\widetilde{x})\big| \\
\leq \frac{|\calX|^2|\calY|}{n}\label{eq: l1 distance small}.
\end{multline}
In particular, \eqref{eq: q gap 1} immediately implies that the required constraint $\Psi(V_{X\widetilde{X}}\times V^{(n)}_{Y|X\widetilde{X}})\ge0$ is satisfied. Moreover, (\ref{eq: l1 distance small}) implies that $V_{X\widetilde{X}}\times V_{Y|X\widetilde{X}}^{(n)}$ is $O\big(\frac{1}{n}\big)$-close
to $V_{X\widetilde{X}}\times \widetilde{V}_{Y|X\widetilde{X}}$ (in the $\ell_1$ sense), and hence $\Gamma(V_{X\widetilde{X}}\times \widetilde{V}_{Y|X\widetilde{X}}) - \Gamma(V_{X\widetilde{X}}\times V_{Y|X\widetilde{X}}^{(n)}) \to 0$ by the continuity of $\Gamma(\cdot)$. This proves the part of the approximation of the inner minimization, i.e., (\ref{eq:double_min}). 

\textbf{Outer minimization.} 
Having proved (\ref{eq:double_min}), the double minimization (\ref{eq:double_min 11}) is upper bounded by the following double minimization:
\begin{flalign}
	 & \min_{\substack{V_{X\widetilde{X}}\in\mathcal{P}_{n}(\calX^2)\,:\\V_{X}=V_{\widetilde{X}}=P_{n},\;\\d(V_{X\widetilde{X}})\ge\Delta}} \min_{\substack{V_{Y|X\widetilde{X}}\in\mathcal{P}(\calY|\calX^2)\,:\, \\ q(P_{n}\times V_{Y|\widetilde{X}})-q(P_{n}\times V_{Y|X})\ge0}}\Gamma(V_{X\widetilde{X}}\times V_{Y|X\widetilde{X}})
	 . \label{eq:single_min_0}
\end{flalign}
Consider the expression in \eqref{eq:single_min_0} with $\mathcal{P}_n(\calX^2)$ replaced by $\mathcal{P}(\calX^2)$ and $P_n$ replaced by $P$:
\begin{flalign}
& \min_{\substack{V_{X\widetilde{X}}\in\mathcal{P}(\calX^2)\,:\\V_{X}=V_{\widetilde{X}}=P,\;\\
d(V_{X\widetilde{X}})\ge\Delta
}
}\min_{\substack{V_{Y|X\widetilde{X}}\in\mathcal{P}(\calY|\calX^2)\,:\, \\ q(P\times V_{Y|\widetilde{X}})-q(P\times V_{Y|X})\ge0
}
}\Gamma(V_{X\widetilde{X}}\times V_{Y|X\widetilde{X}}).\label{eq: three constraints}
\end{flalign}
Given the minimizer $V_{X\widetilde{X}}^*\in\mathcal{P}(\calX^2)$ with $V_{X}^*=V_{\widetilde{X}}^*=P$, let $V^*_{X\widetilde{X},n}$ be the closest joint type (e.g., in the $\ell_\infty$ sense) that satisfies $V^*_{X}=V^*_{\widetilde{X}}=P_{n}$. It follows that $V^*_{X\widetilde{X},n}(x,\overline{x}) - V_{X\widetilde{X}}^*(x,\overline{x}) \to 0$. 

Let $V_{Y|X\widetilde{X}}^*$ denote the minimizer in \eqref{eq: three constraints}, and define
\begin{equation}
	V_{Y|X\widetilde{X},n}^{\max} = \argmax_{V_{Y|X\widetilde{X}}\in\mathcal{P}_n(\calY|\calX^2)} \Psi(V_{X\widetilde{X},n}\times V_{Y|X\widetilde{X}}). \label{eq:Vmax}
\end{equation}
We claim that there exists a vanishing sequence $\epsilon_n$ such that 
\begin{flalign}
	&(1-\epsilon_n)\Psi(V_{X\widetilde{X},n}^*\times V_{Y|X\widetilde{X}}^*)
	+\epsilon_n
	\Psi(V_{X\widetilde{X},n}\times V_{Y|X\widetilde{X},n}^{\max})\geq 0. \label{eq:eps_n_claim}
\end{flalign}
To see this, note that since $\Psi(V_{X\widetilde{X}}^*\times V_{Y|X\widetilde{X}}^*) \ge 0$ by definition, we only need the second term in \eqref{eq:eps_n_claim} to be large enough to overcome the rounding from $V^*_{X\widetilde{X}}$ to $V^*_{X\widetilde{X},n}$.  If $\Psi(V_{X\widetilde{X},n}\times V_{Y|X\widetilde{X},n}^{\max}) > 0$, then this is possible by letting $\epsilon_n$ vanish sufficiently slowly.  On the other hand,  $\Psi(V_{X\widetilde{X},n}\times V_{Y|X\widetilde{X},n}^{\max}) < 0$ is impossible, since one could swap the roles of $X$ and $\widetilde{X}$ in \eqref{eq:Vmax} to produce a positive quantity.  The only remaining case is that $\Psi(V_{X\widetilde{X},n}\times V_{Y|X\widetilde{X}}) = 0$ for all $V_{Y|X\widetilde{X}}$, in which case \eqref{eq:eps_n_claim} is trivial.

Using \eqref{eq:eps_n_claim} and the continuity of $\Gamma$ (subject to \eqref{eq: artificial constraint}, which we have established to always hold), we deduce the following for any $\epsilon>0$ and sufficiently large $n$: 
\begin{flalign}
 & \min_{\substack{V_{X\widetilde{X}}\in\mathcal{P}(\calX^2)\,:\\V_{X}=V_{\widetilde{X}}=P,\;\\
d(V_{X\widetilde{X}})\ge\Delta
}
}\min_{\substack{V_{Y|X\widetilde{X}}\in\mathcal{P}(\calY|\calX^2)\,:\, \\ q(P\times V_{Y|\widetilde{X}})-q(P\times V_{Y|X})\ge0}}\Gamma(V_{X\widetilde{X}}\times V_{Y|X\widetilde{X}}) \label{eq:RGV_alt}\\
 & = \Gamma(V_{X\widetilde{X}}^*\times V_{Y|X\widetilde{X}}^*) 
\\
 & \geq\Gamma(V^*_{X\widetilde{X},n}\times V_{Y|X\widetilde{X}}^*)-\epsilon
\label{eq: continuity inequality A}\\
 & \geq\Gamma\left(V^*_{X\widetilde{X},n}\times \left[(1-\epsilon_n)V_{Y|X\widetilde{X}}^*+\epsilon_nV_{Y|X\widetilde{X},n}^{\max}\right]\right)-2\epsilon
\label{eq: continuity inequality B}\\
 & \geq\min_{\substack{V_{Y|X\widetilde{X}}\in\mathcal{P}_n(\calY|\calX^2)\,:\, \\ q(P_n\times V_{Y|\widetilde{X}})-q(P_n\times V_{Y|X})\ge0}}\Gamma\left(V^*_{X\widetilde{X},n}\times V_{Y|X\widetilde{X}}\right)-2\epsilon \label{eq: continuity inequality C} \\
 & \geq\min_{\substack{V_{X\widetilde{X}}\in\mathcal{P}(\calX^2)\,:\\V_{X}=V_{\widetilde{X}}=P_{n},\;\\d(V_{X\widetilde{X}})\ge\Delta-\epsilon}} \min_{\substack{V_{Y|X\widetilde{X}}\in\mathcal{P}_n(\calY|\calX^2)\,:\, \\ q(P_n\times V_{Y|\widetilde{X}})-q(P_n\times V_{Y|X})\ge0}}\Gamma\left(V_{X\widetilde{X}}\times V_{Y|X\widetilde{X}}\right)\nonumber\\
 &\hspace*{6.9cm}-2\epsilon \label{eq: continuity inequality D}
\end{flalign}
where both \eqref{eq: continuity inequality A} and \eqref{eq: continuity inequality B} follow from the continuity of $\Gamma(\cdot)$, \eqref{eq: continuity inequality C} follows since $(1-\epsilon_n)V_{Y|X\widetilde{X}}^*+\epsilon_nV_{Y|X\widetilde{X},n}^{\max}$ belongs to the constraint set in the minimization due to \eqref{eq:eps_n_claim}, and \eqref{eq: continuity inequality D} follows since $d(V_{X\widetilde{X}}^*)\geq \Delta \implies d(V_{X\widetilde{X},n}^*)\geq \Delta - \epsilon$ by the continuity of $d$.

Since $\Delta$ is arbitrary in the preceding steps, we may replace $\Delta$ by $\Delta + \epsilon$ in both \eqref{eq:RGV_alt} and \eqref{eq: continuity inequality D}.  Upon doing so, we obtain the RGV exponent with input distribution $P$ and parameter $\Delta + \epsilon$ on the left-hand side, while recovering the expression \eqref{eq:single_min_0} from the first step above on the right-hand side.  This completes the proof of Lemma \ref{lm: empirical simplex lemma}.

\subsection{Primal-dual Equivalence} \label{sc:Lagrange}

The primal-dual equivalence stated in Theorem \ref{th:Lagrange} follows in a near-identical manner to the mismatched random coding exponent \cite{ScarlettPengMerhavMartinezGuilleniFabregas_mismatch_2014_IT} (and to a lesser extent, the mismatched expurgated exponent \cite{ScarlettPengMerhavMartinezGuilleniFabregas_mismatch_2014_IT}), so we omit most of the details.  We first consider the exponent \eqref{eq:E_dual}, and then the rate condition \eqref{eq:R_dual}.  

{\bf Exponent expression.} The proof of equivalence for the exponent consists of three steps, interleaved with applications of the minimax theorem to swap the order of the primal and dual optimization variables:
\begin{enumerate}
    \item Let $P_{XY}$ be fixed, and consider the optimization problem
    \begin{equation}
        \min_{\substack{V_{X\widetilde{X}Y} \,:\, V_{XY} = P_{XY}, P_{\widetilde{X}} = P, \\  q(V_{\widetilde{X}Y}) \ge q(P_{XY}), d(P_{X\widetilde{X}}) \ge \Delta}} D\big( V_{X\widetilde{X}Y} \| P \times P_{XY} \big), \label{eq:opt_step1}
    \end{equation}
    where $ P \times P_{XY} $ denotes the joint distribution $P(x)P(\widetilde{x})P_{Y|X}(y|x)$.  This minimization arises from fixing the $(X,Y)$ marginals in \eqref{eq: E_ex dfn} and noting that all terms other than the mutual information $I(\widetilde{X};X,Y)$ are constant.  The mutual information is equivalent to the objective function in \eqref{eq:opt_step1}, due to the equality constraints.
    
    Applying Lagrange duality in the same way as the random coding setting \cite{ScarlettMartinezGuilleniFabregas_mismatch_2014_IT} (see also \cite[Appendix E]{ScarlettThesis}), we find that \eqref{eq:opt_step1} is equivalent to\footnote{We have $e^{q(x,y)}$ in place of $q(x,y)$ in \cite{ScarlettMartinezGuilleniFabregas_mismatch_2014_IT} because we are considering additive (rather than multiplicative) decoding rules.}
    \begin{flalign}
        &\sup_{s \ge 0, r \ge 0,a(\cdot)} -\sum_{x,y} P_{XY}(x,y)\nonumber\\
        &\times \log\frac{\sum_{x'} Q(x')e^{sq(x',y)} e^{a(x')} e^{r(d(x,x')-\Delta)} }{e^{sq(x,y)} e^{a(x)}}, \label{eq:opt_step1a}
    \end{flalign}
    where $s$, $r$, and $a(\cdot)$ are Lagrange multipliers corresponding to the metric constraint, distance constraint, and $\widetilde{X}$-marginal constraint.
    \item Let $g_{s,r,a}(x,y) = -\log\frac{\sum_{x'} Q(x')e^{sq(x',y)} e^{a(x')} e^{r(d(x,x')-\Delta)} }{e^{sq(x,y)} e^{a(x)}}$ be the function being averaged in \eqref{eq:opt_step1a}.  Based on the definition in \eqref{eq: E_ex dfn}, the previous step, and the minimax theorem, the RGV exponent is given by
    \begin{flalign}
         &\sup_{s \ge 0, r \ge 0,a(\cdot)} \min_{V_{XY}\,:\, P_X = P} D(V_{XY} \| P \times W)\nonumber\\
         &\qquad  + \big| \EE_V[g_{s,r,a}(X,Y)] - R \big|_+.
    \end{flalign}
    By applying $[z]_+ = \max_{\rho \in [0,1]} \rho z$ along with the minimax theorem, we find that this is equivalent to
    \begin{flalign}
         &\sup_{\rho\in[0,1], s \ge 0, r \ge 0,a(\cdot)} \min_{V_{XY}\,:\, P_X = P} D(V_{XY} \| P \times W) \nonumber\\
         &\qquad + \rho\big(\EE_V[g_{s,r,a}(X,Y)] - R\big). \label{eq:opt_step2a}
    \end{flalign}
    \item A minimization problem of the form \eqref{eq:opt_step2a} was already considered in \cite{ScarlettMartinezGuilleniFabregas_mismatch_2014_IT} (with a different choice of $g_{s,r,a}$), and it was shown that the minimization is equivalent to the expression
    \begin{equation}
        -\sum_{x} Q(x)\log\sum_{y} W(y|x)e^{\rho g_{s,r,a}(x,y)}.
    \end{equation}
    Substituting the definition of $g_{s,r,a}$ completes the proof.
\end{enumerate}

{\bf Rate condition expression.} The primal-dual equivalence for the rate condition can be proved using similar steps to those above; here we briefly discuss another way that it can be understood.

The primal expression \eqref{eq: frist eq} is of the same form as the so-called {\em LM rate} for mismatched decoding \cite{CsiszarKorner81graph,Hui83,CsiszarNarayan95}, with $\widetilde{X}$ playing the role of $Y$, and $d$ playing the role of the decoding metric.  Accordingly, the primal-dual equivalence is essentially a special case of that of the LM rate, which is well-established in the mismatched decoding literature \cite{MerhavKaplanLapidothShamai94,GantiLapidothTelatar2000,ScarlettThesis}.  

\subsection{Lower Bound for Marginal Distribution in Cost-Constrained Coding} \label{app:cost_lb}

Here we show that in the cost-constrained coding setting of Section \ref{sc:Lagrange}, each $\Pr(\bx_m)$ is lower bounded by $P_{\bX}(\bx_m)$ times a constant tending to one. Recall that the codeword distribution is of the form \eqref{eq:p_cost_m} with $1-e^{-n\delta} \le \mu_m(\bx_1^{m-1}) \le 1$ (see the properties following \eqref{eq:final0}).  We have
{\allowdisplaybreaks
\begin{flalign}
&\Pr(\bx_m) \nonumber \\
&= \sum_{\bx_1^{m-1}} \Pr(\bx_1^{m-1}) \Pr(\bx_{m}|\bx_{1}^{m-1}) \label{eq:dual_lb1} \\
& =  \sum_{\bx_1^{m-1}} \Pr(\bx_1^{m-1}) \frac{P_{\bX}(\bx_m)}{\mu_m(\bx_1^{m-1})}  \indicator\{ d(\bx_i,\bx_m) > \Delta, \, \forall i < m \} \label{eq:dual_lb2} \\
&\ge  P_{\bX}(\bx_m) \sum_{\bx_1^{m-1}} \Pr(\bx_1^{m-1}) \indicator\{ d(\bx_i,\bx_m) > \Delta, \, \forall i < m \} \label{eq:dual_lb3} \\
&\ge  P_{\bX}(\bx_m) \sum_{\bx_1^{m-1}} \prod_{i=1}^{m-1}\big( \Pr(\bx_i|\bx_1^{i-1}) \indicator\{ d(\bx_i,\bx_m) > \Delta \} \big), \label{eq:recursion_dual}
\end{flalign}}
where \eqref{eq:dual_lb2} substitutes \eqref{eq:p_cost_m}, \eqref{eq:dual_lb3} uses the fact that $\mu_m(\bx_1^{m-1}) \le 1$, and \eqref{eq:recursion_dual} is an expansion of $ \Pr(\bx_1^{m-1})$.

We now unravel the product one term at a time.  We start by writing
\begin{align}
&\sum_{\bx_1^{m-1}} \prod_{i=1}^{m-1}\big( \Pr(\bx_i|\bx_1^{i-1}) \indicator\{ d(\bx_i,\bx_m) > \Delta \}  \big) \nonumber  \\ 
    &= \sum_{\bx_1^{m-2}} \prod_{i=1}^{m-2}\big( \Pr(\bx_i|\bx_1^{i-1}) \indicator\{ d(\bx_i,\bx_m) > \Delta \}  \big) \nonumber \\
    &~~~\times \sum_{\bx_{m-1}} \Pr(\bx_{m-1}|\bx_1^{m-2}) \indicator\{ d(\bx_{m-1},\bx_m) > \Delta \}.
\end{align} 
Henceforth, let $\calD(\cdot)$ denote the set of possible codewords that are at a distance exceeding $\Delta$ from all codewords listed in the brackets.  Substituting the conditional codeword distribution for codeword $m-1$ gives
\begin{align}
&\sum_{\bx_{m-1}} \Pr(\bx_{m-1}|\bx_1^{m-2})  \indicator\{ d(\bx_{m-1},\bx_m) > \Delta \} \nonumber \\
&= \frac{1}{\mu_{m-1}(\bx_1^{m-2})} \sum_{\bx_{m-1}} P_{\bX}(\bx_{m-1}) \indicator\{\bx_{m-1} \in \calD( \bx_1^{m-2},\bx_m ) \} \\
&= \frac{ \Pr( \bX' \in \calD( \bx_1^{m-2},\bx_m ) ) }{  \Pr( \bX' \in \calD( \bx_1^{m-2} ) ) }, \label{eq:dual_lb6}
\end{align}
where $\bX' \sim P_{\bX}$, and the denominator in \eqref{eq:dual_lb6} follows since $\mu_{m-1}(\bx_1^{m-2}) = \Pr( \bX' \in \calD( \bx_1^{m-2} ))$ by definition.

Continuing, we write
\begin{align}
&\frac{ \Pr( \bX' \in \calD( \bx_1^{m-2},\bx_m ) ) }{  \Pr( \bX' \in \calD( \bx_1^{m-2} ) ) } \nonumber \\
&\ge \frac{ \Pr( \bX' \in \calD( \bx_1^{m-2} ) ) - \Pr( d(\bX',\bx_m) \le \Delta ) }{  \Pr( \bX' \in \calD( \bx_1^{m-2} ) ) } \label{eq:dual_lb7} \\
&= 1 - \frac{\Pr( d(\bX',\bx_m) \le \Delta )}{ \Pr( \bX' \in \calD( \bx_1^{m-2} ) ) } \label{eq:dual_lb8} \\
&\ge 1 - \frac{\Pr( d(\bX',\bx_m) \le \Delta )}{ 1 - e^{-n\delta} } \label{eq:dual_lb9} \\
&\ge 1 - \frac{e^{-n(R_n + \delta)}}{ 1 - e^{-n\delta} } \label{eq:dual_lb10} \\
&= 1 - \delta_n e^{-nR_n}, \label{eq:dual_lb11}
\end{align}
where \eqref{eq:dual_lb7} uses $\Pr(A \cap B) \ge \Pr(A) - \Pr(B^c)$, \eqref{eq:dual_lb8} applies $\mu_{m-1}(\bx_1^{m-2}) \ge 1-e^{-n\delta}$, \eqref{eq:dual_lb9} makes use of the bounds on $\Pr( d(\bX',\bx_m) \le \Delta )$ and $R_n$ in \eqref{eq:final0} and \eqref{eq:R_dual} respectively, and \eqref{eq:dual_lb11} uses the definition of $\delta_n$ in \eqref{eq: delta_n dfn }.

The recursion in \eqref{eq:recursion_dual} proceeds in the exact same way as the constant-composition case in Appendix \ref{app:pairwise lemma} (with a factor of $2$ removed), and we get
\begin{equation}
\Pr(\bx_m) \ge P_{\bX}(\bx_m) \cdot (1-\delta_n^2)e^{-\delta_n}.
\end{equation}

\section*{Acknowledgment}
We would like to thank the reviewers for their very helpful comments and in particular for the suggestion on how to significantly shorten the proof of the achievability part of Theorem \ref{th: main theorem sphere packing}.

\begin{IEEEbiographynophoto}{Anelia Somekh-Baruch} 
(S'01-M'03) received the B.Sc. degree from Tel-Aviv University, Tel-Aviv, Israel, in 1996 and the M.Sc. and Ph.D.\ degrees from the Technion?Israel Institute of Technology, Haifa, Israel, in 1999 and 2003, respectively, all in electrical engineering. During 2003?2004, she was with the Technion Electrical Engineering Department. During 2005?2008, she was a Visiting Research Associate at the Electrical Engineering
Department, Princeton University, Princeton, NJ. From 2008 to 2009 she was a researcher at the Electrical Engineering Department, Technion, and from 2009 she has been with the Bar-Ilan University School of Engineering,
Ramat-Gan, Israel. Her research interests include topics in information theory and communication theory. Dr. Somekh-Baruch received the Tel-Aviv University program for outstanding B.Sc. students scholarship, the Viterbi
scholarship, the Rothschild Foundation scholarship for postdoctoral studies, and the Marie Curie Outgoing International Fellowship.
\end{IEEEbiographynophoto}

 \begin{IEEEbiographynophoto}{Jonathan Scarlett}
    (S'14 -- M'15) received 
    the B.Eng. degree in electrical engineering and the B.Sci. degree in 
    computer science from the University of Melbourne, Australia. 
    From October 2011 to August 2014, he
    was a Ph.D. student in the Signal Processing and Communications Group
    at the University of Cambridge, United Kingdom. From September 2014 to
    September 2017, he was post-doctoral researcher with the Laboratory for
    Information and Inference Systems at the \'Ecole Polytechnique F\'ed\'erale
    de Lausanne, Switzerland. Since January 2018, he has been an assistant
    professor in the Department of Computer Science and Department of Mathematics,
    National University of Singapore. His research interests are in
    the areas of information theory, machine learning, signal processing, and
    high-dimensional statistics. He received the Singapore National Research Foundation (NRF) fellowship, and the NUS Early Career Research Award.
\end{IEEEbiographynophoto}

\begin{IEEEbiographynophoto}{Albert Guill\'en i F\`abregas}(S'01--M'05--SM'09) received the Telecommunication Engineering degree and the Electronics Engineering degree from
the Universitat Polit\`ecnica de Catalunya, and the Politecnico di Torino,
Torino, Italy, respectively, in 1999, and the Ph.D. degree in Communication
Systems from Ecole Polytechnique F\'ed\'erale de Lausanne (EPFL), Lausanne,
Switzerland, in 2004.

Since 2011 he has been an ICREA Research Professor at Universitat
Pompeu Fabra. He is also an Adjunct Researcher at the University of
Cambridge. He has held appointments at the New Jersey Institute of Technology, Telecom Italia, European Space Agency (ESA), Institut Eur\'ecom,
University of South Australia, University of Cambridge, as well as visiting
appointments at EPFL, \'Ecole Nationale des T\'el\'ecommunications (Paris),
Universitat Pompeu Fabra, University of South Australia, Centrum Wiskunde
\& Informatica and Texas A\&M University in Qatar. His research interests are
in the areas of information theory, coding theory and communication theory.

Dr.  Guill\'en i F\`abregas is a member of the Young Academy of Europe,
received both Starting and Consolidator Grants from the European Research
Council, the Young Authors Award of the 2004 European Signal Processing
Conference (EUSIPCO), the 2004 Nokia Best Doctoral Thesis Award from
the Spanish Institution of Telecommunications Engineers, and a pre-doctoral
Research Fellowship of the Spanish Ministry of Education to join ESA.
He is an Associate Editor of the {\sc IEEE Transactions on Information Theory}, an Editor of the Foundations and Trends in Communications and
Information Theory and was an Editor of the {\sc IEEE Transactions on Wireless Communications}.
\end{IEEEbiographynophoto}

\end{document}